\newtheorem{theorem}{Theorem}
\newtheorem{definition}{Definition}
\newtheorem{remark}{Remark}
\newtheorem{example}{Example}
\newtheorem{lemma}{Lemma}[]
\def\blfootnote{\xdef\@thefnmark{}\@footnotetext}
\newcommand{\delf}[1]{\ensuremath{\mathds{1}}}
\newcommand{\indic}[1]{\ensuremath{\mathds{1}}}
\newcommand{\abs}[1]{\ensuremath{\left|#1\right|}}   
\newcommand{\intseq}[2]{\ensuremath{\llbracket{#1}{\,:\,}{#2}\rrbracket}}
\newcommand{\blr}[2]{\ensuremath{\llbracket{#1}{\,:\,}{#2}\rrbracket}}
\newcounter{mytempeqcounter}
\newcommand{\Figwidth}{\columnwidth}%
\newcommand{\Figwidth}{4.5in}%
\newcommand{\calA}{\mathcal{A}}
\newcommand{\bbA}{\mathbb{A}}
\newcommand{\calB}{\mathcal{B}}
\newcommand{\bbB}{\mathbb{B}}
\newcommand{\calC}{\mathcal{C}}
\newcommand{\calE}{\mathcal{E}}
\newcommand{\bbE}{\mathbb{E}}
\newcommand{\calI}{\mathcal{I}}
\newcommand{\calJ}{\mathcal{J}}
\newcommand{\calK}{\mathcal{K}}
\newcommand{\calL}{\mathcal{L}}
\newcommand{\calM}{\mathcal{M}}
\newcommand{\calN}{\mathcal{N}}
\newcommand{\bbN}{\mathbb{N}}
\newcommand{\bbP}{\mathbb{P}}
\newcommand{\calR}{\mathcal{R}}
\newcommand{\bbR}{\mathbb{R}}
\newcommand{\calS}{\mathcal{S}}
\newcommand{\calT}{\mathcal{T}}
\newcommand{\calU}{\mathcal{U}}
\newcommand{\calV}{\mathcal{V}}
\newcommand{\calX}{\mathcal{X}}
\newcommand{\calY}{\mathcal{Y}}
\newcommand{\MI}{I}
\newcommand{\ent}{H}
\newcommand{\FI}{\mathbb{J}}
\newcommand{\dent}{\mathds{h}}
\DeclareMathOperator*{\argmin}{argmin}
\DeclareMathOperator*{\argmax}{argmax}
\DeclareMathOperator*{\diag}{diag}
\def\on{{\otimes n}}
\DeclareMathOperator{\tr}{tr}
\acrodef{ACDIS}[ACDIS]{Adaptive Communication Decision and Information Systems}
\acrodef{AEP}{Asymptotic Equipartition Property}
\acrodef{AoA}{Angle of Arrival}
\acrodef{AWGN}{Additive White Gaussian Noise}
\acrodef{AVC}[AVC]{Arbitrarily Varying Channel}
\acrodef{BER}{Bit-Error-Rate}
\acrodef{BEC}{Binary Erasure Channel}
\acrodef{BPSK}{Binary Phase-Shift Keying}
\acrodef{BSC}{Binary Symmetric Channel}
\acrodef{BICM}[BICM]{Bit-Interleaved Coded-Modulation}
\acrodef{CDF}[CDF]{Cumulative Distribution Function}
\acrodef{CGF}[CGF]{Cumulant Generating Function}
\acrodef{CLT}[CLT]{Central Limit Theorem}
\acrodef{CSI}[CSI]{Channel State Information}
\acrodef{DMC}[DMC]{Discrete Memoryless Channel}
\acrodef{DMS}[DMS]{Discrete Memoryless Source}
\acrodef{ERM}[ERM]{Empirical Risk Minimization}
\acrodef{FER}[FER]{Frame Error Rate}
\acrodef{ICA}[ICA]{Independent Component Analysis}
\acrodef{iid}[i.i.d.]{independent and identically distributed}
\acrodef{IoT}[IoT]{Internet of Things}
\acrodef{KKT}[KKT]{Karush-Kuhn Tucker}
\acrodef{LASSO}[LASSO]{Least Absolute Shrinkage and Selection Operator}
\acrodef{LPD}[LPD]{Low Probability of Detection}
\acrodef{LDPC}[LDPC]{Low-Density Parity-Check}
\acrodef{LLMS}[LLMS]{Linear Least Mean Square}
\acrodef{LMS}[LMS]{Least Mean Square}
\acrodef{MAC}[MAC]{multiple-access channel}
\acrodef{MGF}[MGF]{Moment Generating Function}
\acrodef{MLC}[MLC]{Multi-Level Coding}
\acrodef{MLE}[MLE]{Maximum Likelihood Estimate}
\acrodef{MIMO}[MIMO]{Multiple-Input Multiple-Output}
\acrodef{MISO}{Multiple-Input Single-Output}
\acrodef{MSD}[MSD]{Multi-Stage Decoding}
\acrodef{MMSE}[MMSE]{Minimum Mean-Square Error}
\acrodef{PAC}[PAC]{Probably Approximately Correct}
\acrodef{PCA}[PCA]{Principal Component Analysis}
\acrodef{PDF}[PDF]{Probability Density Function}
\acrodef{PMF}[PMF]{Probability Mass Function}
\acrodef{PPM}[PPM]{Pulse Position Modulation}
\acrodef{PSD}{Power Spectral Density}
\acrodef{PSK}{Phase Shift Keying}
\acrodef{QKD}{Quantum Key Distribution}
\acrodef{ROC}{Receiver Operating Characteristic}
\acrodef{CVQKD}{Continuous-Variable \ac{QKD}}
\acrodef{QPSK}{Quadrature Phase-Shift Keying}
\acrodef{RV}{random variable}
\acrodef{SIMO}{Single-Input Multiple-Output}
\acrodef{SNR}{Signal-to-Noise Ratio}
\acrodef{SVM}[SVM]{Support Vector Machine}
\acrodef{TPCP}{Trace-Preserving Completely-Positive}
\acrodef{wrt}[w.r.t.]{with respect to}
\acrodef{WSS}{Wide Sense Stationary}
\acrodef{RHS}{Right Hand Side}
\acrodef{LHS}{Left Hand Side}
\begin{document}

\title{Secure Source Coding Resilient Against Compromised Users via an Access Structure}

\author{
\IEEEauthorblockN{Hassan ZivariFard and R\'{e}mi A. Chou}\\
\thanks{H.~ZivariFard is with the Department of Electrical Engineering, Columbia University, New York, NY 10027. R.~Chou is with the Department of Computer Science and Engineering, The University of Texas at Arlington, Arlington, TX 76019. This work is supported in part by NSF grant CCF-2425371. E-mails: hz2863@columbia.edu and remi.chou@uta.edu. Part of this work is presented at the 2022 IEEE Information Theory Workshop \cite{Secure_SC22}. Contact hz2863@columbia.edu and remi.chou@uta.edu for further questions about this work.}
}
\maketitle
\date{}

\begin{abstract}
Consider a source and multiple users who observe the \ac{iid} copies of correlated Gaussian random variables. The source wishes to compress its observations and store the result in a public database such that (i) authorized sets of users are able to reconstruct the source with a certain distortion level, and (ii) information leakage to non-authorized sets of colluding users is minimized. In other words, the recovery of the source is restricted to a predefined access structure. The main result of this paper is a closed-form characterization of the fundamental trade-off between the source coding rate and the information leakage rate. As an example, threshold access structures are studied, i.e., the case where any set of at least $t$ users is able to reconstruct the source with some predefined distortion level and the information leakage at any set of users with a size smaller than $t$ is~minimized.
\end{abstract}

\section{Introduction}
\label{sec:Intro}
A solution to the storage of private data that is resilient to compromised users is secure distributed storage via traditional cryptographic solutions such as secret sharing \cite{Secret_Sharing_1,Secret_Sharing_2}. Specifically, a solution based on secret sharing consists in encoding the private data and distributing parts of the encoded data among multiple users, via individual secure channels, such that any $t$ users that pool their information together can reconstruct the private data, while any $z(<t)$ colluding users cannot learn any information about the private data. The set of all sets of users capable of reconstructing the private data is referred to as the access structure. For instance,  the users could represent servers. 

\subsection{Problem overview}
In this paper, we aim to propose a secure distributed data storage strategy that solely relies on a public database and accounts for side information at the users by considering three main modifications of the secret sharing solution described above. First, we do not assume that  secure channels are available to transmit the encoded private data to the users, as secure channels come with a cost in practice, instead, we solely rely on the availability of a public database. Second, we consider that the users have side information about the private data. While this consideration is not relevant in the original secret sharing problem where the secret is an arbitrary sequence of symbols and does not represent information, it becomes relevant in a data storage context. Not accounting for the fact that the users can have side information raises the following two challenges that cannot be addressed with results for  traditional secret sharing: (i) it leads to overestimating the security guarantees of the protocol, and (ii) it leads to inefficiency in terms of data storage size. Third, in our proposed setting, we relax the lossless reconstruction constraint of traditional secret~sharing to a lossy reconstruction constraint~\cite{Berger71}. 

Two distinct bodies of work on secure data storage are related to our model. The first one is secret sharing, which specifically addresses the presence of access structures -- we refer to \cite{beimel2011secret} for a comprehensive literature review. The second one is secure source coding~\cite{Prabhakaran07,GunduzITW08,GunduzISIT08,VillardPianta,Tandon13,Chia13,Kittichokechai16,EkremUlukus13_Lossy}, which mainly addresses the presence of side information at the users, but in the absence of access structures. By contrast, in this paper, we propose to simultaneously address the presence of an access structure \emph{and} side information at the users within a single framework. 
Specifically, we consider a source and multiple users who observe the \ac{iid} copies of correlated Gaussian random variables. The source wants to compress its observations and store the result in a public database such that (i) only pre-defined sets of authorized users can reconstruct, up to a prescribed distortion level,  the source by pooling all their available information, and (ii) information leakage about the source to any other sets of colluding users is minimized. The main result of this paper is a closed-form characterization of the fundamental trade-off between source coding rate and information leakage rate. Our result indicates that if the source is more correlated, in a sense that we make precise in the sequel, with the side information of the authorized sets of users than with the side information of any unauthorized set of users, then the optimal information leakage rate grows linearly with the optimal source coding rate. On the other hand, if this is not the case, the optimal information leakage rate grows non-linearly with the optimal source coding rate. Additionally, for threshold access structures, i.e., when a fixed number of users, denoted by $t$, are needed to reconstruct the source (independently of the specific identities of those users), we show that the capacity region is, in general, not a monotonic function of the threshold~$t$.

\subsection{Novelties and main challenges}
Next, we discuss the novelties and main challenges of the main result of this paper, which is a characterization of the optimal rate-leakage region for the problem introduced in the previous section. We first describe the main challenges of our converse proof.
\begin{itemize}
    \item The side information of each authorized or unauthorized set of users is a vector Gaussian random variable, and each component of this vector accounts for the side information of one user of this set. In our study, we use sufficient statistics \cite[Section~2.9]{Cover_Book} to convert this vector Gaussian side information to a scalar random variable and facilitate the analysis of our setting. 
    For the converse proof, this conversion allows us to reduce the problem to two cases. A first case (respectively second case), in which the source is more (respectively less) correlated, in a sense that we make precise in the sequel, with the side information of any set of authorized users than with the side information of any unauthorized set of users.
    \item Another key step in the proof of our converse is the proof of the sufficiency of a single auxiliary random variable in the outer region that we derive, to achieve minimum information leakage at the unauthorized users for each of the two cases discussed above.
    \item A particularly challenging aspect of our setting is the compound structure of the problem, which arises as a consequence of having multiple authorized and multiple unauthorized sets of users. Specifically, in our achievability region, it leads to, first, an optimization over the distribution of the involved auxiliary random variables and, then, to an optimization over the sets of authorized users and unauthorized users, whereas the order of these two optimizations are reversed in our outer region. In general, such a mismatch between the inner and outer regions leads to a gap between the achievability and the converse, e.g., as in \cite{liang2009compound} for compound wiretap channels. In our setting, we obtain a capacity result by proving the existence of a saddle point, which proves that the order of the optimizations is irrelevant. 
    \end{itemize}
We now discuss the main challenges of our achievability proof.
    \begin{itemize}
    \item The achievability is first proved for discrete random variables and then extended to continuous random variables through quantization. Note that one cannot consider a specific quantization strategy at the unauthorized users to ensure the leakage requirement in an information-theoretic manner; therefore, a key step in this extension is to prove that the leakage constraint holds for continuous random variables.
    \item For the achievability proof, the use of sufficient statistics also facilitates the evaluation of the achievable rate region, in particular, the computation of the conditional covariance of vector Gaussian sources.
    \end{itemize}

\subsection{Related works}
Of particular relevance to this paper, \cite{Prabhakaran07} have established the first characterization of the rate at which an encoder may compress a source such that an authorized user can recover the source in a lossless manner while guaranteeing a minimum information leakage at an unauthorized user who observes the encoded source. Other variations of this problem are studied in \cite{GunduzITW08,GunduzISIT08,VillardPianta,Tandon13,Chia13,Kittichokechai16}. This problem is generalized to a scenario, in which the authorized user may recover the compressed source with some predefined distortion in \cite{VillardPianta}. Specifically, \cite{VillardPianta} characterized the optimal tradeoff between the rate, the desired distortion, and the information leakage when both the authorized and unauthorized users observe different \ac{iid} side information sequences that are correlated with the compressed source. The secure lossy compression of a vector Gaussian source when both the authorized user and the unauthorized user have vector Gaussian side information have been studied in \cite{EkremUlukus13_Lossy}, which derives inner and outer regions on the optimal trade-off between the rate, the desired distortion, and the information leakage. \cite{AllertonCuffSong,SchielerCuff} study this problem in the case where the fidelity of the communication to the authorized user is measured by a distortion metric and the secrecy performance of the system is also evaluated under a distortion metric, a line of study that was first initiated in \cite{Yamamoto83,Yamamoto88}. Secure source coding when there is a shared secret key between the legitimate terminals has also been studied in \cite{Yamamoto94,Yamamoto97,Merhav06,Merhav08,SatpathyCuff15}. Note that all these previous works do not consider access structures and deal with a source coding problem. The problem studied in this paper subsumes the secure lossy compression of a scalar Gaussian source when both the authorized user and the unauthorized user have scalar Gaussian side information as well as the secure lossy compression of a scalar Gaussian source when both the authorized user and the unauthorized user have vector Gaussian side information. 

In \cite[Section~V.A]{VillardPianta} and \cite[Example~1]{EkremUlukus13_Lossy}, the authors study a single-user and single-eavesdropper Gaussian secure source coding problem, which is a special case of the problem studied in our paper. Indeed, the problem studied in this paper involves multiple sets of authorized users and multiple sets of unauthorized users (eavesdroppers). Specifically, in our setting, we upper-bound the information leakage over all possible sets of unauthorized users, and for the reconstruction of the source we require that any set of authorized users can recover the source with some fixed distortion level. As discussed above, this creates additional challenges compared to a single authorized and single unauthorized user. Note also that the single-user single-eavesdropper case had not been fully solved, as \cite[Example~1]{EkremUlukus13_Lossy} establishes the capacity region when the compression rate is infinity, and \cite[Section~V.A]{VillardPianta} establishes the capacity when the side information at the eavesdropper is a degraded version of the legitimate receiver's side information. We note that the authors in \cite[Remark~8]{VillardPianta} conjecture that their achievability is optimal in the non-degraded case, however they do not provide a converse proof.

In the context of secret sharing, another related work is \cite{VidhiRemi22}, where a function of a Gaussian source must be reconstructed in a lossless manner by authorized sets of users and must be kept secret from unauthorized sets of users, who all own side information about the source. Finally, note that, in our results, the length of the compressed data stored in the public database and the source observation at the users scale linearly with the number of source observations $n$ and does not depend on the number of participants but only on the access structure. Specifically, the compressed data stored in the public database must allow the reconstruction of the source for the group of authorized participants that has the least amount of information about the source in their side information. This contrasts with traditional problems that involve access structures, e.g., secret-sharing model \cite{Secret_Sharing_2}, for which the best known coding schemes require the share size to scale exponentially with the number of participants for some access structures \cite{beimel2011secret}.

\subsection{Paper Organization}
The remainder of the paper is organized as follows. We define the notation in Section~\ref{sec:perliminari} and formally define the problem in Section~\ref{sec:Prob_Defi}. We present our main results in Section~\ref{sec:Main_Result} and provide the proofs in Section~\ref{sec:Converse}. We provide concluding remarks in Section~\ref{sec:Conclusion}.

\section{Notation}
\label{sec:perliminari}
Let $\bbN^+$ be the set of positive natural numbers, $\bbR$ be the set of real numbers, and define $\bbR_+\triangleq\{x\in\bbR|x\geq 0\}$ and $\bbR_{++}\triangleq\mathbb{R}_+\backslash\{0\}$. For any $a, b\in\bbR$, define $\intseq{a}{b}\triangleq [\lfloor a\rfloor, \lceil b\rceil]\cap\bbN^+$ and $[a]^+\triangleq\max\{0,a\}$. Random variables are denoted by capital letters and their realizations by lower case letters. Vectors are denoted by boldface letters, e.g., $\mathbf{X}$ denotes a random vector and $\mathbf{x}$ denotes a realization of $\mathbf{X}$. $\bbE_{X} (\cdot)$ is the expectation with respect to the random variable $X$, for brevity, we sometimes omit the subscripts in the expectation if it is clear from the context. 
The set of $\epsilon-$strongly jointly typical sequences of length $n$, according to $P_{XY}$, is denoted by $\calT_{\epsilon}^{(n)}({P_{XY}})$~\cite{ElGamalKim}. Superscripts denote the dimension of a vector, e.g., $X^n$. $X_i^j$ denotes $(X_i,X_{i+1},\dots,X_j)$, and $X_{\sim i}^{n}$ denotes the vector $X^n$ except $X_i$. The cardinality of a set is denoted by $|\cdot|$. The entropy of the discrete random variable $X$ is denoted by $\ent(X)$, the differential entropy of the random variable $X$ is denoted by $\dent(X)$, and the mutual information between the random variables $X$ and $Y$ is denoted by $\MI(X;Y)$. The support of a probability distribution $P$ is denoted by supp$(P)$. The $n$-fold product distribution constructed from the same distribution $P$ is denoted by $P^{\otimes n}$. Throughout the paper, $\log$ denotes the base $2$~logarithm.

\begin{figure*}
\centering
\includegraphics[width=6.2in]{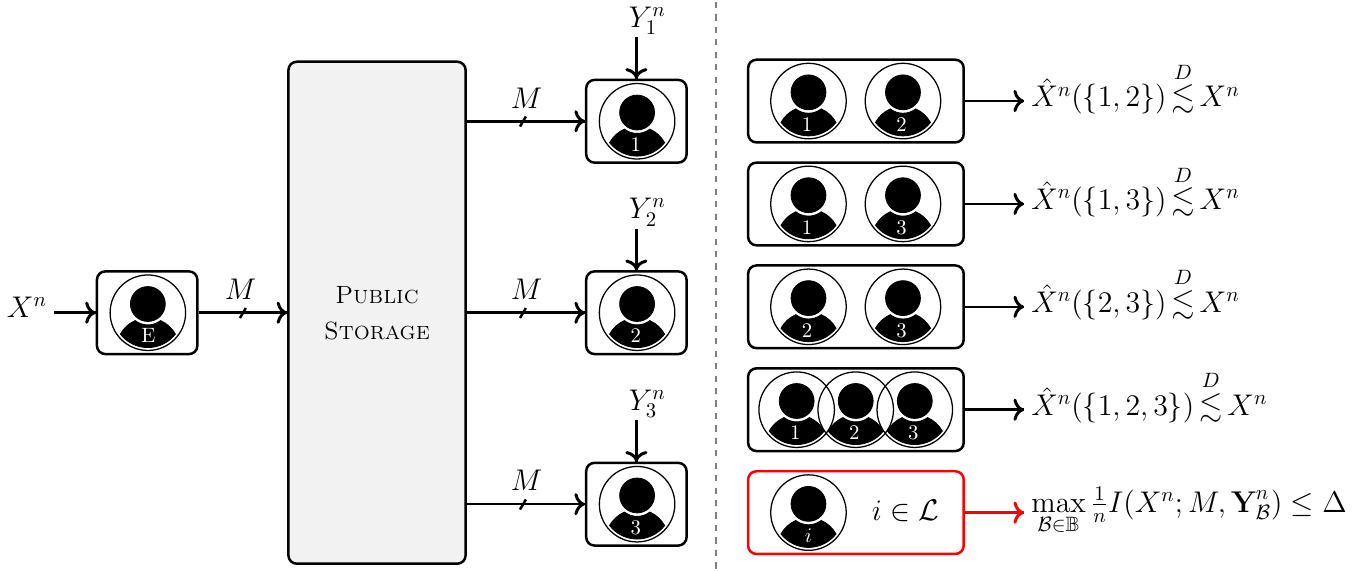}
\caption{Secure source coding with three users, i.e., $\calL=\{1,2,3\}$, when any single user must not learn more than $n\Delta$ bits of information about the source $X^n$, i.e., we set $\bbA = \{\{1,2\},\{1,3\},\{2,3\},\{1,2,3\}\}$, and
$\bbB = \{\{1\},\{2\},\{3\}\}$. $\hat{X}^n(\{i,j\})\mathop\lesssim \limits^{D}X^n$, for $i,j\in\{1,2,3\}$ and $i\ne j$, means that the distortion between the reconstructed source by the users $i$ and $j$ together and the source sequence $X^n$ must be less than $D$.}
\label{fig:System_Model}
\vspace{-0.5cm}
\end{figure*}
\section{Problem Statement}
\label{sec:Prob_Defi}
Consider a memoryless source $\big(\calX\times\boldsymbol{\calY}_\calL, P_{X\mathbf{Y}_\calL}\big)$, where $\calL\triangleq\intseq{1}{L}$ and $\mathbf{Y}_\calL\triangleq(Y_\ell)_{\ell\in\calL}$, that consists of $L+1$ alphabets $\calX\times\boldsymbol{\calY}_\calL$ and a joint distribution $P_{X\mathbf{Y}_\calL}$ over $\calX\times\boldsymbol{\calY}_\calL$. Let $\bbA$ be a set of subsets of $\calL$ such that for any $\calS\subseteq\calL$, if $\calS$ has a subset that belongs to $\bbA$, then $\calS\in\bbA$, i.e., $\bbA$ has a monotone access structure \cite{Monotone_Property}. 
Then, define $\bbB\triangleq2^\calL\backslash\bbA$ to be the set of all colluding subsets of users for which the information leakage about the source $X^n$ must be minimized (see~Fig.~\ref{fig:System_Model}). Henceforth, for any $\calA\in\bbA$ and for any $\calB\in\bbB$, $\mathbf{Y}_\calA$ and $\mathbf{Y}_\calB$ denote $(Y_\ell)_{\ell\in\calA}$ and $(Y_\ell)_{\ell\in\calB}$, respectively. 
Let $d:\calX\times\boldsymbol{\calY}_\calA\to\intseq{0}{d_{\max}}$ be a distortion measure such that $0\le d_{\max}<\infty$.
\begin{definition}\label{eq:Defi_Source_Code}
A $(2^{nR},n)$ source code for the memoryless source $\big(\calX\times\boldsymbol{\calY}_\calL, p_{X\mathbf{Y}_\calL}\big)$ consists of
\begin{itemize}
    \item an encoding function $f:x^n\mapsto m$, which assigns an index $m\in\intseq{1}{2^{nR}}$ to each $x^n\in\calX^n$. As depicted in Fig.~\ref{fig:System_Model}, $M$ is stored in a public database;
    \item decoding functions $\hat{x}_\calA:m\times \mathbf{y}_\calA^n\mapsto\hat{x}^n(\calA)\cup\{\mathfrak{e}\}$, where $\calA\in\bbA$, which assigns an estimate $\hat{x}^n(\calA)\in\calX^n$ or an error $\mathfrak{e}$ to each $m\in\intseq{1}{2^{nR}}$ and $\mathbf{y}_\calA^n\in\boldsymbol{\calY}_\calA^n$.
\end{itemize}
\end{definition}
\begin{definition}
\label{eq:Defi_Achi_Rate}
Let $D>0$. A pair $(R,\Delta)\in\bbR_+^2$ is achievable if there exists a sequence of $(2^{nR},n)$ source codes, such that,
\begin{subequations}\label{eq:Achi_Defi}
\begin{align}
    \max\limits_{\calA\in\bbA}\limsup\limits_{n\rightarrow\infty}\bbE\big[d\big(X^n,\hat{X}^n(\calA)\big)\big]&\le D,\label{eq:Error_Prob}\\
    \max\limits_{\calB\in\bbB}\lim\limits_{n\rightarrow\infty}\frac{1}{n}\MI(X^n;M,\mathbf{Y}_\calB^n)&\le\Delta,\label{eq:Equiv}
\end{align}
where the distortion between the sequences $x^n$ and $\hat{x}^n(\calA)$ is defined by
\begin{align}
    d\big(x^n,\hat{x}^n(\calA)\big)\triangleq\frac{1}{n}\sum\limits_{i=1}^nd\big(x_i,\hat{x}_i(\calA)\big).\label{eq:Disto_Pro_1}
\end{align}
\end{subequations}
The set of all achievable pairs is referred to as the rate-leakage~region and denoted by $\calR(D,\bbA)$.
\end{definition}

\eqref{eq:Error_Prob} means that any set of authorized users $\calA\in\bbA$ can reconstruct the source $X^n$ within the distortion $D$ from the observation $\mathbf{Y}_\calA^n$ and the public data $M$, and \eqref{eq:Equiv} means that any colluding set of unauthorized users $\calB\in\bbB$ cannot learn more than $n\Delta$ bits about the source $X^n$ from the observation $\mathbf{Y}_\calB^n$ and $M$. 
In this paper, we consider $P_{X\mathbf{Y}_{\calL}}$ the joint distribution of zero-mean jointly Gaussian random variables with a non-singular covariance matrix. We denote the variance of $X$ by $\sigma_X^2$. Without loss of generality, for every $\calA\in\bbA$ and $\calB\in\bbB$, by \cite[Theorem~3.5.2]{Gallager_Stochastic_Book}, one can write
\begin{subequations}\label{eq:Y_X_Relation_General}
    \begin{align}
    \mathbf{Y}_\calA=\mathbf{h}_\calA X+\mathbf{N}_\calA,\label{eq:Y_X_Relation_General_1}\\
    \mathbf{Y}_\calB=\mathbf{h}_\calB X+\mathbf{N}_\calB,\label{eq:Y_X_Relation_General_2}
\end{align}
\end{subequations}where $\mathbf{h}_\calA\in\bbR_{++}^{\abs{\calA}}$ and $\mathbf{h}_\calB\in\bbR_{++}^{\abs{\calB}}$ and $\mathbf{N}_\calA$ and $\mathbf{N}_\calB$ are zero-mean Gaussian random vectors with identity covariance matrix and independent of $X$. \eqref{eq:Y_X_Relation_General} is proved in Appendix~\ref{proof:System_Model}. Then, still without loss of generality, by normalizing \eqref{eq:Y_X_Relation_General}, one can consider the following source model
\begin{subequations}\label{eq:System_Model_2}
\begin{align}
\mathbf{Y}_\calA&=\mathbf{1}_\calA X+\mathbf{N}_\calA,\quad\forall\calA\in\bbA\label{eq:System_Model_21}\\
\mathbf{Y}_\calB&=\mathbf{1}_\calB X+\mathbf{N}_\calB,\,\quad\forall\calB\in\bbB\label{eq:System_Model_22}
\end{align}
\end{subequations}where $\mathbf{N}_\calA$ and $\mathbf{N}_\calB$ are zero-mean Gaussian random vectors with covariance matrices $\boldsymbol{\Sigma}_\calA\succ 0$ and $\boldsymbol{\Sigma}_\calB\succ 0$, respectively, that are independent of $X$ and $\mathbf{1}_\calA$ is the all-ones vector with size $\abs{\calA}$. Without loss of generality, we can also consider $\mathbf{N}_\calA$ and $\mathbf{N}_\calB$ independent, for $\calA\in\bbA$ and $\calB\in\bbB$, since \eqref{eq:Achi_Defi} only depends on the marginal distributions $(P_{X\mathbf{Y}_\calA})_{\calA\in\bbA}$ and $(P_{X\mathbf{Y}_\calB})_{\calB\in\bbB}$. 
In this paper, the distortion of the reconstructed sequence $\big(\hat{X}_i(\calA)\big)_{i=1}^n$ in Definition~\ref{eq:Defi_Achi_Rate} is measured by the mean square error as,
\begin{align}
    \frac{1}{n}\bbE\big[\big(X_i-\hat{X}_i(\calA)\big)^2\big]\le D.\label{eq:Distortion_Cons}
\end{align}

Since the minimizer of the mean square error is the \ac{MMSE} estimator, which is given by the conditional mean, we assume that the authorized users choose this optimal estimator, i.e., the authorized users in $\calA\in\bbA$ form $\big(\hat{X}_i(\calA)\big)_{i=1}^n$ as $\hat{X}_i(\calA)\triangleq\bbE\big[X_i|\mathbf{Y}_\calA^n,f(X^n)\big]$.

\section{Main Results}
\label{sec:Main_Result}
Henceforth, for some $\calA\in\bbA$, we assume $0\le D\le\sigma_{X|\mathbf{Y}_\calA}^2$, where $\sigma_{X|\mathbf{Y}_\calA}^2$ is the conditional variance of $X$ given $\mathbf{Y}_\calA$, $\sigma_{X|\mathbf{Y}_\calA}^2=\bbE\big[\big(X-\bbE[X|\mathbf{Y}_\calA]\big)^2\big|\mathbf{Y}_\calA\big]$. If $D\ge\sigma_{X|\calY_\calA}^2$ for all $\calA$ in $\bbA$, then $\calR(D,\bbA)=\{(R,\Delta):R\ge0,\Delta\ge\max_{\calB\in\bbB}\{I(X;Y_\calB)\}\}$, 
because the achievability scheme that consists in setting $M\triangleq\emptyset$ implies 
\begin{align}
    \frac{1}{n}\bbE\big[\big(X_i-\hat{X}_i(\calA)\big)^2\big]=\sigma_{X|\mathbf{Y}_\calA}^2.\nonumber
\end{align}
\subsection{Results for General Access Structures}
The main result of this paper is a closed-form expression for the optimal trade-off between the compression rate and the leakage rate of the source, which is provided in the following theorem.
\begin{theorem}\label{thm:Capacity}
Let $D>0$. For any access structure $\bbA$,
\begin{align}\nonumber
 &\calR(D,\bbA)=\nonumber\\
 &\left.\begin{cases}(R,\Delta):\\
R\ge\left[\frac{1}{2}\log\frac{\sigma_X^2}{D}-\frac{1}{2}\log\left(1+\frac{\sigma_X^2}{\tr\big(\boldsymbol{\Sigma}_{\calA^\star}^{-1}\big)^{-1}}\right)\right]^+\\
\Delta\ge \left\{ \begin{array}{l}
g_1\big(\calA^\star,\calB^\star\big)\qquad\text{when}\,\,\tr\big(\boldsymbol{\Sigma}_{\calA^\star}^{-1}\big) \ge\tr\big(\boldsymbol{\Sigma}_{\calB^\star}^{-1}\big)\\
g_2\big(\calA^\star,\calB^\star\big)\qquad\text{when}\,\,\tr\big(\boldsymbol{\Sigma}_{\calA^\star}^{-1}\big)\le\tr\big(\boldsymbol{\Sigma}_{\calB^\star}^{-1}\big)
\end{array} \right.
\end{cases}\hspace{-0.5cm}\right\},
\end{align}where
\begin{align}
    g_1\big(\calA^\star,\calB^\star\big)&\triangleq
\left[\frac{1}{2}\log\frac{\sigma_X^2}{D}-\frac{1}{2}\log\left(1+\frac{\sigma_X^2}{\tr\big(\boldsymbol{\Sigma}_{\calA^\star}^{-1}\big)^{-1}}\right)\right]^++\nonumber\\
&\quad\frac{1}{2}\log\left(1+\frac{\sigma_X^2}{\tr\big(\boldsymbol{\Sigma}_{\calB^\star}^{-1}\big)^{-1}}\right),\nonumber\\
g_2\big(\calA^\star,\calB^\star\big)&\triangleq\frac{1}{2}\log\left(\left[\frac{\sigma_X^2}{D}-\left(1+\frac{\sigma_X^2}{\tr\big(\boldsymbol{\Sigma}_{\calA^\star}^{-1}\big)^{-1}}\right)\right]^++\right.\nonumber\\
&\quad\left.1+\frac{\sigma_X^2}{\tr\big(\boldsymbol{\Sigma}_{\calB^\star}^{-1}\big)^{-1}}\right),\nonumber
\end{align}
$\calA^\star\in \argmin\limits_{\calA\in\bbA}\{\tr\left(\boldsymbol{\Sigma}_\calA^{-1}\right)\}$, and $\calB^\star\in \argmax\limits_{\calB\in\bbB}\{\tr\left(\boldsymbol{\Sigma}_\calB^{-1}\right)\}$.
\end{theorem}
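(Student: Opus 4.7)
\textbf{Proof plan for Theorem~\ref{thm:Capacity}.}

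The plan is to reduce, via sufficient statistics, each vector side-information $\mathbf{Y}_\calA$ and $\mathbf{Y}_\calB$ to a scalar Gaussian observation of $X$, and then to prove matching inner and outer bounds on $\calR(D,\bbA)$ by analyzing separately the two regimes determined by comparing $\tr(\boldsymbol{\Sigma}_{\calA^\star}^{-1})$ and $\tr(\boldsymbol{\Sigma}_{\calB^\star}^{-1})$. For the reduction step, I would form, for each $\calA\in\bbA$, the scalar statistic
\[
\widetilde{Y}_\calA\triangleq\frac{\mathbf{1}_\calA^{T}\boldsymbol{\Sigma}_\calA^{-1}\mathbf{Y}_\calA}{\tr(\boldsymbol{\Sigma}_\calA^{-1})},
\]
which is a sufficient statistic for $X$ given $\mathbf{Y}_\calA$ and satisfies $\widetilde{Y}_\calA=X+\widetilde{N}_\calA$ with $\widetilde{N}_\calA\sim\calN(0,1/\tr(\boldsymbol{\Sigma}_\calA^{-1}))$; I would do likewise for $\calB\in\bbB$. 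This collapses the problem to a scalar Gaussian secure source coding problem with a family of parallel authorized and unauthorized links indexed by $\calA,\calB$, in which the effective noise variance of the authorized (resp.\ unauthorized) side information is $1/\tr(\boldsymbol{\Sigma}_\calA^{-1})$ (resp.\ $1/\tr(\boldsymbol{\Sigma}_\calB^{-1})$), while preserving the Markov structure $\widetilde{Y}_\calA-X-\widetilde{Y}_\calB$ component-wise.

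For the achievability, I would first prove a Wyner--Ziv style inner bound in the discrete case: generate an auxiliary codebook $U^n$ according to some test channel $P_{U|X}$, bin it at rate $\max_{\calA\in\bbA}[I(U;X)-I(U;\mathbf{Y}_\calA)]$ so that every authorized set can decode $U^n$, then form the MMSE estimate of $X^n$ from $(U^n,\mathbf{Y}_\calA^n)$. The per-letter leakage at any $\calB\in\bbB$ is then $I(X;U)-I(\mathbf{Y}_\calB;U)+I(X;\mathbf{Y}_\calB)$, which, together with the rate, yields an inner region of the form $\bigcup_{P_{U|X}}\bigcap_{\calA,\calB}\{\dots\}$. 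The standard discretization/quantization argument of \cite{VillardPianta} extends this to the Gaussian case, but one must verify that the leakage constraint, not only the distortion constraint, survives quantization; I would do so by an absolute-continuity/weak-convergence argument combined with the Markov structure. The region is then evaluated with the Gaussian test channel $U=X+V$, $V\sim\calN(0,\sigma_V^2)$ independent of everything else, and $\sigma_V^2$ is chosen to meet the distortion $D$ at the \emph{worst} authorized set $\calA^\star$; substituting into the expressions for $I(U;X),I(U;\widetilde{Y}_{\calA^\star}),I(U;\widetilde{Y}_{\calB^\star})$ using the scalar Gaussian formulas yields exactly $g_1$ or $g_2$ depending on which of $\tr(\boldsymbol{\Sigma}_{\calA^\star}^{-1})$, $\tr(\boldsymbol{\Sigma}_{\calB^\star}^{-1})$ is larger.

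For the converse, I would single-letterize $R\geq \frac{1}{n}I(X^n;M)$ using Fano/rate-distortion arguments conditioned on $\mathbf{Y}_\calA^n$ for every $\calA\in\bbA$, and write the leakage $\frac{1}{n}I(X^n;M,\mathbf{Y}_\calB^n)$ in a single-letter form involving an auxiliary $U_i\triangleq(M,\mathbf{Y}_{\calA,\sim i}^{n})$ together with a time-sharing variable. This yields an outer region of the form $\bigcap_{\calA,\calB}\bigcup_{P_{U|X}}\{\dots\}$, which has the opposite order of intersection/union relative to the inner region. The heart of the converse is thus a compound-channel saddle-point argument: I would show that the pair $(\calA^\star,\calB^\star)$ defined in the theorem is simultaneously worst in both directions, and that the Gaussian test channel $U=X+V$ is the extremizer in each inner optimization, which makes $(P_{U|X}^{\mathrm{Gauss}},(\calA^\star,\calB^\star))$ a saddle point and collapses the two regions. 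Case~1 ($\tr(\boldsymbol{\Sigma}_{\calA^\star}^{-1})\geq \tr(\boldsymbol{\Sigma}_{\calB^\star}^{-1})$), in which $\widetilde{Y}_{\calB^\star}$ is a degraded version of $\widetilde{Y}_{\calA^\star}$, is handled by a standard conditional-entropy-power inequality argument of Wyner--Ziv/Oohama type applied to $h(X|U,\widetilde{Y}_{\calA^\star})$ and $h(X|U,\widetilde{Y}_{\calB^\star})$; Case~2, in which $\widetilde{Y}_{\calA^\star}$ is a degraded version of $\widetilde{Y}_{\calB^\star}$, requires bounding the leakage differently and gives the log-sum form $g_2$.

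The main obstacle, and the step I expect to require the most care, is exactly this saddle-point/min-max swap between the encoder's optimization over $P_{U|X}$ and the adversary's optimization over $(\calA,\calB)$: proving that a single auxiliary $U$ simultaneously achieves the minimum leakage for every $\calB\in\bbB$ while meeting the distortion target for every $\calA\in\bbA$ is nontrivial, because, a priori, the optimal $U$ depends on which pair $(\calA,\calB)$ is considered. I expect to resolve this by exploiting the Gaussianity of $X$, the scalar reduction through sufficient statistics, and the monotonicity of the relevant mutual informations in $\sigma_V^2$, which together force the saddle to be attained at $(\calA^\star,\calB^\star)$ with a single Gaussian $U$; this is also what prevents the gap that arises generically in compound settings such as \cite{liang2009compound}.
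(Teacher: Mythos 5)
Your overall architecture (sufficient-statistic reduction to scalars, two regimes according to $\tr\big(\boldsymbol{\Sigma}_{\calA^\star}^{-1}\big)$ vs.\ $\tr\big(\boldsymbol{\Sigma}_{\calB^\star}^{-1}\big)$, Gaussian extremality, worst pair $(\calA^\star,\calB^\star)$) matches the paper, but there is a concrete gap in your achievability. The leakage you assign to your single-layer binned scheme, $\MI(X;U)-\MI(U;\mathbf{Y}_\calB)+\MI(X;\mathbf{Y}_\calB)=\MI(X;U,\mathbf{Y}_\calB)$, is the ``eavesdropper recovers the whole codeword'' bound. Since this quantity is decreasing in $\sigma^2_{X|U}$ and the distortion constraint caps $\sigma^2_{X|U}\le D/(1-\tr(\boldsymbol{\Sigma}_{\calA^\star}^{-1})D)$, its minimum over feasible Gaussian test channels is $\tfrac12\log\big(\tfrac{\sigma_X^2}{D}-\sigma_X^2\tr(\boldsymbol{\Sigma}_{\calA^\star}^{-1})+\sigma_X^2\tr(\boldsymbol{\Sigma}_{\calB^\star}^{-1})\big)$, i.e.\ the $g_2$ expression, \emph{for both regimes}; when $\tr\big(\boldsymbol{\Sigma}_{\calA^\star}^{-1}\big)>\tr\big(\boldsymbol{\Sigma}_{\calB^\star}^{-1}\big)$ and $D$ is small enough that encoding is needed, this is strictly larger than $g_1$ (the comparison reduces to $s(1+b)/(1+a)\le s-a+b$ for $s>1+a$, $a>b$, with strict inequality). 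So your scheme, as analyzed, does not reach the claimed boundary in the first regime. What is missing is the secrecy credit from binning: bound the leakage instead by $\MI(X;\mathbf{Y}_\calB)+\tfrac1n\ent(M)\approx \MI(U;X|\mathbf{Y}_{\calA^\star})+\MI(X;\mathbf{Y}_{\calB^\star})=g_1$. The paper sidesteps this by using the two-auxiliary superposition region of Villard--Piantanida, $\Delta\ge\MI(V;X)-\MI(V;\mathbf{Y}_\calA|U)+\MI(X;\mathbf{Y}_\calB|U)$, and choosing $U=\emptyset$ when authorized users have better side information and $U=V$ otherwise; either that, or you must prove \emph{both} leakage bounds for your one-layer scheme and take the minimum, with the choice switching between regimes.

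On the converse, your plan for the second regime (``bounding the leakage differently gives the log-sum form $g_2$'') omits the hardest step. There the quantity to minimize over feasible $V$ is, up to constants, $\dent(\tilde{Y}_{\calB}|V)-\dent(X|V)$ subject to $\sigma^2_{X|V,\tilde{Y}_{\calA}}\le D$, and a conditional EPI does not establish Gaussian optimality of $V$ for such an entropy \emph{difference} (it bounds it in the wrong direction). The paper needs an extremal-inequality/Fisher-information argument: the de Bruijn-type identity, the monotonicity $\FI(X+G_2|V)^{-1}-\sigma_2^2\ge\FI(X+G_1|V)^{-1}-\sigma_1^2$, an intermediate-value construction of a Gaussian $\bar{V}$ with the same cost, and a verification via the MMSE--Fisher identity that $\bar{V}$ remains feasible, together with the conversion $\sigma^2_{X|V,\tilde{Y}_\calA}\le D\Leftrightarrow\sigma^2_{X|V}\le\big(D^{-1}-\tr(\boldsymbol{\Sigma}_\calA^{-1})\big)^{-1}$. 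You would also need an explicit argument to remove the second auxiliary $U$ from the outer bound (the paper does this by Markov-chain manipulations, differently in the two regimes). By contrast, the saddle-point issue you flag as the main obstacle is comparatively benign: once everything is expressed through $\tr(\boldsymbol{\Sigma}_\calA^{-1})$ and $\tr(\boldsymbol{\Sigma}_\calB^{-1})$, all bounds are monotone in these traces, so the intersection over $(\calA,\calB)$ is attained at $(\calA^\star,\calB^\star)$, as you anticipated.
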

The converse of Theorem~\ref{thm:Capacity} is provided in Section~\ref{sec:Converse}. The achievability proof of Theorem~\ref{thm:Capacity} is similar to that of \cite[Theorem~3]{VillardPianta} and is provided in Appendix~\ref{sec:Achie_Proof}. 
\begin{remark}[Comparison with {\cite[Example~1]{EkremUlukus13_Lossy} and \cite[Section~V.A]{VillardPianta}}]
\label{rem:Simple_Case}
When there is only one authorized and one unauthorized user, the problem setup in \eqref{eq:Achi_Defi} with $\bbA \triangleq \{1\}$ and $\bbB\triangleq \{2\}$ reduces to the problem setup in \cite[Example~1]{EkremUlukus13_Lossy} and \cite[Section~V.A]{VillardPianta} and Theorem~\ref{thm:Capacity} yields the capacity region
\begin{align}\nonumber
 &\calR(D,\bbA)=\left.\begin{cases}(R,\Delta):\\
R\ge\left[\frac{1}{2}\log\frac{\sigma_X^2}{D}-\frac{1}{2}\log\left(1+\frac{\sigma_X^2}{\sigma_{1}^2}\right)\right]^+\\
\Delta\ge \left\{ \begin{array}{l}
g_1\qquad\text{when}\,\,\,\,\sigma_{1}^2 \le\sigma_{2}^2\\
g_2
\qquad\text{when}\,\,\,\,\sigma_{1}^2\ge\sigma_{2}^2
\end{array} \right.
\end{cases}\right\},
\end{align}where
\begin{align}
    &g_1\triangleq\left[\frac{1}{2}\log\frac{\sigma_X^2}{D}-\frac{1}{2}\log\left(1+\frac{\sigma_X^2}{\sigma_{1}^2}\right)\right]^++\frac{1}{2}\log\left(1+\frac{\sigma_X^2}{\sigma_{2}^2}\right),\nonumber\\
&g_2\triangleq\frac{1}{2}\log\left(\left[\frac{\sigma_X^2}{D}-\left(1+\frac{\sigma_X^2}{\sigma_{1}^2}\right)\right]^++1+\frac{\sigma_X^2}{\sigma_{2}^2}\right).\nonumber
\end{align}
\end{remark}
In the lower bound of the compression rate $R$ in Theorem~\ref{thm:Capacity}, the term $\frac{1}{2}\log\frac{\sigma_X^2}{D}$ is the source coding capacity in the absence of side information \cite[Theorem~3.6]{ElGamalKim}, and the term $\frac{1}{2}\log\left(1+\frac{\sigma_X^2}{\tr\big(\boldsymbol{\Sigma}_{\calA^\star}^{-1}\big)^{-1}}\right)$ is the gain provided by the side information at the authorized users. In the lower bound on the information leakage $\Delta$ in Theorem~\ref{thm:Capacity}, the term $\left(1+\frac{\sigma_X^2}{\tr\big(\boldsymbol{\Sigma}_{\calB^\star}^{-1}\big)^{-1}}\right)$ represents a penalty coming from the side information at the unauthorized users. When $\sigma_X^2=2$, $D=0.1$, and $\tr\big(\boldsymbol{\Sigma}_{\calB^\star}^{-1}\big)=3.5$, the leakage rate $\Delta^\star$ is depicted in Fig.~\ref{fig:General_Example} with respect to the storage rate $R^\star$, where $(R^\star,\Delta^\star)$ represents the corner points of the region $\calR(D,\bbA)$ characterized in Theorem~\ref{thm:Capacity}. As seen in Fig.~\ref{fig:General_Example}, the leakage does not grow linearly with the storage rate $R^\star$, when $\text{tr}\big(\boldsymbol{\Sigma}_{\calA^\star}^{-1}\big)\le\text{tr}\big(\boldsymbol{\Sigma}_{\calB^\star}^{-1}\big)$. Intuitively, in this regime, the storage rate $R^*$ decreases as $\text{tr}\big(\boldsymbol{\Sigma}_{\calA^\star}^{-1}\big)$ grows but, since the unauthorized users in $\calB^\star$ have a ``less noisy" side information about the source than the authorized users in $\calA^\star$ have, the information leakage $\Delta^*$ does not increase with the storage rate $R^\star$ as fast as it does when the authorized sets of users $\calA^\star$ have a ``less noisy" side information about the source than the authorized set of users $\calB^\star$. 
In Fig.~\ref{fig:General_Example}, the corner point $C_1=\left(\frac{1}{2}\log\frac{\sigma_X^2}{D},\frac{1}{2}\log\frac{\sigma_X^2}{D}+\frac{1}{2}\log\left(1+\frac{\sigma_X^2}{\text{tr}\big(\boldsymbol{\Sigma}_{\calB^\star}^{-1}\big)^{-1}}\right)\right)$ corresponds to the case in which the side information at the authorized set of users is not correlated with the source, i.e., $\text{tr}\big(\boldsymbol{\Sigma}_{\calA^\star}^{-1}\big)\to0$, and therefore the communication rate is maximal. On the other hand, the corner point $C_2=\left(0,\frac{1}{2}\log\left(1+\frac{\sigma_X^2}{\text{tr}\big(\boldsymbol{\Sigma}_{\calB^\star}^{-1}\big)^{-1}}\right)\right)$ corresponds to the case in which the distortion between the side information at the authorized sets of users and the source is less than $D$, meaning that the encoder does not need to generate $M$. 
Note that, in this case, from \eqref{eq:SXgivenYA}, $D\ge\sigma_{X|\mathbf{Y}_{\calA^\star}}^2$ translates~to, $\text{tr}\big(\boldsymbol{\Sigma}_{\calA^\star}^{-1}\big)\ge D^{-1}-\sigma_X^{-2}$.
\begin{figure}
\centering
\includegraphics[width=\Figwidth]{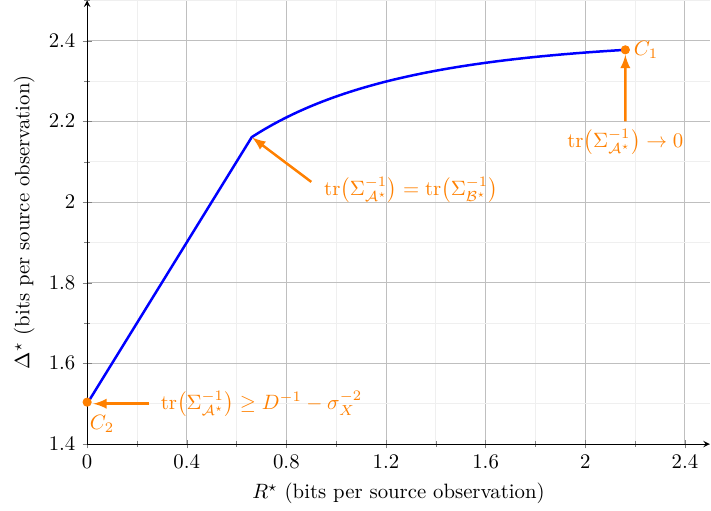}
\caption{$(R^\star,\Delta^\star)$ represents the corner points of the rate-leakage region $\calR(D,\bbA)$ characterized in Theorem~\ref{thm:Capacity}, for fixed noise variances, when $\sigma_X^2=2$, $D=0.1$, and ${\tr\big(\boldsymbol{\Sigma}_{\calB^\star}^{-1}\big)}=3.5$.}
\label{fig:General_Example}
\vspace{-0.5cm}
\end{figure}
\subsection{Results for Threshold Access Structures}
In this section, we consider a special type of access structure, which is known as the threshold access structure \cite{Secret_Sharing_2} and defined, for a threshold $t\in\intseq{1}{L}$, as
\begin{align}
    \bbA_t\triangleq\{\calA\subseteq\calL:\abs{\calA}\ge t\}.\nonumber%
\end{align}In other words, the threshold access structure is such that any set of $t$ users is able to reconstruct the compressed source with some predefined distortion. Similar to the general case, the complement of the set $\bbA_t$ is defined as $\bbB_t\triangleq2^\calL\backslash\bbA_t=\{\calB\subseteq\calL:\abs{\calB}< t\}$, and we consider $\calA_t^\star\in\argmin\limits_{\calA\in\bbA_t}\tr\left(\boldsymbol{\Sigma}_{\calA}^{-1}\right)$ and $\calB_t^\star\in\argmax\limits_{\calB\in\bbB_t}\{\tr\left(\boldsymbol{\Sigma}_{\calB}^{-1}\}\right)$. The following result presents necessary and sufficient conditions to determine whether the optimal trade-off between the compression rate and the leakage rate is decreasing or increasing with the threshold $t$.
\begin{theorem}
\label{thm:Threshold_Capacity}
Let $t\in\intseq{1}{L}$, suppose that $\tr\big(\boldsymbol{\Sigma}_{\calA_{t}^\star}^{-1}\big)< D^{-1}-\sigma_X^{-2}$, which means that the source $X^n$ needs to be encoded to satisfy \eqref{eq:Distortion_Cons} as discussed after Remark~\ref{rem:Simple_Case}, and define $R(D,\bbA_t)\triangleq\min\big\{R:(R,\Delta)\in\calR(D,\bbA_t)\big\}$. Then, we have,
\begin{itemize}
\begin{subequations}
\item $\calR(D,\bbA_L)\supseteq\calR(D,\bbA_t)\Leftrightarrow$\\$\frac{\sigma_X^{-2}+\tr\big(\boldsymbol{\Sigma}_{\calA_t^\star}^{-1}\big)}{\sigma_X^{-2}+\tr\big(\boldsymbol{\Sigma}_{\calB_t^\star}^{-1}\big)}\le\frac{\sigma_X^{-2}+\tr\big(\boldsymbol{\Sigma}_{\calA_L^\star}^{-1}\big)}{\sigma_X^{-2}+\tr\big(\boldsymbol{\Sigma}_{\calB_L^\star}^{-1}\big)}$;
\item $R(D,\bbA_t)\ge R(D,\bbA_{t+i})$, for $i\in\intseq{1}{L-t}$.
\end{subequations}
\end{itemize}
\end{theorem}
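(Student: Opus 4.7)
The plan is to handle part (b) directly from the monotonicity of the threshold structure, and then deduce part (a) as an equivalence between corner points of the rectangular rate-leakage regions produced by Theorem~\ref{thm:Capacity}. For part (b), I observe that by definition $\bbA_{t+1}\subseteq\bbA_t$, so minimizing the trace over a smaller feasible set yields $\tr(\boldsymbol{\Sigma}_{\calA_{t+1}^\star}^{-1})\geq\tr(\boldsymbol{\Sigma}_{\calA_{t}^\star}^{-1})$. Under the hypothesis of the theorem the positive part in the rate bound of Theorem~\ref{thm:Capacity} is active, giving
\begin{align*}
R(D,\bbA_t)=\tfrac{1}{2}\log\tfrac{\sigma_X^2}{D}-\tfrac{1}{2}\log\bigl(1+\sigma_X^2\tr(\boldsymbol{\Sigma}_{\calA_{t}^\star}^{-1})\bigr),
\end{align*}
which is decreasing in $\tr(\boldsymbol{\Sigma}_{\calA_{t}^\star}^{-1})$; chaining the one-step inequality $R(D,\bbA_{t+1})\leq R(D,\bbA_t)$ from $t$ up to $t+i$ delivers part~(b), with the convention that if the positive part vanishes at some intermediate threshold then the corresponding rate drops to zero and the chain persists trivially.

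For part~(a), Theorem~\ref{thm:Capacity} exhibits $\calR(D,\bbA)$ as a product region of the form $\{(R,\Delta):R\geq R_{\min},\,\Delta\geq\Delta_{\min}\}$, so containment is equivalent to the two componentwise corner inequalities $R_{\min}(L)\leq R_{\min}(t)$ and $\Delta_{\min}(L)\leq\Delta_{\min}(t)$, the former following from part~(b). I introduce the shorthand $u_s\triangleq\sigma_X^{-2}+\tr(\boldsymbol{\Sigma}_{\calA_{s}^\star}^{-1})$ and $v_s\triangleq\sigma_X^{-2}+\tr(\boldsymbol{\Sigma}_{\calB_{s}^\star}^{-1})$ for $s\in\{t,L\}$, so that the target condition reads $u_t/v_t\leq u_L/v_L$. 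A key preliminary observation is that since $\calA_L^\star=\calL$ while $\calB_L^\star\subsetneq\calL$, and the map $\calS\mapsto\tr(\boldsymbol{\Sigma}_\calS^{-1})$ is monotone under set inclusion (via the identity $\sigma_{X|\mathbf{Y}_\calS}^2=(\sigma_X^{-2}+\tr(\boldsymbol{\Sigma}_\calS^{-1}))^{-1}$ used throughout the paper), we have $u_L\geq v_L$, placing the threshold $L$ in the first branch of Theorem~\ref{thm:Capacity} with $\Delta_{\min}(L)=\tfrac{1}{2}\log(\sigma_X^2 v_L/(Du_L))$.

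The comparison $\Delta_{\min}(L)\leq\Delta_{\min}(t)$ then splits into two sub-cases according to the sign of $u_t-v_t$. If $u_t\geq v_t$, the threshold $t$ also falls in the $g_1$-branch and $\Delta_{\min}(t)=\tfrac{1}{2}\log(\sigma_X^2 v_t/(Du_t))$; cross-multiplying yields the direct equivalence $\Delta_{\min}(L)\leq\Delta_{\min}(t)\Leftrightarrow v_L u_t\leq v_t u_L\Leftrightarrow u_t/v_t\leq u_L/v_L$, matching the stated condition. If instead $u_t<v_t$, then $t$ lies in the $g_2$-branch with $\Delta_{\min}(t)=\tfrac{1}{2}\log\sigma_X^2(1/D+v_t-u_t)$; the ratio condition $u_t/v_t\leq u_L/v_L$ is automatic because $u_t/v_t<1\leq u_L/v_L$, while $\Delta_{\min}(L)\leq\Delta_{\min}(t)$ reduces to $v_L/u_L\leq 1+D(v_t-u_t)$, which is also automatic since $v_L/u_L\leq 1<1+D(v_t-u_t)$.

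The step I expect to be the main obstacle is exactly this second sub-case: a priori the branch mismatch (with $L$ using $g_1$ but $t$ using $g_2$) threatens to break the clean ratio equivalence, but the constraint $u_L\geq v_L$ forced by $\calA_L^\star=\calL$ ensures that both sides of the claimed iff degenerate to true statements in that regime, so the equivalence still holds. Tracking which branch governs each $\Delta_{\min}$, and verifying that this branch mismatch does not create a genuine counterexample to the stated iff, is the key technical care required.
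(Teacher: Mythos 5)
Your proposal is correct and follows essentially the same route as the paper's proof: you evaluate the corner points of the rectangular region from Theorem~\ref{thm:Capacity}, use $\calA_L^\star=\calL$ to place $t=L$ in the $g_1$ branch, obtain the ratio equivalence when $t$ is also in the $g_1$ branch, and observe that in the $g_2$ branch both the containment and the ratio condition hold automatically, while part (b) follows from monotonicity of $\min_{\calA\in\bbA_t}\tr\big(\boldsymbol{\Sigma}_\calA^{-1}\big)$ in $t$. The only cosmetic difference is that you derive this monotonicity (and $u_L\ge v_L$) directly from set inclusion rather than via the paper's explicit nested construction in Lemma~\ref{lemma:Threshold}, which is an equally valid shortcut.
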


\begin{theorem}
\label{thm:Threshold_Capacity_2}
For any $t\in\intseq{1}{L}$, let  $\Delta(D,\bbA_t)\triangleq\min\big\{\Delta:(R,\Delta)\in\calR(D,\bbA_t)\big\}$, and suppose that $\tr\big(\boldsymbol{\Sigma}_{\calA_{t}^\star}^{-1}\big)< D^{-1}-\sigma_X^{-2}$, which means that the source $X^n$ needs to be encoded to satisfy \eqref{eq:Distortion_Cons} as discussed after Remark~\ref{rem:Simple_Case}. Then, for any $t\in\intseq{1}{L}$, and $i\in\intseq{1}{L-t}$,
\begin{itemize}[leftmargin=*]
    \item when $\tr\big(\boldsymbol{\Sigma}_{\calA_t^\star}^{-1}\big)\le\tr\big(\boldsymbol{\Sigma}_{\calB_t^\star}^{-1}\big)$ and $\tr\big(\boldsymbol{\Sigma}_{\calA_{t+i}^\star}^{-1}\big)\le\tr\big(\boldsymbol{\Sigma}_{\calB_{t+i}^\star}^{-1}\big)$,
    \begin{align}
        &\Delta(D,\bbA_t)\ge\Delta(D,\bbA_{t+i})\Leftrightarrow\nonumber\\
        &\tr\big(\boldsymbol{\Sigma}_{\calA_{t+i}^\star}^{-1}\big)-\tr\big(\boldsymbol{\Sigma}_{\calA_t^\star}^{-1}\big)\ge\tr\big(\boldsymbol{\Sigma}_{\calB_{t+i}^\star}^{-1}\big)-\tr\big(\boldsymbol{\Sigma}_{\calB_t^\star}^{-1}\big);\nonumber
    \end{align}
    \item when $\tr\big(\boldsymbol{\Sigma}_{\calA_t^\star}^{-1}\big)\ge\tr\big(\boldsymbol{\Sigma}_{\calB_t^\star}^{-1}\big)$ and $\tr\big(\boldsymbol{\Sigma}_{\calA_{t+i}^\star}^{-1}\big)\ge\tr\big(\boldsymbol{\Sigma}_{\calB_{t+i}^\star}^{-1}\big)$,
        \begin{align}
        &\Delta(D,\bbA_t)\ge\Delta(D,\bbA_{t+i})\Leftrightarrow\nonumber\\
        &\frac{\sigma_X^{-2}+\tr(\boldsymbol{\Sigma}_{\calB_t^\star}^{-1})}{\sigma_X^{-2}+\tr(\boldsymbol{\Sigma}_{\calA_t^\star}^{-1})}\ge\frac{\sigma_X^{-2}+\tr(\boldsymbol{\Sigma}_{\calB_{t+i}^\star}^{-1})}{\sigma_X^{-2}+\tr(\boldsymbol{\Sigma}_{\calA_{t+i}^\star}^{-1})};\nonumber
    \end{align}
    \item when $\tr\big(\boldsymbol{\Sigma}_{\calA_t^\star}^{-1}\big)\ge\tr\big(\boldsymbol{\Sigma}_{\calB_t^\star}^{-1}\big)$ and $\tr\big(\boldsymbol{\Sigma}_{\calA_{t+i}^\star}^{-1}\big)\le\tr\big(\boldsymbol{\Sigma}_{\calB_{t+i}^\star}^{-1}\big)$,
    \begin{align}
        \Delta(D,\bbA_t)\le\Delta(D,\bbA_{t+i});\nonumber
    \end{align}
    \item when $\tr\big(\boldsymbol{\Sigma}_{\calA_t^\star}^{-1}\big)\le\tr\big(\boldsymbol{\Sigma}_{\calB_t^\star}^{-1}\big)$ and $\tr\big(\boldsymbol{\Sigma}_{\calA_{t+i}^\star}^{-1}\big)\ge\tr\big(\boldsymbol{\Sigma}_{\calB_{t+i}^\star}^{-1}\big)$,
    \begin{align}
        \Delta(D,\bbA_t)\ge\Delta(D,\bbA_{t+i}).\nonumber
    \end{align}
\end{itemize}
\end{theorem}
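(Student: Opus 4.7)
The plan is to invoke Theorem~\ref{thm:Capacity} to express $\Delta(D,\bbA_s)$ in closed form for each $s\in\{t,t+i\}$ and then compare the two values directly. Let $a_s \triangleq \tr\bigl(\boldsymbol{\Sigma}_{\calA_s^\star}^{-1}\bigr)$ and $b_s \triangleq \tr\bigl(\boldsymbol{\Sigma}_{\calB_s^\star}^{-1}\bigr)$. Under the standing assumption $a_t < D^{-1}-\sigma_X^{-2}$ (and, in the non-degenerate subcase, the same for $a_{t+i}$), the positive parts in Theorem~\ref{thm:Capacity} are active, and Theorem~\ref{thm:Capacity} reduces to
\begin{align}
\Delta(D,\bbA_s) = \begin{cases} \tfrac{1}{2}\log\tfrac{\sigma_X^2}{D} + \tfrac{1}{2}\log\tfrac{1+\sigma_X^2 b_s}{1+\sigma_X^2 a_s} & \text{if } a_s\ge b_s,\\ \tfrac{1}{2}\log\bigl(\tfrac{\sigma_X^2}{D} + \sigma_X^2(b_s-a_s)\bigr) & \text{if } a_s\le b_s,\end{cases}\nonumber
\end{align}
with the two branches agreeing at $a_s=b_s$.

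Two structural observations drive the argument. First, $\bbA_{t+i}\subseteq \bbA_t$ and $\bbB_t\subseteq \bbB_{t+i}$, so taking the minimum of $\tr(\boldsymbol{\Sigma}_\cdot^{-1})$ over a smaller set and the maximum over a larger set can only increase the value, yielding $a_t\le a_{t+i}$ and $b_t\le b_{t+i}$. Second, setting $u\triangleq \sigma_X^2/D$, $\alpha\triangleq 1+\sigma_X^2 a$, $\beta\triangleq 1+\sigma_X^2 b$, a short computation gives
\begin{align}
g_1(a,b) - g_2(a,b) = \tfrac{1}{2}\log\tfrac{u\beta}{\alpha(u+\beta-\alpha)},\nonumber
\end{align}
whose sign coincides with that of $(u-\alpha)(\beta-\alpha)$. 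Since $a<D^{-1}-\sigma_X^{-2}$ forces $\alpha<u$, this yields the \emph{crossover property}: $g_1(a,b)\le g_2(a,b)$ iff $a\ge b$ and $g_2(a,b)\le g_1(a,b)$ iff $a\le b$, with equality at $a=b$ where both equal $\tfrac{1}{2}\log(\sigma_X^2/D)$.

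Cases (i) and (ii) now follow by direct substitution. In (i) both leakages equal $g_2$, and $\Delta(D,\bbA_t)\ge \Delta(D,\bbA_{t+i})$ collapses to $b_t-a_t\ge b_{t+i}-a_{t+i}$, i.e.\ to the stated condition. In (ii) both leakages equal $g_1$, and the comparison reduces, after dividing numerator and denominator by $\sigma_X^2$, to the stated ratio inequality. For Case (iii), I chain
\begin{align}
\Delta(D,\bbA_t) = g_1(a_t,b_t) \le g_2(a_t,b_t) \le g_2(a_{t+i},b_{t+i}) = \Delta(D,\bbA_{t+i}),\nonumber
\end{align}
where the first inequality uses the crossover property with $a_t\ge b_t$ and the second uses monotonicity of $g_2$ in $b-a$ together with $b_t-a_t\le 0\le b_{t+i}-a_{t+i}$. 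For Case (iv), the dual chain
\begin{align}
\Delta(D,\bbA_t) = g_2(a_t,b_t) \ge g_2(a_{t+i},b_{t+i}) \ge g_1(a_{t+i},b_{t+i}) = \Delta(D,\bbA_{t+i})\nonumber
\end{align}
uses monotonicity of $g_2$ in $b-a$ (with $b_t-a_t\ge 0\ge b_{t+i}-a_{t+i}$) in the first step and the crossover property at $t+i$ (with $a_{t+i}\ge b_{t+i}$) in the second.

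The main obstacle is the transitional Cases (iii) and (iv), in which the formula for $\Delta(D,\bbA_s)$ switches between $g_1$ and $g_2$ as $s$ crosses the boundary $a_s=b_s$. The crossover property is the key ``glue'' that makes the chained inequalities legitimate, and one must take care to invoke each of $g_1,g_2$ on the correct side of this boundary so that the algebraic value actually equals the leakage at the corresponding threshold. A boundary subcase also requires attention: if $a_{t+i}\ge D^{-1}-\sigma_X^{-2}$, the source need not be encoded at threshold $t+i$ and the simplified formulas above collapse to $\tfrac{1}{2}\log(1+\sigma_X^2 b_{t+i})$, so the four case comparisons must be redone using only $b$-monotonicity, which is considerably easier.
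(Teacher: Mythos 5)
Your proof is correct and takes essentially the same route as the paper: it substitutes the closed-form leakage expressions of Theorem~\ref{thm:Capacity} into the four cases and compares them directly, with cases (i) and (ii) matching the paper's computation, and the mixed cases (iii)--(iv) handled by your crossover inequality (sign of $(u-\alpha)(\beta-\alpha)$) combined with monotonicity of $g_2$ in $b-a$, where the paper instead writes out the equivalent algebraic condition and concludes by a sign argument --- the two arguments are interchangeable. One small remark: the paper applies the hypothesis $\tr\big(\boldsymbol{\Sigma}_{\calA_t^\star}^{-1}\big)<D^{-1}-\sigma_X^{-2}$ at every threshold (in particular at $t+i$), so the positive parts are active throughout and your closing aside about the degenerate subcase at $t+i$ is not needed for the stated theorem.
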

The proofs of Theorem~\ref{thm:Threshold_Capacity} and Theorem~\ref{thm:Threshold_Capacity_2} are available in Appendix~\ref{proof:Threshold_Capacity}.
\begin{example}
\begin{figure}
\centering
\includegraphics[width=\Figwidth]{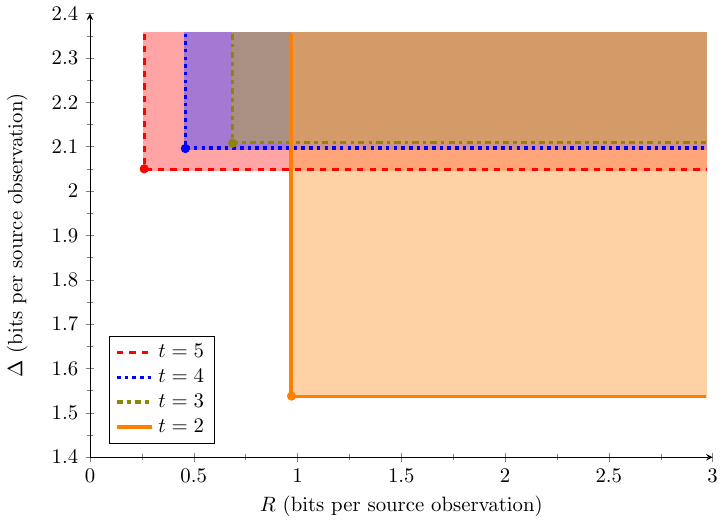}
\caption{The rate-leakage region for threshold access structures when $D=0.1$, $\sigma_X^2=2$, and $\boldsymbol{\Sigma}_\calL=\big[\begin{array}{ccccc}
    1 & 0.8 & 0.9 & 0.7 & 0.6
   \end{array}\big]^\intercal$.}
\label{fig:Example_1}
\vspace{-0.5cm}
\end{figure}
\sloppy Consider an encoder and five users. Let $D=0.1$, $\sigma_X^2=2$, and $\boldsymbol{\Sigma}_\calL=\big[\begin{array}{ccccc}
    1 & 0.8 & 0.9 & 0.7 & 0.6
   \end{array}\big]^\intercal$.

From the definitions of $\boldsymbol{\Sigma}_{\calA_t^\star}$ and $\boldsymbol{\Sigma}_{\calB_t^\star}$, we have $\boldsymbol{\Sigma}_{\calA_5^\star}=\diag(1,0.8,0.9,0.7,0.6)$ and $\boldsymbol{\Sigma}_{\calB_5^\star}=\diag(0.8,0.9,0.7,0.6)$. Hence,  $\tr\big(\boldsymbol{\Sigma}_{\calA_5^\star}^{-1}\big)=6.4563$ and $\tr\big(\boldsymbol{\Sigma}_{\calB_5^\star}^{-1}\big)=5.4563$. Plugging these in Theorem~\ref{thm:Capacity} results to $R\ge0.2618$ and $\Delta\ge 2.085$.

When $t=4$, $\boldsymbol{\Sigma}_{\calA_4^\star}=\diag(1,0.8,0.9,0.7)$ and $\boldsymbol{\Sigma}_{\calB_4^\star}=\diag(0.8,0.7,0.6)$. Hence,  $\tr\left(\boldsymbol{\Sigma}_{\calA_4^\star}^{-1}\right)=4.7897$ and $\tr\big(\boldsymbol{\Sigma}_{\calB_4^\star}^{-1}\big)=4.3452$, and by Theorem~\ref{thm:Capacity}, $R\ge 0.4594$ and $\Delta\ge 2.1282$.

When $t=3$, $\boldsymbol{\Sigma}_{\calA_3^\star}=\diag(1,0.9,0.8)$ and $\boldsymbol{\Sigma}_{\calB_3^\star}=\diag(0.7,0.6)$. Hence,  $\tr\big(\boldsymbol{\Sigma}_{\calA_3^\star}^{-1}\big)=3.3611$ and $\tr\big(\boldsymbol{\Sigma}_{\calB_3^\star}^{-1}\big)=3.0952$, and by Theorem~\ref{thm:Capacity}, $R\ge 0.6865$ and $\Delta\ge 2.1415$.

Finally, when $t=2$, $\boldsymbol{\Sigma}_{\calA_2^\star}=\diag(1,0.9)$ and $\boldsymbol{\Sigma}_{\calB_2^\star}=0.6$. Therefore,  $\tr\big(\boldsymbol{\Sigma}_{\calA_2^\star}^{-1}\big)=2.1111$ and $\tr\big(\boldsymbol{\Sigma}_{\calB_2^\star}^{-1}\big)=0.6$, and by Theorem~\ref{thm:Capacity}, $R\ge 0.9686$ and $\Delta\ge 2.1282$. 

This example verifies the relationships between the compression rate and the leakage rate for different thresholds provided in Theorem~\ref{thm:Threshold_Capacity} and Theorem~\ref{thm:Threshold_Capacity_2}. For instance, in Theorem~\ref{thm:Threshold_Capacity_2} for $t=3$ and $i=1$, we operate in the second case and the condition $\frac{\sigma_X^{-2}+\tr(\boldsymbol{\Sigma}_{\calB_t^\star}^{-1})}{\sigma_X^{-2}+\tr(\boldsymbol{\Sigma}_{\calA_t^\star}^{-1})}\ge\frac{\sigma_X^{-2}+\tr(\boldsymbol{\Sigma}_{\calB_{t+i}^\star}^{-1})}{\sigma_X^{-2}+\tr(\boldsymbol{\Sigma}_{\calA_{t+i}^\star}^{-1})}$ is satisfied so that $\Delta(D,\bbA_t)\ge\Delta(D,\bbA_{t+i})$, which is consistent with Fig.~\ref{fig:Example_1}.
\end{example}
\section{Converse Proof of Theorem~\ref{thm:Capacity}}
\label{sec:Converse}
In Section~\ref{subsec:General_Outer}, we provide a general outer region on the rate-leakage region $\calR(D,\bbA)$. In Sections~\ref{sec:Conversion_Scalar}, \ref{subsec:Better_Authorized}, and \ref{subsec:Better_Unauthorized}, we show that this outer region reduces to the region in Theorem~\ref{thm:Capacity}. Specifically, we convert the problem to a scalar problem by using sufficient statistics in Section~\ref{sec:Conversion_Scalar}. Then, in Section~\ref{subsec:Better_Authorized}, we study the case when the side information of any set of authorized users is more correlated, in a sense that we make precise in the sequel, with the source than the side information of any unauthorized set of users. Finally, in Section~\ref{subsec:Better_Unauthorized}, we study the case when the side information of the unauthorized sets of users is more correlated with the source than with the side information of the authorized sets of users.
\subsection{A General Outer Region}
\label{subsec:General_Outer}
We first provide a general outer region on the secure rate-distortion region of the problem defined in Section~\ref{sec:Prob_Defi}, which is based on \cite[Section~III-B]{VillardPianta}.
\begin{theorem}\label{thm:Upper_Bound}
For every $(\bbA,\bbB)$, the region $\calR(D,\bbA)$ is included in $\bigcap\limits_{(\calA,\calB)\in(\bbA,\bbB)}\calR_G(\mathbf{Y}_\calA,\mathbf{Y}_\calB)$, where
\begin{align}
  &\calR_G(\mathbf{Y}_\calA,\mathbf{Y}_\calB)\triangleq\bigcup\limits_{\substack{U-V-X-(\mathbf{Y}_\calA,\mathbf{Y}_\calB) \\ \bbE\left[\sigma_{X|\mathbf{Y}_\calA,V}^2\right]\le D}} \nonumber\\
  &\left.\begin{cases}(R,\Delta):\\
R>\MI(V;X|\mathbf{Y}_\calA)\\
\Delta>\MI(V;X)-\MI(V;\mathbf{Y}_\calA|U)+\MI(X;\mathbf{Y}_\calB|U)\\
\end{cases}\right\}.\nonumber
\end{align}
\end{theorem}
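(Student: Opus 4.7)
\smallskip\noindent\textbf{Overall strategy.} The plan is to fix an arbitrary achievable pair $(R,\Delta)$ together with a corresponding sequence of $(2^{nR},n)$ source codes and, for each pair $(\calA,\calB)\in\bbA\times\bbB$ separately, derive a single-letter outer bound matching $\calR_G(\mathbf{Y}_\calA,\mathbf{Y}_\calB)$; the stated outer region then follows by intersecting over all such pairs. The proof would follow the spirit of \cite[Section~III-B]{VillardPianta}, where a similar single-letter characterization is derived for the single authorized-user/single-eavesdropper case, specialized here to each fixed $(\calA,\calB)$.

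\smallskip\noindent\textbf{Auxiliaries.} For each $i\in\intseq{1}{n}$ I would introduce auxiliary random variables $V_i$ and $U_i$ with $U_i\subseteq V_i$, built by combining $M$ with suitable past and future components of the side-information sequences (e.g.\ $V_i\triangleq(M,\mathbf{Y}_{\calA,\sim i},\mathbf{Y}_{\calB,i+1}^n)$ and $U_i\triangleq(\mathbf{Y}_\calA^{i-1},\mathbf{Y}_{\calB,i+1}^n)$), together with a time-sharing variable $T$ uniform on $\intseq{1}{n}$ independent of everything else. Setting $X\triangleq X_T$, $\mathbf{Y}_\calA\triangleq\mathbf{Y}_{\calA,T}$, $\mathbf{Y}_\calB\triangleq\mathbf{Y}_{\calB,T}$, $V\triangleq(V_T,T)$, $U\triangleq(U_T,T)$, the Markov chain $U-V-X-(\mathbf{Y}_\calA,\mathbf{Y}_\calB)$ would follow from the memoryless property of the source (so each non-$X_i$ component of $V_i$ is a function of $X_{\sim i}^n$ and independent noise, hence independent of $(X_i,\mathbf{Y}_{\calA,i},\mathbf{Y}_{\calB,i})$ given $X_i$), together with $M=f(X^n)$ and $U_i\subseteq V_i$.

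\smallskip\noindent\textbf{Rate, leakage, and distortion bounds.} For the rate, starting from $nR\ge H(M)\ge H(M|\mathbf{Y}_\calA^n)=I(M;X^n|\mathbf{Y}_\calA^n)$ (using $M=f(X^n)$), I would apply the chain rule together with subadditivity of conditional entropy and the memoryless property to single-letterize to $\sum_i I(V_i;X_i|\mathbf{Y}_{\calA,i})$. For the leakage, starting from $n\Delta+o(n)\ge I(X^n;M,\mathbf{Y}_\calB^n)$, I would chain-rule expand and apply Csisz\'ar's sum identity repeatedly to shuttle summations between past-of-$\mathbf{Y}_\calA$ and future-of-$\mathbf{Y}_\calB$ indices, yielding the three-term per-letter bound $\sum_i\bigl[I(V_i;X_i)-I(V_i;\mathbf{Y}_{\calA,i}|U_i)+I(X_i;\mathbf{Y}_{\calB,i}|U_i)\bigr]$. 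For the distortion, $V_i$ is chosen so that $(\mathbf{Y}_{\calA,i},V_i)\supseteq(M,\mathbf{Y}_\calA^n)$; hence $\hat{X}_i$ is measurable with respect to $(\mathbf{Y}_{\calA,i},V_i)$, the MMSE property gives $\bbE[(X_i-\hat{X}_i)^2]\ge\bbE[\sigma^2_{X_i|\mathbf{Y}_{\calA,i},V_i}]$, and averaging with \eqref{eq:Error_Prob} together with the time-sharing step yields $\bbE[\sigma^2_{X|\mathbf{Y}_\calA,V}]\le D$; taking $n\to\infty$ converts the inequalities to strict ones.

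\smallskip\noindent\textbf{Main obstacle.} The main technical difficulty is the leakage-bound single-letterization: the three-term structure $I(V;X)-I(V;\mathbf{Y}_\calA|U)+I(X;\mathbf{Y}_\calB|U)$ emerges only after a careful rearrangement via Csisz\'ar's sum identity, and the specific split between $V_i$ and $U_i$ (coinciding up to the inclusion of $M$, plus a judicious past-$\mathbf{Y}_\calA$/future-$\mathbf{Y}_\calB$ asymmetry) must be engineered so that the telescoping cross-terms close correctly. A secondary delicate point is that \emph{one and the same} choice of auxiliaries must simultaneously satisfy the Markov chain, keep $\hat{X}_i$ measurable with respect to $(\mathbf{Y}_{\calA,i},V_i)$ for the distortion argument, and be tight enough in the rate single-letterization so that the resulting region still matches $\calR_G(\mathbf{Y}_\calA,\mathbf{Y}_\calB)$.
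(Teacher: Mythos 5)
Your proposal is correct in its essential reduction: fix $(\calA,\calB)\in\bbA\times\bbB$ and observe that any code meeting \eqref{eq:Error_Prob}--\eqref{eq:Equiv} is in particular a code for the single authorized-user/single-eavesdropper problem with side information $(\mathbf{Y}_\calA^n,\mathbf{Y}_\calB^n)$, then intersect over all pairs; this is exactly the structural step the paper uses. Where you diverge is that you re-derive the single-letter outer bound from first principles (explicit auxiliaries, time-sharing, Csisz\'ar sum identity), whereas the paper treats that bound as a black box: it simply invokes the outer region $\calR_G$ already established in \cite[Section~III-B]{VillardPianta} for the two-terminal problem, and its only new technical content is Appendix~\ref{proof:Upper_Bound}, which shows that for quadratic distortion with MMSE decoding the distortion constraint collapses to $\bbE\big[\sigma_{X|\mathbf{Y}_\calA,V}^2\big]\le D$, using the fact that conditioning reduces MMSE and the auxiliary $V_i\triangleq\big(\mathbf{Y}_{\calA\sim i}^n,f(X^n),\mathbf{Y}_\calB^{i-1},X^{i-1}\big)$ chosen to be consistent with the cited converse. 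Note that your illustrative identification $V_i=(M,\mathbf{Y}_{\calA,\sim i},\mathbf{Y}_{\calB,i+1}^n)$, $U_i=(\mathbf{Y}_\calA^{i-1},\mathbf{Y}_{\calB,i+1}^n)$ differs from the one the paper inherits (past eavesdropper outputs and past source symbols rather than future eavesdropper outputs, and an $M$-bearing $U_i$); it does satisfy the Markov chain and the measurability needed for the MMSE step, but whether it closes the Csisz\'ar-sum telescoping to yield exactly $I(V;X)-I(V;\mathbf{Y}_\calA|U)+I(X;\mathbf{Y}_\calB|U)$ is precisely the delicate point you flag, and it is the part the paper outsources to \cite{VillardPianta} rather than redoing. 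The trade-off: the paper's route is much shorter and reuses a vetted converse (with the leakage measure swapped from equivocation to mutual information, as noted in its Remark on the leakage measure), while your self-contained derivation avoids leaning on a result stated for discrete sources in a continuous-source setting, at the cost of having to verify the auxiliary bookkeeping yourself.
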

\begin{proof}
Consider the secure source coding problem in \cite[Section~III-B]{VillardPianta}, which consists of a memoryless source $\big((\calX,\calY_1,\calY_2),P_{XY_1Y_2}\big)$ with three outputs $X^n$, at the encoder, $Y_1^n$, at the legitimate terminal, and $Y_2^n$, at the eavesdropper. In this setting, the encoder wishes to encode its observed sequence in such a way that the legitimate receiver can reconstruct the source sequence $X^n$ with distortion $D$, and the information leakage about $X^n$ at the eavesdropper is minimized. An $(n,R)$-code for source coding is defined by an encoding function $f:\calX^n\to\intseq{1}{2^{nR}}$ and a decoding function $g:\intseq{1}{2^{nR}}\times\calY_1^n\to \calX^n$. In this problem, a pair $(R,\Delta)\in\bbR_+^2$ is achievable if there exists a sequence of $(n,R)$-codes such that, 
\setcounter{equation}{5}
\begin{subequations}\label{eq:General_Upper_Dist_Sec}
\begin{align}
    &\limsup\limits_{n\to\infty}\bbE[d(x^n,\hat{X}(f(X^n),Y_1^n))]\le D,\label{eq:General_Upper_Dist_Sec_1}\\
    &\lim\limits_{n\to\infty}\frac{1}{n}\MI(X^n;V^n,Y_2^n)\le\Delta,\label{eq:General_Upper_Dist_Sec_2}
\end{align}
\end{subequations}where $V^n\triangleq f(X^n)$. The outer region derived for this problem in \cite[Section~III-B]{VillardPianta} is $\calR_G(Y_1,Y_2)$. Now, consider the secure source coding problem defined in Section~\ref{sec:Prob_Defi} and the rate pair $(R,\Delta)\in\calR(D,\bbA)$ such that \eqref{eq:Error_Prob} and \eqref{eq:Equiv} are satisfied. In particular, \eqref{eq:General_Upper_Dist_Sec} is satisfied for any $\calA\in\bbA$ and $\calB\in\bbB$ when $Y_1^n$ is replaced by $\mathbf{Y}_\calA^n$ and $Y_2^n$ is replaced by $\mathbf{Y}_\calB^n$. It means that for any $\calA\in\bbA$ and $\calB\in\bbB$, $(R,\Delta)\in\calR_G(\mathbf{Y}_\calA,\mathbf{Y}_\calB)$ and Theorem~\ref{thm:Upper_Bound} holds.

The proof that the distortion constraint in $\calR_G(\mathbf{Y}_\calA,\mathbf{Y}_\calB)$ reduces to $\bbE\left[\sigma_{X|\mathbf{Y}_\calA,V}^2\right]\le D$ for Gaussian sources can be found in Appendix~\ref{proof:Upper_Bound}.
\end{proof}
\begin{remark}[Leakage Measure]
In \cite[Theorem~3]{VillardPianta}, the equivocation is used as a measure of leakage but, since we consider continuous sources in our setting, to avoid a negative equivocation, we replace the equivocation with mutual information leakage (see Definition~\ref{eq:Defi_Achi_Rate}).
\end{remark}

\subsection{Conversion to a Scalar Problem}
\label{sec:Conversion_Scalar}
To prove the converse part of Theorem~\ref{thm:Capacity}, we use the notion of sufficient statistics \cite[Section~2.9]{Cover_Book} and the following lemma from \cite{SIMO_Fading_WTC} to convert the problem in \eqref{eq:System_Model_2} to a problem in which the encoder, the authorized users, and the unauthorized users observe a scalar Gaussian~source.
\begin{lemma}{(\hspace{-0.15mm}\cite[Lemma~3.1]{SIMO_Fading_WTC})}\label{lemma:Suff_Stats}
Consider the channel with input $X$ and output $\mathbf{Y}$ given by
\begin{align}
    \mathbf{Y}=\mathbf{h}X+\mathbf{N},\nonumber%
\end{align}where $\mathbf{N}$ is a zero-mean Gaussian vector with covariance matrix $\boldsymbol{\Sigma}$ and $\mathbf{h}\in\bbR^n$. A sufficient statistic to correctly determine $X$ from $\mathbf{Y}$ is the following scalar
\begin{align}
    \tilde{Y}\triangleq\mathbf{h}^\intercal\boldsymbol{\Sigma}^{-1}\mathbf{Y}.\nonumber%
\end{align}
\end{lemma}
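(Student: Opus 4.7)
My plan is to invoke the Fisher-Neyman factorization criterion: a statistic $T(\mathbf{Y})$ is sufficient for the parameter $X$ if the conditional density $p_{\mathbf{Y}|X}(\mathbf{y}|x)$ factors as $g(T(\mathbf{y}),x)\,h(\mathbf{y})$ for some nonnegative measurable functions $g$ and $h$. Since $\boldsymbol{\Sigma}$ is positive definite (and hence invertible) and $\mathbf{Y}\mid X=x\sim\mathcal{N}(\mathbf{h}x,\boldsymbol{\Sigma})$, the proof reduces to a quadratic form expansion in the exponent of the conditional Gaussian density and an identification of which of the resulting terms couple $x$ and $\mathbf{y}$.

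Concretely, I would first write
$$p_{\mathbf{Y}|X}(\mathbf{y}|x)=\frac{1}{(2\pi)^{n/2}|\boldsymbol{\Sigma}|^{1/2}}\exp\!\Bigl(-\tfrac{1}{2}(\mathbf{y}-\mathbf{h}x)^\intercal\boldsymbol{\Sigma}^{-1}(\mathbf{y}-\mathbf{h}x)\Bigr),$$
and then expand the exponent as
$$(\mathbf{y}-\mathbf{h}x)^\intercal\boldsymbol{\Sigma}^{-1}(\mathbf{y}-\mathbf{h}x)=\mathbf{y}^\intercal\boldsymbol{\Sigma}^{-1}\mathbf{y}-2x\,\mathbf{h}^\intercal\boldsymbol{\Sigma}^{-1}\mathbf{y}+x^2\,\mathbf{h}^\intercal\boldsymbol{\Sigma}^{-1}\mathbf{h}.$$
The key observation is that $\mathbf{y}$ and $x$ appear jointly only in the cross-term $-2x\,\mathbf{h}^\intercal\boldsymbol{\Sigma}^{-1}\mathbf{y}$, where $\mathbf{y}$ enters exclusively through the scalar $\tilde{y}=\mathbf{h}^\intercal\boldsymbol{\Sigma}^{-1}\mathbf{y}$. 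Defining $h(\mathbf{y}):=(2\pi)^{-n/2}|\boldsymbol{\Sigma}|^{-1/2}\exp\bigl(-\tfrac{1}{2}\mathbf{y}^\intercal\boldsymbol{\Sigma}^{-1}\mathbf{y}\bigr)$ and $g(\tilde{y},x):=\exp\bigl(x\tilde{y}-\tfrac{x^2}{2}\mathbf{h}^\intercal\boldsymbol{\Sigma}^{-1}\mathbf{h}\bigr)$ then yields the desired factorization $p_{\mathbf{Y}|X}(\mathbf{y}|x)=g(\tilde{y},x)h(\mathbf{y})$, and Fisher-Neyman immediately gives the sufficiency of $\tilde{Y}=\mathbf{h}^\intercal\boldsymbol{\Sigma}^{-1}\mathbf{Y}$ for $X$.

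I do not anticipate any real obstacle in this argument; the only point requiring care is that $\boldsymbol{\Sigma}^{-1}$ must exist, which is guaranteed by the assumption $\boldsymbol{\Sigma}\succ 0$ imposed throughout on the noise covariance matrices $\boldsymbol{\Sigma}_\calA,\boldsymbol{\Sigma}_\calB$. Since in the paper $X$ is a random variable with a prior density, sufficiency of $\tilde{Y}$ translates via Bayes' rule into the conditional independence $\mathbf{Y}\indep X\mid\tilde{Y}$, i.e., the Markov chain $\mathbf{Y}-\tilde{Y}-X$, which is the form in which the lemma will be applied in the sequel to replace each vector Gaussian side information $\mathbf{Y}_\calA$ (resp.\ $\mathbf{Y}_\calB$) by a scalar counterpart without any loss of information.
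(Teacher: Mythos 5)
Your proposal is correct. Note that the paper does not prove this lemma at all --- it simply imports it as \cite[Lemma~3.1]{SIMO_Fading_WTC} --- so there is no internal proof to compare against; your Fisher--Neyman factorization argument is the standard self-contained way to establish it. The algebra is right: with $\boldsymbol{\Sigma}\succ 0$ the exponent expands as $\mathbf{y}^\intercal\boldsymbol{\Sigma}^{-1}\mathbf{y}-2x\,\mathbf{h}^\intercal\boldsymbol{\Sigma}^{-1}\mathbf{y}+x^2\,\mathbf{h}^\intercal\boldsymbol{\Sigma}^{-1}\mathbf{h}$, so $x$ and $\mathbf{y}$ interact only through $\tilde{y}=\mathbf{h}^\intercal\boldsymbol{\Sigma}^{-1}\mathbf{y}$, and the factorization $p_{\mathbf{Y}|X}=g(\tilde{y},x)h(\mathbf{y})$ follows. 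You also correctly handle the one point that matters for how the lemma is used downstream: since $X$ has a prior, Bayes' rule converts the factorization into $P(x|\mathbf{y})$ depending on $\mathbf{y}$ only through $\tilde{y}$, i.e., the Markov chain $X-\tilde{Y}-\mathbf{Y}$ invoked in \eqref{eq:Suff_Stats_Markov_2} and \eqref{eq:Suff_Stats_Markov_4}. No gaps.
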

Fix $\calA\in\bbA$ and $\calB\in\bbB$. By Lemma~\ref{lemma:Suff_Stats}, sufficient statistics to correctly determine $X$ from $\mathbf{Y}_\calA$ and $\mathbf{Y}_\calB$ in \eqref{eq:System_Model_2} are the following scalars,
\begin{align}
\tilde{Y}_\calA&=\mathbf{1}_\calA^\intercal\boldsymbol{\Sigma}_\calA^{-1} \mathbf{Y}_\calA,\quad\tilde{Y}_\calB=\mathbf{1}_\calB^\intercal\boldsymbol{\Sigma}_\calB^{-1} \mathbf{Y}_\calB\label{eq:Suff_Stats_Inp_Outp}.
\end{align}
Hence, we have
\begin{subequations}\label{eq:Suff_Stats_Coeffies}
\begin{align}
&V^n-X^n-\mathbf{Y}_\calA^n-\tilde{Y}_\calA^n\label{eq:Suff_Stats_Markov_1},\\
&V^n-X^n-\tilde{Y}_\calA^n-\mathbf{Y}_\calA^n\label{eq:Suff_Stats_Markov_2},\\
&V^n-X^n-\mathbf{Y}_\calB^n-\tilde{Y}_\calB^n\label{eq:Suff_Stats_Markov_3},\\
&V^n-X^n-\tilde{Y}_\calB^n-\mathbf{Y}_\calB^n\label{eq:Suff_Stats_Markov_4},
\end{align}where
\begin{itemize}
    \item the Markov chain in \eqref{eq:Suff_Stats_Markov_1} follows since $V^n$ is a function of $X^n$ and $\tilde{Y}_\calA^n$ is a function of $\mathbf{Y}_\calA^n$;
    \item the Markov chain in \eqref{eq:Suff_Stats_Markov_2} follows since $V^n$ is a function of $X^n$ and from \cite[Section~2.9]{Cover_Book}  $X^n-\tilde{Y}_\calA^n-\mathbf{Y}_\calA^n$;
    \item the Markov chains in \eqref{eq:Suff_Stats_Markov_3} and \eqref{eq:Suff_Stats_Markov_4} are obtained similarly.
\end{itemize}
Next we rewrite $\eqref{eq:System_Model_2}$ as
\begin{align}
\tilde{Y}_\calA&=h_\calA X+\tilde{N}_\calA,\quad\tilde{Y}_\calB=h_\calB X+\tilde{N}_\calB,\label{eq:Suff_Stats_Inp_Outp_B}
\end{align}
where
\begin{align}
h_\calA&\triangleq\mathbf{1}_\calA^\intercal\boldsymbol{\Sigma}_\calA^{-1}\mathbf{1}_\calA=\tr\left(\boldsymbol{\Sigma}_\calA^{-1}\right),\label{eq:Suff_Stats_Coeffie_1}\\ h_\calB&\triangleq\mathbf{1}_\calB^\intercal\boldsymbol{\Sigma}_\calB^{-1}\mathbf{1}_\calB=\tr\left(\boldsymbol{\Sigma}_\calB^{-1}\right),\label{eq:Suff_Stats_Coeffie_2}\\
\tilde{N}_\calA&=\mathbf{1}_\calA^\intercal\boldsymbol{\Sigma}_\calA^{-1}\mathbf{N}_\calA,\quad\tilde{N}_\calB=\mathbf{1}_\calB^\intercal\boldsymbol{\Sigma}_\calB^{-1}\mathbf{N}_\calB,\label{eq:Suff_Stats_Coeffies_4}
\end{align}where \eqref{eq:Suff_Stats_Coeffie_2} follows since $\boldsymbol{\Sigma}_\calA$ and $\boldsymbol{\Sigma}_\calB$ are diagonal matrices.
\end{subequations}

We now show that the sufficient statistics in \eqref{eq:Suff_Stats_Inp_Outp} preserves the distortion constraint and the leakage constraint in Definition~\ref{eq:Defi_Achi_Rate}. 
By the distortion constraint in Theorem~\ref{thm:Upper_Bound}, we have 
\begin{align}
&D\ge\frac{1}{n}\sum\limits_{i=1}^n\bbE\Big[\Big(X_i-\bbE[X_i|V^n,\mathbf{Y}_\calA^n]\Big)^2\Big]. \label{eq:Disto_Const_3}
\end{align}Expanding the \ac{RHS} of \eqref{eq:Disto_Const_3} results to,
\begin{align}
&\mathbb{E}\Big[\Big(X_i-
\bbE[X_i|V^n,\mathbf{Y}_\calA^n]\Big)^2\Big]\nonumber\\
&=\int\limits_{x_i}\int\limits_{v^n}\int\limits_{\mathbf{y}_\calA^n}\Big(x_i-\bbE[X_i|v^n,\mathbf{y}_\calA^n]\Big)^2P(x_i,v^n,\mathbf{y}_\calA^n)dx_idv^nd\mathbf{y}_\calA^n\nonumber\\
&=\int\limits_{x_i}\int\limits_{v^n}\int\limits_{\mathbf{y}_\calA^n}\int\limits_{\tilde{y}_\calA^n}\Big(x_i-\bbE[X_i|v^n,\mathbf{y}_\calA^n]\Big)^2\times\nonumber\\ &\quad P(x_i,v^n,\mathbf{y}_\calA^n,\tilde{y}_\calA^n)dx_idv^nd\mathbf{y}_\calA^nd\tilde{y}_\calA^n\nonumber\\
&\mathop = \limits^{(a)}\int\limits_{x_i}\int\limits_{v^n}\int\limits_{\mathbf{y}_\calA^n}\int\limits_{\tilde{y}_\calA^n}\Big(x_i-\int\limits_{\tilde{x}_i}\tilde{x}_iP(\tilde{x}_i|v^n,\mathbf{y}_\calA^n)d\tilde{x}_i\Big)^2\times\nonumber\\ 
&\quad P(x_i,v^n,\mathbf{y}_\calA^n,\tilde{y}_\calA^n)dx_idv^nd\mathbf{y}_\calA^nd\tilde{y}_\calA^n\nonumber\\
&\mathop = \limits^{(b)}\int\limits_{x_i}\int\limits_{v^n}\int\limits_{\mathbf{y}_\calA^n}\int\limits_{\tilde{y}_\calA^n}\Big(x_i-\int\limits_{\tilde{x}_i}\tilde{x}_iP(\tilde{x}_i|v^n,\tilde{y}_\calA^n)d\tilde{x}_i\Big)^2\times\nonumber\\ &\quad P(x_i,v^n,\mathbf{y}_\calA^n,\tilde{y}_\calA^n)dx_idv^nd\mathbf{y}_\calA^nd\tilde{y}_\calA^n\nonumber\\
&=\int\limits_{x_i}\int\limits_{v^n}\int\limits_{\tilde{y}_\calA^n}\Big(x_i-\int\limits_{\tilde{x}_i}\tilde{x}_iP(\tilde{x}_i|v^n,\tilde{y}_\calA^n)d\tilde{x}_i\Big)^2\times\nonumber\\ &\quad P(x_i,v^n,\tilde{y}_\calA^n)dx_idv^nd\tilde{y}_\calA^n\nonumber\\
&=\bbE\Big[\Big(X_i-
E[X_i|V^n,\tilde{Y}_\calA^n]\Big)^2\Big],\nonumber%
\end{align}where
\begin{itemize}
    \item[$(a)$] follows since $\mathbb{E}[X_i|v^n,\mathbf{y}_\calA^n]=\int\limits_{\tilde{x}_i}\tilde{x}_iP(\tilde{x}_i|v^n,\mathbf{y}_\calA^n)d\tilde{x}_i$;
    \item[$(b)$] follows from $P(\tilde{x}_i|v^n,\mathbf{y}_\calA^n)=P(\tilde{x}_i|v^n,\mathbf{y}_\calA^n,\tilde{y}_\calA^n)=P(\tilde{x}_i|v^n,\tilde{y}_\calA^n)$ by the Markov chains in \eqref{eq:Suff_Stats_Markov_1} and \eqref{eq:Suff_Stats_Markov_2}.
\end{itemize}
Therefore, rewriting \eqref{eq:System_Model_2} as \eqref{eq:Suff_Stats_Inp_Outp} preserves the distortion constraint. We now show that it also preserves the information leakage,
\begin{align}
\MI(X^n;M,\mathbf{Y}_\calB^n)&=\MI(X^n;V^n,\mathbf{Y}_\calB^n)\nonumber\\
&=\MI(X^n;V^n)+\MI(X^n;\mathbf{Y}_\calB^n|V^n)\nonumber\\
&\mathop = \limits^{(a)}\MI(X^n;V^n)+\MI(X^n;\mathbf{Y}_\calB^n,\tilde{Y}_\calB^n|V^n)\nonumber\\
&\mathop = \limits^{(b)}\MI(X^n;V^n)+\MI(X^n;\tilde{Y}_\calB^n|V^n)\nonumber\\
&= \MI(X^n;V^n,\tilde{Y}_\calB^n),\nonumber%
\end{align}where $(a)$ and $(b)$ follow from \eqref{eq:Suff_Stats_Markov_3} and \eqref{eq:Suff_Stats_Markov_4}, respectively.  
Next, when $\tr(\boldsymbol{\Sigma}_\calB^{-1})\le\tr(\boldsymbol{\Sigma}_\calA^{-1})$, we redefine $\tilde{Y}_\calB$ as,
\begin{align}
    \tilde{Y}_\calB&=\frac{h_\calB}{h_\calA}\tilde{Y}_\calA+N',\nonumber
\end{align}where $N'\sim\calN\left(0,\tr(\boldsymbol{\Sigma}_\calB^{-1})\left(1-\frac{\tr(\boldsymbol{\Sigma}_\calB^{-1})}{\tr(\boldsymbol{\Sigma}_\calA^{-1})}\right)\right)$. Note that after redefining $\tilde{Y}_\calB$,  the joint distribution between $X$ and $\tilde{Y}_\calA$ and the joint distribution between $X$ and $\tilde{Y}_\calB$ are preserved, therefore the constraints in Definition~\ref{eq:Defi_Achi_Rate} are preserved. As a result, we have the Markov chain $X-\tilde{Y}_\calA-\tilde{Y}_\calB$.
Hence, when $\tr(\boldsymbol{\Sigma}_\calB^{-1})\le\tr(\boldsymbol{\Sigma}_\calA^{-1})$, without loss of generality, we can suppose that $X-\tilde{Y}_\calA-\tilde{Y}_\calB$. In this first case, we informally say that the authorized set of users have better side information. Similarly, when $\tr(\boldsymbol{\Sigma}_\calA^{-1})<\tr(\boldsymbol{\Sigma}_\calB^{-1})$, we can suppose that $X-\tilde{Y}_\calB-\tilde{Y}_\calA$. In this second case, we informally say that the unauthorized set of users have a better side information. 

We now study each of these two cases separately in Section~\ref{subsec:Better_Authorized} and Section~\ref{subsec:Better_Unauthorized}.
\subsection{When Authorized Users Have Better Side Information}
\label{subsec:Better_Authorized}
Fix $\calA\in\bbA$, and $\calB\in\bbB$. Suppose $\tr(\boldsymbol{\Sigma}_\calB^{-1})\le\tr(\boldsymbol{\Sigma}_\calA^{-1})$. In this case, as discussed in Section~\ref{sec:Conversion_Scalar}, the union in Theorem~\ref{thm:Upper_Bound} can be taken over $(U,V,X,\tilde{Y}_\calA,\tilde{Y}_\calB)$ such that $U-V-X-\tilde{Y}_\calA-\tilde{Y}_\calB$.~Then,
\begin{align}
\Delta&\ge\MI(V;X)-\MI(V;\tilde{Y}_\calA|U)+\MI(X;\tilde{Y}_\calB|U)\nonumber\\
&\mathop = \limits^{(a)}\MI(V;X)-\MI(V;\tilde{Y}_\calA)+\MI(U;\tilde{Y}_\calA)+\MI(X;\tilde{Y}_\calB)-I(U;\tilde{Y}_\calB)\nonumber\\
&\mathop \ge \limits^{(b)} \MI(V;X)-\MI(V;\tilde{Y}_\calA)+\MI(X;\tilde{Y}_\calB)\nonumber\\
&\mathop = \limits^{(c)} \MI(V;X|\tilde{Y}_\calA)+\MI(X;\tilde{Y}_\calB),\nonumber%
\end{align}where $(a)$, $(b)$, and $(c)$ follow since $U-V-X-\tilde{Y}_\calA-\tilde{Y}_\calB$. This implies that the region in Theorem~\ref{thm:Upper_Bound} is included in the following region
\begin{align}
  \bigcap_{(\calA,\calB)\in(\bbA,\bbB)}\bigcup\limits_{\substack{V-X-\tilde{Y}_\calA-\tilde{Y}_\calB \\ \bbE\big[\sigma_{X|\tilde{Y}_\calA,V}^2\big]\le D}} \left.\begin{cases}(R,\Delta):\\
R>\MI(V;X|\tilde{Y}_\calA)\\
\Delta>\MI(V;X|\tilde{Y}_\calA)+\MI(X;\tilde{Y}_\calB)
\end{cases}\hspace{-0.3cm}\right\}.\label{eq:Upper_Bound_2}
\end{align}
Optimizing the rate and the leakage constraints in \eqref{eq:Upper_Bound_2} separately results in a larger region, i.e., an outer region. As a result, the region in \eqref{eq:Upper_Bound_2} is included in the following region,
\begin{align}
  \hspace{-0.3cm}\bigcap_{(\calA,\calB)\in(\bbA,\bbB)} \left.\begin{cases}(R,\Delta):\\
R>\min\limits_{\substack{V-X-\tilde{Y}_\calA-\tilde{Y}_\calB \\ \bbE\big[\sigma_{X|\tilde{Y}_\calA,V}^2\big]\le D}}\MI(V;X|\tilde{Y}_\calA)\\
\Delta>\min\limits_{\substack{V-X-\tilde{Y}_\calA-\tilde{Y}_\calB \\ \bbE\big[\sigma_{X|\tilde{Y}_\calA,V}^2\big]\le D}}\big[\MI(V;X|\tilde{Y}_\calA)+\MI(X;\tilde{Y}_\calB)\big]
\end{cases}\hspace{-0.3cm}\right\}.\label{eq:Upper_Bound_22}
\end{align}
Since the source is Gaussian the term $\MI(X;\tilde{Y}_\calB)$ is fixed, and we know that the term $\MI(V;X|\tilde{Y}_\calA)=\dent(X|\tilde{Y}_\calA)-\dent(X|\tilde{Y}_\calA,V)$ is minimized by joint Gaussian $(V,X,\tilde{Y}_\calA)$ \cite[Lemma~1]{Thomas87}. Hence, the region in \eqref{eq:Upper_Bound_22} is again included in the intersection of all $(\calA,\calB)\in(\bbA,\bbB)$, i.e., $ \bigcap_{(\calA,\calB)\in(\bbA,\bbB)}$ of the following region,
\begin{align}
  \left.\begin{cases}(R,\Delta):\\
R>\min\limits_{\sigma_{X|\tilde{Y}_\calA,V}^2\le D}\frac{1}{2}\log\frac{\sigma_{X|\tilde{Y}_\calA}^2}{\sigma_{X|\tilde{Y}_\calA,V}^2}\\
\Delta>\min\limits_{\sigma_{X|\tilde{Y}_\calA,V}^2\le D}\left[\frac{1}{2}\log\frac{\sigma_{X|\tilde{Y}_\calA}^2}{\sigma_{X|\tilde{Y}_\calA,V}^2}+\frac{1}{2}\log\frac{\sigma_X^2}{\sigma_{X|\tilde{Y}_\calB}^2}\right]\\
\end{cases}\hspace{-0.3cm}\right\}.\nonumber
\end{align}
From the monotonicity of the log function, the region above is included,
\begin{align}
   \bigcap_{(\calA,\calB)\in(\bbA,\bbB)}\left.\begin{cases}(R,\Delta):\\
R>\frac{1}{2}\log\frac{\sigma_{X|\tilde{Y}_\calA}^2}{D}\\
\Delta>\frac{1}{2}\log\frac{\sigma_{X|\tilde{Y}_\calA}^2}{D}+\frac{1}{2}\log\frac{\sigma_X^2}{\sigma_{X|\tilde{Y}_\calB}^2}\\
\end{cases}\hspace{-0.3cm}\right\}.\label{eq:Upper_Bound_4}
\end{align}
Now we have,
\begin{subequations}\label{eq:SXgivenYAB}
\begin{align}
    \sigma_{X|\tilde{Y}_\calA}^2&=\sigma_X^2-\frac{\sigma_{X,\tilde{Y}_\calA}^2}{\sigma_{\tilde{Y}_\calA}^2}\nonumber\\
    &\mathop = \limits^{(a)}\sigma_X^2-\frac{h_\calA^2\sigma_X^4}{h_\calA^2\sigma_X^2+\tr\left(\boldsymbol{\Sigma}_\calA^{-1}\right)}\nonumber\\
    &\mathop = \limits^{(b)}\sigma_X^2-\frac{\tr\left(\boldsymbol{\Sigma}_\calA^{-1}\right)\sigma_X^4}{\tr\left(\boldsymbol{\Sigma}_\calA^{-1}\right)\sigma_X^2+1}\nonumber\\
    &=\frac{\sigma_X^2}{\tr\left(\boldsymbol{\Sigma}_\calA^{-1}\right)\sigma_X^2+1},\label{eq:SXgivenYA}
\end{align}where 
\begin{itemize}
    \item[$(a)$] follows by calculating $\sigma_{X,\tilde{Y}_\calA}^2$ and $\sigma_{\tilde{Y}_\calA}^2$ from \eqref{eq:Suff_Stats_Inp_Outp};
    \item[$(b)$] follows since from \eqref{eq:Suff_Stats_Inp_Outp} we have $h_\calA=\tr\left(\boldsymbol{\Sigma}_\calA^{-1}\right)$.
\end{itemize}Similarly, we have
\begin{align}
    \sigma_{X|\tilde{Y}_\calB}^2&=\frac{\sigma_X^2}{\tr\left(\boldsymbol{\Sigma}_\calB^{-1}\right)\sigma_X^2+1}.\label{eq:SXgivenYB}
\end{align}
\end{subequations}
Hence, the region in \eqref{eq:Upper_Bound_4} can be written as the intersection of all $(\calA,\calB)\in(\bbA,\bbB)$, i.e., $ \bigcap_{(\calA,\calB)\in(\bbA,\bbB)}$ of the following region,
\begin{align}
   \left.\begin{cases}(R,\Delta):\\
R>\frac{1}{2}\log\frac{\sigma_X^2}{D\big(\tr\left(\boldsymbol{\Sigma}_\calA^{-1}\right)\sigma_X^2+1\big)}\\
\Delta>\frac{1}{2}\log\frac{\sigma_X^2}{D\big(\tr\left(\boldsymbol{\Sigma}_\calA^{-1}\right)\sigma_X^2+1\big)}+\frac{1}{2}\log\big(\tr\left(\boldsymbol{\Sigma}_\calB^{-1}\right)\sigma_X^2+1\big)\\
\end{cases}\hspace{-0.4cm}\right\}.\label{eq:Upper_Bound_6}
\end{align}Since the arguments of the log functions are decreasing in $\tr\left(\boldsymbol{\Sigma}_\calA^{-1}\right)$ and increasing in $\tr\left(\boldsymbol{\Sigma}_\calB^{-1}\right)$ we can compute the intersection in \eqref{eq:Upper_Bound_6} and rewrite the region in \eqref{eq:Upper_Bound_6} as follows,
\begin{align}
   \left.\begin{cases}(R,\Delta):\\
R>\frac{1}{2}\log\frac{\sigma_X^2}{D\big(\tr\big(\boldsymbol{\Sigma}_{\calA^\star}^{-1}\big)\sigma_X^2+1\big)}\\
\Delta>\frac{1}{2}\log\frac{\sigma_X^2}{D\big(\tr\big(\boldsymbol{\Sigma}_{\calA^\star}^{-1}\big)\sigma_X^2+1\big)}+\frac{1}{2}\log\big(\tr\big(\boldsymbol{\Sigma}_{\calB^\star}^{-1}\big)\sigma_X^2+1\big)\\
\end{cases}\hspace{-0.4cm}\right\},\nonumber
\end{align}where $\calA^\star\in \argmin\limits_{\calA\in\bbA}\{\tr\left(\boldsymbol{\Sigma}_\calA^{-1}\right)\}$ and $\calB^\star\in \argmax\limits_{\calB\in\bbB}\{\tr\left(\boldsymbol{\Sigma}_\calB^{-1}\right)\}$. Note that we also have $\Delta\ge\MI(X;\mathbf{Y}_{\calB^\star})=\frac{1}{2}\log\left(1+\sigma_X^2\tr\big(\boldsymbol{\Sigma}_{\calB^\star}^{-1}\big)\right)$, whence the definition of $g_1$, in Theorem~\ref{thm:Capacity}.

\subsection{When Unauthorized Users Have Better Side Information}
\label{subsec:Better_Unauthorized}
Fix $\calA\in\bbA$, $\calB\in\bbB$, and suppose that $\tr(\boldsymbol{\Sigma}_\calA^{-1})<\tr(\boldsymbol{\Sigma}_\calB^{-1})$. We will need the following~lemma.
\begin{lemma}\label{lemma:Conversion_XgYV_to_XgV}
Consider $Y=hX+N$, where $h$ is a constant, and $X$ and $N$ are independent, zero-mean Gaussian random variables with variance $\sigma_X^2$ and $\sigma_N^2$, respectively. When $D\le\sigma_{X|Y}^2$ and $V$ is Gaussian, we have
\begin{itemize}
    \item $\frac{\sigma_N^2}{h^2}-D>0$;
    \item $\sigma_{X|V,Y}^2\le D \Leftrightarrow \sigma_{X|V}^2\le(D^{-1}-h^2\sigma_N^{-2})^{-1}$;
    \item $\sigma_{X|V,Y}^2= D \Leftrightarrow \sigma_{X|V}^2= (D^{-1}-h^2\sigma_N^{-2})^{-1}$.
\end{itemize} 
\end{lemma}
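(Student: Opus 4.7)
The plan is to reduce everything to the scalar MMSE formula for Gaussian random variables and then solve the resulting algebraic inequalities.

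First, I would establish item~1 by an explicit computation. Using the standard formula for conditional variance of jointly Gaussian variables,
\begin{align}
\sigma_{X|Y}^2 = \sigma_X^2 - \frac{h^2\sigma_X^4}{h^2\sigma_X^2+\sigma_N^2} = \frac{1}{h^2/\sigma_N^2 + 1/\sigma_X^2}. \nonumber
\end{align}
Since $1/\sigma_X^2 > 0$, we have $\sigma_{X|Y}^2 < \sigma_N^2/h^2$ strictly, so the hypothesis $D \le \sigma_{X|Y}^2$ immediately yields $\sigma_N^2/h^2 - D > 0$, which proves item~1 and also guarantees that the quantity $D^{-1}-h^2\sigma_N^{-2}$ appearing on the right-hand side of items~2 and~3 is strictly positive, hence invertible.

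Next, for items~2 and~3, I would exploit the fact that in this section the outer region allows one to restrict attention to Gaussian auxiliaries, so $(V,X,Y)$ may be taken to be jointly Gaussian, with the Markov chain $V-X-Y$ inherited from the setting (since $V$ is a function of $X$ in the code construction, or more generally an auxiliary satisfying this Markov relation in the outer bound). Conditioning on $V=v$ then leaves a one-dimensional Gaussian estimation problem: $X\mid V$ is Gaussian with variance $\sigma_{X|V}^2$, and $Y = hX + N$ where $N$ remains independent of $(X,V)$. Applying the same MMSE formula used for item~1 to this conditional model gives
\begin{align}
\sigma_{X|V,Y}^2 \;=\; \frac{\sigma_{X|V}^2\,\sigma_N^2}{h^2\sigma_{X|V}^2+\sigma_N^2} \;=\; \frac{1}{h^2/\sigma_N^2 + 1/\sigma_{X|V}^2}. \nonumber
\end{align}

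The remaining steps are algebraic. Starting from $\sigma_{X|V,Y}^2 \le D$, I would invert and rearrange:
\begin{align}
\sigma_{X|V,Y}^2 \le D
&\;\Longleftrightarrow\; h^2/\sigma_N^2 + 1/\sigma_{X|V}^2 \ge 1/D \nonumber\\
&\;\Longleftrightarrow\; 1/\sigma_{X|V}^2 \ge D^{-1} - h^2\sigma_N^{-2} \nonumber\\
&\;\Longleftrightarrow\; \sigma_{X|V}^2 \le \bigl(D^{-1}-h^2\sigma_N^{-2}\bigr)^{-1}, \nonumber
\end{align}
where the last equivalence uses item~1 to ensure the right-hand side of the second line is strictly positive, so the inversion preserves the direction of the inequality. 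This proves item~2, and replacing the inequality by an equality throughout the same chain of equivalent manipulations proves item~3.

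The only subtle point I anticipate is justifying the reduction to jointly Gaussian $(V,X,Y)$ with $V-X-Y$: one needs the Markov chain (which is consistent with how $V$ enters the converse chain established earlier in this section) and needs the distribution of $V$ to be Gaussian so that $\sigma_{X|V}^2$ actually equals the conditional variance and the MMSE-formula step is exact rather than an inequality. Since Section~\ref{subsec:Better_Authorized} already invoked \cite[Lemma~1]{Thomas87} to restrict to Gaussian auxiliaries, the same restriction is legitimate here, and the rest is straightforward algebra.
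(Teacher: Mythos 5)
Your proposal is correct and takes essentially the same route as the paper: both reduce the lemma to the identity $\sigma_{X|V,Y}^2=\frac{\sigma_N^2\sigma_{X|V}^2}{h^2\sigma_{X|V}^2+\sigma_N^2}$ and then rearrange algebraically, using the first item to guarantee that $D^{-1}-h^2\sigma_N^{-2}>0$ so the inversion preserves the inequality. The only cosmetic difference is that you obtain this identity by conditioning on $V=v$ and applying the scalar MMSE formula to the conditional model, whereas the paper derives it by explicitly inverting the $2\times2$ covariance matrix of $(V,Y)$; the underlying assumptions (joint Gaussianity of $(V,X,Y)$ and independence of $N$ from $(X,V)$) are the same in both arguments.
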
The proof of Lemma~\ref{lemma:Conversion_XgYV_to_XgV} is provided in Appendix~\ref{proof:Conversion_XgYV_to_XgV}.

As discussed in Section~\ref{sec:Conversion_Scalar}, the union in Theorem~\ref{thm:Upper_Bound} can be taken over $(U,V,X,\tilde{Y}_\calA,\tilde{Y}_\calB)$ such that $U-V-X-\tilde{Y}_\calB-\tilde{Y}_\calA$. Similar to Section~\ref{subsec:Better_Authorized}, optimizing the rate and equivocation constraints in Theorem~\ref{thm:Upper_Bound} separately results in a larger region, i.e., an outer region. Optimizing the rate in Theorem~\ref{thm:Upper_Bound} yields,
\begin{align}
R&\ge \min\limits_{\substack{V-X-\tilde{Y}_\calA \\ \bbE\big[\sigma_{X|\tilde{Y}_\calA,V}^2\big]\le D}}\MI(V;X|\tilde{Y}_\calA)\nonumber\\
&\ge\frac{1}{2}\log\frac{\sigma_X^2}{D\big(\tr\left(\boldsymbol{\Sigma}_\calA^{-1}\right)\sigma_X^2+1\big)},\label{eq:Second_Case_First_Term}
\end{align}where the last inequality holds as in the derivation of \eqref{eq:Upper_Bound_6}.

Now optimizing the information leakage constraint in Theorem~\ref{thm:Upper_Bound} yields,
\begin{align}
\Delta&>\min\limits_{\substack{U-V-X-\tilde{Y}_\calB-\tilde{Y}_\calA \\ \bbE\big[\sigma_{X|\tilde{Y}_\calA,V}^2\big]\le D}}\big[\MI(V;X)-\MI(V;\tilde{Y}_\calA|U)+\MI(X;\tilde{Y}_\calB|U)\big]\nonumber\\
&\mathop = \limits^{(a)}\min\limits_{\substack{U-V-X-\tilde{Y}_\calB-\tilde{Y}_\calA \\ \bbE\big[\sigma_{X|\tilde{Y}_\calA,V}^2\big]\le D}}\big[\MI(V;X)-I(V;\tilde{Y}_\calA)+\MI(U;\tilde{Y}_\calA)+\nonumber\\
&\qquad\MI(X;\tilde{Y}_\calB)-\MI(U;\tilde{Y}_\calB)\big]\nonumber\\
&\mathop = \limits^{(b)}\MI(X;\tilde{Y}_\calB)+\min\limits_{\substack{U-V-X-\tilde{Y}_\calB-\tilde{Y}_\calA \\ \bbE\big[\sigma_{X|\tilde{Y}_\calA,V}^2\big]\le D}}\big[\MI(V;X)-\MI(V;\tilde{Y}_\calA)-\nonumber\\
&\qquad\MI(U;\tilde{Y}_\calB|\tilde{Y}_\calA)\big]\nonumber\\
&\mathop\ge \limits^{(c)} \MI(X;\tilde{Y}_\calB)+\min\limits_{\substack{U-V-X-\tilde{Y}_\calB-\tilde{Y}_\calA \\ \bbE\big[\sigma_{X|\tilde{Y}_\calA,V}^2\big]\le D}}\big[\MI(V;X)-\MI(V;\tilde{Y}_\calA)-\nonumber\\
&\qquad\MI(V;\tilde{Y}_\calB|\tilde{Y}_\calA)\big]\nonumber\\
&= \MI(X;\tilde{Y}_\calB)+\min\limits_{\substack{V-X-\tilde{Y}_\calB-\tilde{Y}_\calA \\ \bbE\big[\sigma_{X|\tilde{Y}_\calA,V}^2\big]\le D}}\big[\MI(V;X)-I(V;\tilde{Y}_\calA,\tilde{Y}_\calB)\big]\nonumber\\
&\mathop = \limits^{(d)} \MI(X;\tilde{Y}_\calB)+\min\limits_{\substack{V-X-\tilde{Y}_\calB-\tilde{Y}_\calA \\ \bbE\big[\sigma_{X|\tilde{Y}_\calA,V}^2\big]\le D}}\big[\MI(V;X)-\MI(V;\tilde{Y}_\calB)\big]\nonumber\\
&\mathop = \limits^{(e)} \min\limits_{\substack{V-X-\tilde{Y}_\calB-\tilde{Y}_\calA \\ \bbE\big[\sigma_{X|\tilde{Y}_\calA,V}^2\big]\le D}}C(V),\label{eq:Cost_Func}
\end{align}where
\begin{itemize}
    \item[$(a)$] follows since $U-V-X-\tilde{Y}_\calB-\tilde{Y}_\calA$ forms a Markov Chain;
    \item[$(b)$] follows since $U-\tilde{Y}_\calB-\tilde{Y}_\calA$ forms a Markov Chain;
    \item[$(c)$] follows since $U-V-\tilde{Y}_\calB-\tilde{Y}_\calA$ forms a Markov Chain;
    \item[$(d)$] follows since $V-\tilde{Y}_\calB-\tilde{Y}_\calA$ forms a Markov Chain;
    \item[$(e)$] follows by defining,
    \begin{align}
    C(V)&\triangleq\MI(X;\tilde{Y}_\calB)+\MI(V;X)-\MI(V;\tilde{Y}_\calB)\nonumber\\
    &=\dent(\tilde{Y}_\calB|V)-\dent(X|V)+k_1\nonumber\\
        &=\dent\left(\frac{1}{h_\calB}\tilde{Y}_\calB\Big|V\right)-\dent(X|V)+\log\abs{h_\calB}+k_1,\nonumber\\
        &=\dent\left(\frac{1}{h_\calB}\tilde{Y}_\calB\Big|V\right)-\dent(X|V)+k_2,\label{eq:cost_function}%
    \end{align}where $k_1\triangleq\MI(X;\tilde{Y}_\calB)+\dent(\tilde{Y}_\calB)-\dent(X)$ is a constant which is independent of $V$ and $k_2\triangleq \log\abs{h_\calB}+k_1$.
\end{itemize}
\begin{lemma}\label{lemma:Minimizer}
    When $X$ and $\tilde{Y}_\calB$ are Gaussian random variables, as defined in \eqref{eq:Suff_Stats_Coeffies}, and $V-X-\tilde{Y}_\calB$ forms a Markov chain
    \begin{align}
        C(V)
        &\triangleq\dent\left(\frac{1}{h_\calB}\tilde{Y}_\calB\Big|V\right)-\dent(X|V),\nonumber
    \end{align}
    is minimized when the auxiliary random variable $V$ is a Gaussian random variable.
\end{lemma}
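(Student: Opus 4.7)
My plan is to reduce the minimisation of $C(V)$ to that of a conditional differential entropy and then invoke a Gaussian extremality argument. First, observe that the rescaled sufficient statistic $Y \triangleq \tilde{Y}_\calB/h_\calB$ satisfies $Y = X + N'$, where $N' \sim \calN(0, 1/h_\calB)$ is independent of $(X, V)$ (using the Markov chain $V - X - \tilde{Y}_\calB$ together with the WLOG independence of $\tilde{N}_\calA$ and $\tilde{N}_\calB$). Applying the chain rule for differential entropy to $\dent(X, Y|V)$ two ways yields
\[ C(V) = \dent(Y|V) - \dent(X|V) = \dent(N') - \dent(X|V, Y), \]
so minimising $C(V)$ is equivalent to maximising $\dent(X|V, Y)$. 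The conditional maximum-entropy principle applied pointwise, combined with Jensen's inequality, then gives
\[ \dent(X|V, Y) \leq \tfrac{1}{2}\log\!\bigl(2\pi e\, \bbE[\sigma^2_{X|V, Y}]\bigr), \]
with equality iff $X|V, Y$ is a.s.\ Gaussian with a.s.\ constant conditional variance.

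Next, I would verify that the Gaussian auxiliary $V^* = X + U^*$, with $U^* \sim \calN(0, \gamma^*)$ independent of $X$ and $\gamma^*$ chosen so that $\sigma^2_{X|V^*, \tilde{Y}_\calA} = D$ (equivalently $1/\gamma^* = 1/D - 1/\sigma_X^2 - h_\calA$), attains both the max-entropy bound and the distortion constraint with equality. Since $(V^*, X, Y, \tilde{Y}_\calA)$ is then jointly Gaussian, $X|V^*, Y$ is Gaussian with constant variance, and the Gaussian conditional variance recursion gives $\sigma^2_{X|V^*, Y} = (1/D - h_\calA + h_\calB)^{-1}$, leading to an explicit $C(V^*)$ matching the expression for $g_2$ in Theorem~\ref{thm:Capacity}.

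The main obstacle is to show that no non-Gaussian feasible $V$ attains a strictly larger $\dent(X|V, Y)$ than this Gaussian benchmark. The naive bound $\bbE[\sigma^2_{X|V, Y}] \leq \bbE[\sigma^2_{X|V, \tilde{Y}_\calA}] \leq D$ (from the Markov chain $X - Y - \tilde{Y}_\calA$ together with ``conditioning reduces variance'') is too loose, since the Gaussian value $(1/D - h_\calA + h_\calB)^{-1}$ is strictly smaller than $D$ in the present regime $h_\calB > h_\calA$. To close this gap I would adapt the Gaussian extremality argument of \cite[Lemma~1]{Thomas87} already invoked in Section~\ref{subsec:Better_Authorized}: applied with side information $Y$ instead of $\tilde{Y}_\calA$, it shows that $\MI(V; X|Y) = \dent(X|Y) - \dent(X|V, Y)$ is minimised over $V$ by a jointly Gaussian $(V, X, Y)$ under a matching MMSE constraint. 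Since $\MI(V; X|Y) = C(V) - [\dent(Y) - \dent(X)]$ by the chain rule and the Markov chain $V - X - Y$, this gives the desired Gaussian optimality. The remaining technicality is to translate the distortion constraint $\bbE[\sigma^2_{X|V, \tilde{Y}_\calA}] \leq D$ from $\tilde{Y}_\calA$ to $Y$, which is carried out using the Markov chain $X - Y - \tilde{Y}_\calA$ (valid in this regime by Section~\ref{sec:Conversion_Scalar}) combined with the Gaussian conditional variance recursion, ensuring that $V^*$ remains feasible with a matching effective constraint on $\bbE[\sigma^2_{X|V, Y}]$ and thus realises the maximum of $\dent(X|V, Y)$, whence $C(V) \geq C(V^*)$ for every feasible $V$.
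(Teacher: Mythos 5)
Your reduction $C(V)=\dent(Y|V)-\dent(X|V)=\dent(N')-\dent(X|V,Y)$ (with $Y\triangleq\tilde{Y}_\calB/h_\calB=X+N'$) and the subsequent max-entropy/Jensen bound are correct, and they do give a cleaner route than the paper, which instead constructs, via a de Bruijn identity and the intermediate value theorem, a Gaussian $\bar{V}$ with $C(\bar V)=C(V)$ and then checks its feasibility. You also correctly identify that the naive bound $\bbE[\sigma_{X|V,Y}^2]\le\bbE[\sigma_{X|V,\tilde{Y}_\calA}^2]\le D$ is too loose. The problem is that the step you defer as a "remaining technicality" -- translating the constraint $\bbE[\sigma_{X|V,\tilde{Y}_\calA}^2]\le D$ into the tight effective constraint $\bbE[\sigma_{X|V,Y}^2]\le\big(D^{-1}-h_\calA+h_\calB\big)^{-1}$ for an \emph{arbitrary feasible} $V$ -- is the entire substance of the lemma, and the tool you propose for it does not work. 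The ``Gaussian conditional variance recursion'' $\sigma_{X|V,Y}^{-2}=\sigma_{X|V}^{-2}+h_\calB$ is valid only when $(V,X)$ is jointly Gaussian (this is exactly why Lemma~\ref{lemma:Conversion_XgYV_to_XgV} carries the hypothesis that $V$ is Gaussian); the Markov chain $X-Y-\tilde{Y}_\calA$ by itself only yields the loose bound you already discarded. So as written, your argument establishes $C(V)\ge C(V^\star)$ only for Gaussian $V$, which is circular.

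The translated constraint is in fact true for general $V$, but proving it requires precisely the conditional Fisher-information machinery the paper develops: with noise variances $s_\calA=1/h_\calA\ge s_\calB=1/h_\calB$, Lemma~\ref{lemma:Cond_Cov_FI} gives $\bbE[\sigma_{X|V,X+N_i}^2]=s_i-s_i^2\,\FI(X+N_i|V)$, and the conditional Fisher-information inequality of Lemma~\ref{lemma:FI_Variance} (i.e., \cite[Lemma~18]{EkremUlukus13_Lossy}) gives $\FI(X+N_\calB|V)^{-1}-s_\calB\le\FI(X+N_\calA|V)^{-1}-s_\calA$; combining the two and using $\bbE[\sigma_{X|V,\tilde{Y}_\calA}^2]\le D$ yields exactly $\bbE[\sigma_{X|V,Y}^2]\le\big(D^{-1}-h_\calA+h_\calB\big)^{-1}$, with no Gaussianity assumption on $V$. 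If you insert that derivation (or an equivalent conditional single-crossing/MMSE-degradation argument), your max-entropy route becomes a complete and arguably more direct proof of the lemma; without it, the proposal has a genuine gap at its decisive step.
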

\begin{proof}
    The proof follows from the extremal inequality \cite{Liu07} and \cite[Theorem~1]{WangChen13}. For completeness, we prove Lemma~\ref{lemma:Minimizer} in Appendix~\ref{proof:Optimal_is_gaussian}.
\end{proof} 
Hence, we can rewrite the \ac{RHS} of \eqref{eq:Cost_Func} as follows,
\begin{align}
   &\min\limits_{\substack{V-X-\tilde{Y}_\calB-\tilde{Y}_\calA \\ \bbE\big[\sigma_{X|\tilde{Y}_\calA,V}^2\big]\le D}} C(V)\nonumber\\
   &\mathop = \limits^{(a)}\min\limits_{\substack{V-X-\tilde{Y}_\calB-\check{Y}_\calA \nonumber\\
V\,\,\text{is Gaussian}\\\sigma_{X|V,\check{Y}_\calA}^2\le D}}
\Big[\MI(X;\tilde{Y}_\calB)+\MI(V;X)-\MI(V;\tilde{Y}_\calB)\Big]\nonumber\\
&\mathop = \limits^{(b)}\min\limits_{\substack{V-X-\tilde{Y}_\calB-\check{Y}_\calA \nonumber\\
V\,\,\text{is Gaussian}\\\sigma_{X|V}^2\le F_\calA(D)}}
\Big[\MI(X;\tilde{Y}_\calB)+\MI(V;X)-\MI(V;\tilde{Y}_\calB)\Big]\nonumber\\
&\mathop = \limits^{(c)}\min\limits_{\substack{\sigma_{X|V}^2\le F_\calA(D)}}
\left[\frac{1}{2}\log\frac{h_\calB^2\sigma_X^2+\tilde{\sigma}_\calB^2}{\tilde{\sigma}_\calB^2}+\frac{1}{2}\log\frac{\sigma_X^2}{\sigma_{X|V}^2}-\right.\nonumber\\
&\qquad\left.\frac{1}{2}\log\frac{h_\calB^2\sigma_X^2+\tilde{\sigma}_\calB^2}{h_\calB^2\sigma_{X|V}^2+\tilde{\sigma}_\calB^2}\right]\nonumber\\
&\mathop \ge \limits^{(d)}\frac{1}{2}\log\frac{h_\calB^2\sigma_X^2+\tilde{\sigma}_\calB^2}{\tilde{\sigma}_\calB^2}+\frac{1}{2}\log\frac{\sigma_X^2}{F_\calA(D)}-\nonumber\\
&\qquad\frac{1}{2}\log\frac{h_\calB^2\sigma_X^2+\tilde{\sigma}_\calB^2}{h_\calB^2F_\calA(D)+\tilde{\sigma}_\calB^2}\nonumber\\
&\mathop = \limits^{(e)} \frac{1}{2}\log\left(\tr\left(\boldsymbol{\Sigma}_\calB^{-1}\right)F_\calA(D)+1\right)+\frac{1}{2}\log\frac{\sigma_X^2}{F_\calA(D)},\label{eq:Second_Case_Second_Term}
\end{align}where 
\begin{itemize}
    \item[$(a)$] follows from Lemma~\ref{lemma:Minimizer};
    \item[$(b)$] follows from Lemma~\ref{lemma:Conversion_XgYV_to_XgV} with \begin{align}
        F_\calA(D)&\triangleq\frac{\tilde{\sigma}_\calA^2D}{\tilde{\sigma}_\calA^2-h_\calA^2D}=\frac{D}{1-\tr\big(\boldsymbol{\Sigma}_{\calA^\star}^{-1}\big)D};
    \end{align}
    \item[$(c)$] follows by calculating the mutual information of the random variables defined in \eqref{eq:Suff_Stats_Inp_Outp} using the fact that $\tilde{N}_\calB$ is independent of $(X,V)$, specifically,
    \begin{align}
        &\MI(V;\tilde{Y}_\calB)=\dent(\tilde{Y}_\calB)-\dent(\tilde{Y}_\calB|V)\nonumber\\
        &=\frac{1}{2}\log\left(2\pi e(h_\calB^2\sigma_X^2+\tilde{\sigma}_\calB^2)\right)-\nonumber\\
        &\qquad\frac{1}{2}\log\left(2\pi e(h_\calB^2\sigma_{X|V}^2+\tilde{\sigma}_\calB^2)\right)\nonumber\\
        &=\frac{1}{2}\log\frac{h_\calB^2\sigma_X^2+\tilde{\sigma}_\calB^2}{h_\calB^2\sigma_{X|V}^2+\tilde{\sigma}_\calB^2};\nonumber
    \end{align}
    \item[$(d)$] follows since $\frac{h_\calB^2\sigma_{X|V}^2+\tilde{\sigma}_\calB^2}{\sigma_{X|V}^2}$ is a monotonically decreasing function in $\sigma_{X|V}^2$;
    \item[$(e)$] follows since from \eqref{eq:Suff_Stats_Inp_Outp}, $h_\calB=\tilde{\sigma}_\calB^2=\tr\left(\boldsymbol{\Sigma}_\calB^{-1}\right)$.
\end{itemize}Therefore, by \eqref{eq:Second_Case_First_Term}, \eqref{eq:Cost_Func}, and \eqref{eq:Second_Case_Second_Term}, the region in Theorem~\ref{thm:Upper_Bound} is included in the intersection of all $(\calA,\calB)\in(\bbA,\bbB)$, i.e., $ \bigcap_{(\calA,\calB)\in(\bbA,\bbB)}$ of the following region,
\begin{align}
   \left.\begin{cases}(R,\Delta):\\
R>\frac{1}{2}\log\frac{\sigma_X^2}{D\big(\tr\left(\boldsymbol{\Sigma}_\calA^{-1}\right)\sigma_X^2+1\big)}\\
\Delta>\frac{1}{2}\log\left(\tr\left(\boldsymbol{\Sigma}_\calB^{-1}\right)F_\calA(D)+1\right)+\frac{1}{2}\log\frac{\sigma_X^2}{F_\calA(D)}\\
\end{cases}\right\}.\label{eq:Upper_Bound_Sec_Case}
\end{align}
Then, the region in \eqref{eq:Upper_Bound_Sec_Case} is included in the following region, 
\begin{align}
   \left.\begin{cases}(R,\Delta):\\
R>\max\limits_{\calA\in\bbA}\frac{1}{2}\log\frac{\sigma_X^2}{D\big(\tr\left(\boldsymbol{\Sigma}_\calA^{-1}\right)\sigma_X^2+1\big)}\\
\Delta>\max\limits_{(\calA,\calB)\in(\bbA,\bbB)}\frac{1}{2}\log\left(\tr\left(\boldsymbol{\Sigma}_\calB^{-1}\right)\sigma_X^2+\frac{\sigma_X^2}{F_\calA(D)}\right)\\
\end{cases}\right\}.\nonumber
\end{align}
Since the arguments of the $\log$ function for the bound on $R$ is decreasing in $\tr\left(\boldsymbol{\Sigma}_\calA^{-1}\right)$ and the argument of the $\log$ function for the bound on $\Delta$ is decreasing in $\tr\left(\boldsymbol{\Sigma}_\calA^{-1}\right)$ and increasing in $\tr\left(\boldsymbol{\Sigma}_\calB^{-1}\right)$, we can compute rewrite the region in \eqref{eq:Upper_Bound_Sec_Case} as follows,
\begin{align}
   \left.\begin{cases}(R,\Delta):\\
R>\frac{1}{2}\log\frac{\sigma_X^2}{D\big(\tr\big(\boldsymbol{\Sigma}_{\calA^\star}^{-1}\big)\sigma_X^2+1\big)}\\
\Delta>\frac{1}{2}\log\left(\tr\left(\boldsymbol{\Sigma}_{\calB^\star}^{-1}\right)\sigma_X^2+\frac{\sigma_X^2}{F_{\calA^\star}(D)}\right)
\end{cases}\right\},\nonumber
\end{align}where $\calA^\star\in \argmin\limits_{\calA\in\bbA}\{\tr\left(\boldsymbol{\Sigma}_\calA^{-1}\right)\}$, $\calB^\star\in \argmax\limits_{\calB\in\bbB}\{\tr\left(\boldsymbol{\Sigma}_\calB^{-1}\right)\}$, and $F_{\calA^\star}(D)=\frac{D}{1-\tr\big(\boldsymbol{\Sigma}_{\calA^\star}^{-1}\big)D}$.

Note that, from \eqref{eq:Equiv}, we have $\Delta\ge\MI(X;\mathbf{Y}_\calB)=\frac{1}{2}\log\left(1+\frac{\sigma_X^2}{\tr\big(\boldsymbol{\Sigma}_{\calB^\star}^{-1}\big)^{-1}}\right)$.
Therefore, we have the following bound on the leakage rate,
\begin{align}
&\Delta\ge\max\left\{\frac{1}{2}\log\left(1+\frac{\sigma_X^2}{\tr\big(\boldsymbol{\Sigma}_{\calB^\star}^{-1}\big)^{-1}}\right),\right.\nonumber\\
&\left.\frac{1}{2}\log\left(\frac{\sigma_X^2}{D}+\frac{\sigma_X^2}{\tr\big(\boldsymbol{\Sigma}_{\calB^\star}^{-1}\big)^{-1}}-\frac{\sigma_X^2}{\tr\big(\boldsymbol{\Sigma}_{\calA^\star}^{-1}\big)^{-1}}\right)\right\},  \nonumber%
\end{align}which can be written as $g_2\big(\calA^\star,\calB^\star\big)$ defined in Theorem~\ref{thm:Capacity}.

\section{Conclusion}
\label{sec:Conclusion}
In this paper, we study a secure source coding problem with multiple users when the encoder and the users observe copies of correlated scalar Gaussian random variables. Specifically, the objective is to guarantee a given distortion level of recovery of the source for some sets of authorized users and simultaneously minimize information leakage about the source for some other sets of users. This can be seen as a secret-sharing problem where perfect reconstruction of the secret is relaxed to an approximate reconstruction, and the perfect security requirement is relaxed to controlled information leakage. 
Our main result is the characterization of the optimal trade-off between the source compression rate, the desired distortion, and the information leakage for this problem. We note that characterizing this trade-off when the source is a vector Gaussian random variable is an open problem. 

\begin{appendices}

\section{Proof of Equation~\ref{eq:Y_X_Relation_General}}
\label{proof:System_Model}
For the sake of completeness, we present the following theorem from \cite[Theorem~3.5.2]{Gallager_Stochastic_Book}, which is essential in our proof.
\begin{theorem}[\cite{Gallager_Stochastic_Book}]
\label{thm:Gallager}
Let $\mathbf{X}$ and $\mathbf{Y}$ be zero-mean, jointly Gaussian, and jointly non-singular. Then $\mathbf{X}$ can be expressed as $\mathbf{X} = \mathbf{G}\mathbf{Y} + \mathbf{V}$, where $\mathbf{V}$ is statistically independent of $\mathbf{Y}$ and
\begin{align}
    \mathbf{G} &= \mathbf{K}_{XY}\mathbf{K}_Y^{-1}\nonumber\\
    \mathbf{K}_V &= \mathbf{K}_X - \mathbf{K}_{XY}\mathbf{K}_Y^{-1}\mathbf{K}_{XY}^\intercal.\nonumber
\end{align}
\end{theorem}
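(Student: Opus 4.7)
The plan is to construct the claimed decomposition explicitly and then verify its defining properties by a direct second-order calculation---essentially the Gaussian orthogonality principle. First, I would \emph{define} $\mathbf{G} \triangleq \mathbf{K}_{XY}\mathbf{K}_Y^{-1}$, which is well-defined because joint non-singularity of $(\mathbf{X},\mathbf{Y})$ forces $\mathbf{K}_Y$ to be invertible, and then set $\mathbf{V} \triangleq \mathbf{X} - \mathbf{G}\mathbf{Y}$, so that the decomposition $\mathbf{X} = \mathbf{G}\mathbf{Y} + \mathbf{V}$ holds as an identity. What remains is to establish $\mathbf{V}\indep\mathbf{Y}$ and the stated closed form for $\mathbf{K}_V$.

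For independence, I would observe that
\begin{align}
\begin{pmatrix} \mathbf{V} \\ \mathbf{Y} \end{pmatrix} = \begin{pmatrix} \mathbf{I} & -\mathbf{G} \\ \mathbf{0} & \mathbf{I} \end{pmatrix}\begin{pmatrix} \mathbf{X} \\ \mathbf{Y} \end{pmatrix},\nonumber
\end{align}
so $(\mathbf{V},\mathbf{Y})$ is jointly Gaussian and zero-mean; hence, to conclude independence it suffices to show they are uncorrelated. A one-line computation gives $\mathbb{E}[\mathbf{V}\mathbf{Y}^\intercal] = \mathbf{K}_{XY} - \mathbf{K}_{XY}\mathbf{K}_Y^{-1}\mathbf{K}_Y = \mathbf{0}$, completing this step.

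For the covariance formula, once $\mathbb{E}[\mathbf{V}\mathbf{Y}^\intercal]=\mathbf{0}$ is in hand, I would compute
\begin{align}
\mathbf{K}_V = \mathbb{E}[\mathbf{V}\mathbf{V}^\intercal] = \mathbb{E}\bigl[\mathbf{V}(\mathbf{X}-\mathbf{G}\mathbf{Y})^\intercal\bigr] = \mathbb{E}[\mathbf{V}\mathbf{X}^\intercal] = \mathbb{E}\bigl[(\mathbf{X}-\mathbf{G}\mathbf{Y})\mathbf{X}^\intercal\bigr],\nonumber
\end{align}
which expands to $\mathbf{K}_X - \mathbf{K}_{XY}\mathbf{K}_Y^{-1}\mathbf{K}_{XY}^\intercal$, the claimed expression.

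The argument is essentially algorithmic and there is no substantive obstacle; it is the Gram--Schmidt construction for the random-vector inner product $\langle \mathbf{U}_1,\mathbf{U}_2\rangle\triangleq \mathbb{E}[\mathbf{U}_1\mathbf{U}_2^\intercal]$ applied to project $\mathbf{X}$ onto the linear span of $\mathbf{Y}$. The only points worth flagging are (i) invoking joint non-singularity at the very start to ensure $\mathbf{K}_Y^{-1}$ exists, and (ii) justifying the passage from ``uncorrelated'' to ``independent'' through the joint Gaussianity of $(\mathbf{V},\mathbf{Y})$---not merely the marginal Gaussianity of $\mathbf{V}$---which is precisely why exhibiting the explicit linear transformation above is the key step rather than a cosmetic one.
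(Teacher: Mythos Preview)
Your argument is correct and is the standard textbook proof of this classical fact. Note, however, that the paper does not actually prove this theorem: it is stated ``for the sake of completeness'' as a direct citation of \cite[Theorem~3.5.2]{Gallager_Stochastic_Book} and is then immediately invoked to derive \eqref{eq:Y_X_Relation_General}, with no proof given in the paper itself. So there is nothing to compare against beyond observing that your argument is exactly the orthogonality-principle proof one finds in Gallager's book.
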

For every $\calA\in\bbA$, by Theorem~\ref{thm:Gallager}, we have
\begin{align}
    \mathbf{Y}_\calA=\boldsymbol{\Sigma}_{X\mathbf{Y}_\calA}\sigma_X^{-2} X+\mathbf{N}'_\calA,\label{eq:eqution_3}
\end{align}where $\boldsymbol{\Sigma}_{\mathbf{N}'_\calA}\triangleq\boldsymbol{\Sigma}_{\mathbf{Y}_\calA}-\boldsymbol{\Sigma}_{X\mathbf{Y}_\calA}\sigma_X^{-2}\boldsymbol{\Sigma}_{X\mathbf{Y}_\calA}^\intercal$ and $\boldsymbol{\Sigma}_{\mathbf{N}'_\calA}\succ 0$ from \cite[Eq.~(3.43)]{Gallager_Stochastic_Book}, since the covariance matrix $\boldsymbol{\Sigma}_{\mathbf{N}'_\calA}$ is invertible. Next, we normalize \eqref{eq:eqution_3} as follows. From Cholesky decomposition, there exists an invertible matrix $\mathbf{C}\in\bbR^{|\calA|\times|\calA|}$ such that $\boldsymbol{\Sigma}_{\mathbf{N}'_\calA}=\mathbf{C}\mathbf{C}^\intercal$, therefore, we can rewrite \eqref{eq:eqution_3} as follows
\begin{align}
    \mathbf{Y}'_\calA=\mathbf{h}_\calA X+\mathbf{N}''_\calA,\nonumber
\end{align}where $\mathbf{Y}'_\calA\triangleq\mathbf{C}^{-1}\mathbf{Y}_\calA$, $\mathbf{h}_\calA\triangleq\mathbf{C}^{-1}\boldsymbol{\Sigma}_{X\mathbf{Y}_\calA}\sigma_X^{-2}$, and $\mathbf{N}''_\calA\sim\calN(0,\mathbf{I}_{|\calA|})$. Similarly, for every $\calB\in\bbB$, one can show that
\begin{align}
    \mathbf{Y}'_\calB=\mathbf{h}_\calB X+\mathbf{N}''_\calB,\nonumber
\end{align}where $\mathbf{Y}'_\calB\triangleq\mathbf{C}^{-1}\mathbf{Y}_\calB$, $\mathbf{h}_\calB\triangleq\mathbf{C}^{-1}\boldsymbol{\Sigma}_{X\mathbf{Y}_\calB}\sigma_X^{-2}$, and $\mathbf{N}''_\calB\sim\calN(0,\mathbf{I}_{|\calB|})$.

\section{Proof of Theorem~\ref{thm:Threshold_Capacity} and Theorem~\ref{thm:Threshold_Capacity_2}}
\label{proof:Threshold_Capacity}
We first show that, there exist sets of authorized users  $\calA_t^\star\in\argmin\limits_{\calA\in\bbA_t}\tr\left(\boldsymbol{\Sigma}_{\calA}^{-1}\right)$ and unauthorized users $\calB_t^\star\in\argmax\limits_{\calB\in\bbB_t}\tr\left(\boldsymbol{\Sigma}_{\calB}^{-1}\right)$ such that for any $t\in\intseq{1}{L-1}$, $\calA_t^\star\subset\calA_{t+1}^\star$ and $\calB_t^\star\subset\calB_{t+1}^\star$. Then, by using Theorem~\ref{thm:Capacity}, we remark that $\calA_t^\star$ and $\calB_t^\star$ also correspond to the sets that appear in the expression of the optimal compression and the leakage rates for the threshold access structure $\bbA_t$. Ultimately, using the monotonicity of the sets $\big(\calA_t^\star\big)_{t\in\intseq{1}{L}}$ and $\big(\calB_t^\star\big)_{t\in\intseq{1}{L}}$ and Theorem~\ref{thm:Capacity}, we compute necessary and sufficient conditions to determine whether the optimal compression and leakage rates increase or decrease with the threshold $t$. 
\begin{lemma}
\label{lemma:Threshold}
There exist sets $\big(\calA_t^\star\big)_{t\in\intseq{1}{L}}$ and $\big(\calB_t^\star\big)_{t\in\intseq{1}{L}}$ such that, for any $t\in\intseq{1}{L-1}$, we have $\calA_t^\star\subset\calA_{t+1}^\star$ and $\calB_t^\star\subset\calB_{t+1}^\star$, and for any $t\in\intseq{1}{L}$, $\calA_t^\star\in\argmin\limits_{\calA\in\bbA_t}\tr\left(\boldsymbol{\Sigma}_{\calA}^{-1}\right)$ and $\calB_t^\star\in\argmax\limits_{\calB\in\bbB_t}\tr\left(\boldsymbol{\Sigma}_{\calB}^{-1}\right)$.
\end{lemma}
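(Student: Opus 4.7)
The plan is to exploit the fact that under the model of Section~\ref{sec:Prob_Defi} the noise covariance matrices $\boldsymbol{\Sigma}_\calA$ and $\boldsymbol{\Sigma}_\calB$ are diagonal (as already noted in \eqref{eq:Suff_Stats_Coeffie_2}). Writing $\sigma_\ell^2$ for the noise variance at user $\ell \in \calL$, this means $\tr(\boldsymbol{\Sigma}_\calS^{-1}) = \sum_{\ell \in \calS} \sigma_\ell^{-2}$ for every $\calS \subseteq \calL$, so the functional $\calS \mapsto \tr(\boldsymbol{\Sigma}_\calS^{-1})$ is additive with strictly positive per-element contributions. The whole lemma then reduces to a greedy selection argument for such a set function, with the two nesting requirements handled by making the selections with respect to a single, globally fixed ordering of the users.

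Concretely, first I would fix any permutation $\pi$ of $\calL$ for which $\sigma_{\pi(1)}^{-2} \le \sigma_{\pi(2)}^{-2} \le \cdots \le \sigma_{\pi(L)}^{-2}$, and then define $\calA_t^\star \triangleq \{\pi(1),\ldots,\pi(t)\}$ for $t\in\intseq{1}{L}$, together with $\calB_1^\star \triangleq \emptyset$ and $\calB_t^\star \triangleq \{\pi(L-t+2),\ldots,\pi(L)\}$ for $t\in\intseq{2}{L}$. The two inclusions $\calA_t^\star \subset \calA_{t+1}^\star$ and $\calB_t^\star \subset \calB_{t+1}^\star$ then hold by construction for every $t\in\intseq{1}{L-1}$.

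Second, I would verify that these explicit choices are optimal for each $t$. Because $\tr(\boldsymbol{\Sigma}_\calA^{-1})$ is a sum of strictly positive terms, enlarging $\calA$ strictly increases it, so the minimum of $\tr(\boldsymbol{\Sigma}_\calA^{-1})$ over $\bbA_t = \{\calA \subseteq \calL : |\calA| \ge t\}$ must be attained on a set of cardinality exactly $t$; among such sets, the minimum is achieved by selecting the $t$ smallest values of $\sigma_\ell^{-2}$, which is precisely $\calA_t^\star$. A symmetric argument applies to $\calB_t^\star$: the maximum of $\tr(\boldsymbol{\Sigma}_\calB^{-1})$ over $\bbB_t = \{\calB \subseteq \calL : |\calB| < t\}$ is attained on a set of cardinality exactly $t-1$ consisting of the $t-1$ largest values of $\sigma_\ell^{-2}$, which is exactly $\calB_t^\star$.

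I do not anticipate a real obstacle here; the argument is a textbook greedy selection for an additive, positive set function. The only subtlety worth flagging is that ties among the $\sigma_\ell^{-2}$ can produce multiple minimizers and multiple maximizers, so the lemma should be read in the existential sense, i.e., that one can \emph{choose} nested families; the construction above produces one such compatible choice simultaneously for every $t\in\intseq{1}{L}$, which is all that is needed in the proofs of Theorem~\ref{thm:Threshold_Capacity} and Theorem~\ref{thm:Threshold_Capacity_2}.
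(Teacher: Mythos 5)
Your proposal is correct and follows essentially the same route as the paper: order the users by their noise variances, take $\calA_t^\star$ to be the $t$ users with the smallest $\sigma_\ell^{-2}$ and $\calB_t^\star$ the $t-1$ users with the largest $\sigma_\ell^{-2}$, so that nestedness holds by construction and optimality follows from the additivity of $\calS\mapsto\tr(\boldsymbol{\Sigma}_\calS^{-1})$ with strictly positive terms. Your write-up is in fact slightly more explicit than the paper's (it spells out why the optima are attained at cardinalities $t$ and $t-1$, handles $t=1$ for $\calB_t^\star$, and flags ties), but these are refinements of the same argument rather than a different approach.
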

\begin{proof}
We write $\boldsymbol{\Sigma}_{\calL}$ as $\diag(\sigma_1^2,\dots,\sigma_L^2)$. Without loss of generality, suppose that $\sigma_1^2\ge\sigma_2^2\ge\dots\ge\sigma_L^2$. For $t\in\intseq{1}{L-1}$, let $\calA_t^\star\triangleq\intseq{1}{t}$ and $\calB_t^\star\triangleq\intseq{L-t+2}{L}$, since $\calA_t^\star\in\argmin\limits_{\calA:\abs{\calA}=t}\sum\limits_{i\in\calA}\sigma_i^{-2}$ and $\calB_t^\star\in\argmax\limits_{\calB:\abs{\calB}<t}\sum\limits_{i\in\calB}\sigma_i^{-2}$ we have $\calA_t^\star\subset\calA_{t+1}^\star$ and $\calB_t^\star\subset\calB_{t+1}^\star$. 
\end{proof}
When $t=L$, from Lemma~\ref{lemma:Threshold} we have $\tr\left(\boldsymbol{\Sigma}_{\calA_L^\star}^{-1}\right)>\tr\left(\boldsymbol{\Sigma}_{\calB_L^\star}^{-1}\right)$, therefore, the optimal leakage rate in Theorem~\ref{thm:Capacity} is,
\begin{subequations}\label{eq:D_1D_t}
\begin{align}
\Delta(D,\bbA_L)&=\frac{1}{2}\log\frac{\sigma_X^2}{D}-\frac{1}{2}\log\left(1+\frac{\sigma_X^2}{\tr\big(\boldsymbol{\Sigma}_{\calA_L^\star}^{-1}\big)^{-1}}\right)+\nonumber\\
&\qquad\frac{1}{2}\log\left(1+\frac{\sigma_X^2}{\tr\big(\boldsymbol{\Sigma}_{\calB_L^\star}^{-1}\big)^{-1}}\right)
\label{eq:D_1}
\end{align}and for $t\in\intseq{1}{L-1}$ and $\tr\big(\boldsymbol{\Sigma}_{\calA_t^\star}^{-1}\big)\ge\tr\big(\boldsymbol{\Sigma}_{\calB_t^\star}^{-1}\big)$,
\begin{align}
\Delta(D,\bbA_t)&=\frac{1}{2}\log\frac{\sigma_X^2}{D}-\frac{1}{2}\log\left(1+\frac{\sigma_X^2}{\tr\big(\boldsymbol{\Sigma}_{\calA_t^\star}^{-1}\big)^{-1}}\right)+\nonumber\\
&\qquad\frac{1}{2}\log\left(1+\frac{\sigma_X^2}{\tr\big(\boldsymbol{\Sigma}_{\calB_t^\star}^{-1}\big)^{-1}}\right).
\label{eq:D_t}
\end{align}By \eqref{eq:D_1}, \eqref{eq:D_t}, and monotonicity of the $\log$ function, we have
\begin{align}
    &\Delta(D,\bbA_L)\le \Delta(D,\bbA_t)\Leftrightarrow\nonumber\\
&\frac{\sigma_X^{-2}+\tr\big(\boldsymbol{\Sigma}_{\calA_t^\star}^{-1}\big)}{\sigma_X^{-2}+\tr\big(\boldsymbol{\Sigma}_{\calB_t^\star}^{-1}\big)}\le\frac{\sigma_X^{-2}+\tr\big(\boldsymbol{\Sigma}_{\calA_L^\star}^{-1}\big)}{\sigma_X^{-2}+\tr\big(\boldsymbol{\Sigma}_{\calB_L^\star}^{-1}\big)}.\label{eq:DLbDt}
\end{align}
Next, for $t\in\intseq{1}{L}$, and $\tr\big(\boldsymbol{\Sigma}_{\calA_t^\star}^{-1}\big)\le\tr\big(\boldsymbol{\Sigma}_{\calB_t^\star}^{-1}\big)$,
\begin{align}
\Delta(D,\bbA_t)=\frac{1}{2}\log\left(\frac{\sigma_X^2}{D}+\frac{\tr\big(\boldsymbol{\Sigma}_{\calB_t^\star}^{-1}\big)}{\sigma_X^{-2}}-\frac{\tr\big(\boldsymbol{\Sigma}_{\calA_t^\star}^{-1}\big)}{\sigma_X^{-2}}\right).\label{eq:D_t_2}
\end{align}Using \eqref{eq:D_1}, \eqref{eq:D_t_2}, and monotonicity of the $\log$ function, we have
\begin{align}
    &\Delta(D,\bbA_L)\le \Delta(D,\bbA_t)\Leftrightarrow\nonumber\\
    &\frac{1}{2}\log\left[\frac{\sigma_X^2\left(\sigma_X^{-2}+\tr\big(\boldsymbol{\Sigma}_{\calB_L^\star}^{-1}\big)\right)}{D\left(\sigma_X^{-2}+\tr\big(\boldsymbol{\Sigma}_{\calA_L^\star}^{-1}\big)\right)}\right]\le\nonumber\\
    &\frac{1}{2}\log\left(\frac{\sigma_X^2}{D}+\frac{\tr\big(\boldsymbol{\Sigma}_{\calB_t^\star}^{-1}\big)}{\sigma_X^{-2}}-\frac{\tr\big(\boldsymbol{\Sigma}_{\calA_t^\star}^{-1}\big)}{\sigma_X^{-2}}\right)\Leftrightarrow\nonumber\\
    & \tr\big(\boldsymbol{\Sigma}_{\calB_L^\star}^{-1}\big)-\tr\big(\boldsymbol{\Sigma}_{\calA_L^\star}^{-1}\big)\le\nonumber\\ &D\left(\sigma_X^{-2}+\tr\big(\boldsymbol{\Sigma}_{\calA_L^\star}^{-1}\big)\right)\left(\tr\big(\boldsymbol{\Sigma}_{\calB_t^\star}^{-1}\big)-\tr\big(\boldsymbol{\Sigma}_{\calA_t^\star}^{-1}\big)\right),
    \label{eq:DLbDt_2}
\end{align}which is always true since the \ac{RHS} \eqref{eq:DLbDt_2} is positive and the left-hand side of \eqref{eq:DLbDt_2} is negative because $\tr\big(\boldsymbol{\Sigma}_{\calB_L^\star}^{-1}\big)\le\tr\big(\boldsymbol{\Sigma}_{\calA_L^\star}^{-1}\big)$.  
\end{subequations}
Next, for $i\in\intseq{1}{L-t}$, by Theorem~\ref{thm:Capacity} and using that the $\log$ function is increasing, we have
\begin{align}
    R(D,\bbA_t)\ge R(D,\bbA_{t+i})
    &\Leftrightarrow\tr\big(\boldsymbol{\Sigma}_{\calA_t^\star}^{-1}\big)\le\tr\big(\boldsymbol{\Sigma}_{\calA_{t+i}^\star}^{-1}\big),\label{eq:Threshold_Third_Prove_2}
\end{align}and \eqref{eq:Threshold_Third_Prove_2} is always true by Lemma~\ref{lemma:Threshold}. Note that \eqref{eq:Threshold_Third_Prove_2} also proves that for any $t\in\intseq{1}{L}$, $R(D,\bbA_t)\ge R(D,\bbA_L)$.

We now prove Theorem~\ref{thm:Threshold_Capacity_2}. Let $i\in\intseq{1}{L-t}$. We consider four cases. First, when $\tr\big(\boldsymbol{\Sigma}_{\calA_t^\star}^{-1}\big)\le\tr\big(\boldsymbol{\Sigma}_{\calB_t^\star}^{-1}\big)$ and $\tr\big(\boldsymbol{\Sigma}_{\calA_{t+i}^\star}^{-1}\big)\le\tr\big(\boldsymbol{\Sigma}_{\calB_{t+i}^\star}^{-1}\big)$, we have
\begin{align}
    &\Delta(D,\bbA_t)\ge\Delta(D,\bbA_{t+i})\Leftrightarrow\nonumber\\
    &\frac{1}{2}\log\left(\frac{\sigma_X^2}{D}+\frac{\tr\big(\boldsymbol{\Sigma}_{\calB_t^\star}^{-1}\big)}{\sigma_X^{-2}}-\frac{\tr\big(\boldsymbol{\Sigma}_{\calA_t^\star}^{-1}\big)}{\sigma_X^{-2}}\right)\ge\nonumber\\
    &\frac{1}{2}\log\left(\frac{\sigma_X^2}{D}+\frac{\tr\big(\boldsymbol{\Sigma}_{\calB_{t+i}^\star}^{-1}\big)}{\sigma_X^{-2}}-\frac{\tr\big(\boldsymbol{\Sigma}_{\calA_{t+i}^\star}^{-1}\big)}{\sigma_X^{-2}}\right)\Leftrightarrow\nonumber\\
    &\tr\big(\boldsymbol{\Sigma}_{\calA_{t+i}^\star}^{-1}\big)-\tr\big(\boldsymbol{\Sigma}_{\calA_t^\star}^{-1}\big)\ge\tr\big(\boldsymbol{\Sigma}_{\calB_{t+i}^\star}^{-1}\big)-\tr\big(\boldsymbol{\Sigma}_{\calB_t^\star}^{-1}\big).\label{eq:Threshold_Sec_Prove_2}
\end{align}Second, when $\tr\big(\boldsymbol{\Sigma}_{\calA_t^\star}^{-1}\big)\ge\tr\big(\boldsymbol{\Sigma}_{\calB_t^\star}^{-1}\big)$ and $\tr\big(\boldsymbol{\Sigma}_{\calA_{t+i}^\star}^{-1}\big)\ge\tr\big(\boldsymbol{\Sigma}_{\calB_{t+i}^\star}^{-1}\big)$, we have
\begin{align}
    &\Delta(D,\bbA_t)\ge\Delta(D,\bbA_{t+i})\Leftrightarrow\nonumber\\
    &\frac{1}{2}\log\left(\frac{\sigma_X^{-2}+\tr\big(\boldsymbol{\Sigma}_{\calB_t^\star}^{-1}\big)}{\sigma_X^{-2}+\tr\big(\boldsymbol{\Sigma}_{\calA_t^\star}^{-1}\big)}\right)\ge\frac{1}{2}\log\left(\frac{\sigma_X^{-2}+\tr\big(\boldsymbol{\Sigma}_{\calB_{t+i}^\star}^{-1}\big)}{\sigma_X^{-2}+\tr\big(\boldsymbol{\Sigma}_{\calA_{t+i}^\star}^{-1}\big)}\right)\Leftrightarrow\nonumber\\
    &\frac{\sigma_X^{-2}+\tr\big(\boldsymbol{\Sigma}_{\calB_t^\star}^{-1}\big)}{\sigma_X^{-2}+\tr\big(\boldsymbol{\Sigma}_{\calA_t^\star}^{-1}\big)}\ge\frac{\sigma_X^{-2}+\tr\big(\boldsymbol{\Sigma}_{\calB_{t+i}^\star}^{-1}\big)}{\sigma_X^{-2}+\tr\big(\boldsymbol{\Sigma}_{\calA_{t+i}^\star}^{-1}\big)}.\label{eq:Threshold_Sec_Prove}
\end{align}Third, when $\tr\big(\boldsymbol{\Sigma}_{\calA_t^\star}^{-1}\big)\ge\tr\big(\boldsymbol{\Sigma}_{\calB_t^\star}^{-1}\big)$ and $\tr\big(\boldsymbol{\Sigma}_{\calA_{t+i}^\star}^{-1}\big)\le\tr\big(\boldsymbol{\Sigma}_{\calB_{t+i}^\star}^{-1}\big)$, we have,
\begin{align}
    &\Delta(D,\bbA_t)\ge\Delta(D,\bbA_{t+i})\Leftrightarrow\nonumber\\
    & \frac{1}{2}\log\left[\frac{\sigma_X^2\left(\sigma_X^{-2}+\tr\big(\boldsymbol{\Sigma}_{\calB_t^\star}^{-1}\big)\right)}{D\left(\sigma_X^{-2}+\tr\big(\boldsymbol{\Sigma}_{\calA_t^\star}^{-1}\big)\right)}\right]\ge\nonumber\\
    &\frac{1}{2}\log\left(\frac{\sigma_X^2}{D}+\frac{\tr\big(\boldsymbol{\Sigma}_{\calB_{t+i}^\star}^{-1}\big)}{\sigma_X^{-2}}-\frac{\tr\big(\boldsymbol{\Sigma}_{\calA_{t+i}^\star}^{-1}\big)}{\sigma_X^{-2}}\right)\Leftrightarrow\nonumber\\
    & \tr\big(\boldsymbol{\Sigma}_{\calB_t^\star}^{-1}\big)-\tr\big(\boldsymbol{\Sigma}_{\calA_t^\star}^{-1}\big)\ge\nonumber\\
    &D\left(\sigma_X^{-2}+\tr\big(\boldsymbol{\Sigma}_{\calA_t^\star}^{-1}\big)\right)\left(\tr\big(\boldsymbol{\Sigma}_{\calB_{t+i}^\star}^{-1}\big)-\tr\big(\boldsymbol{\Sigma}_{\calA_{t+i}^\star}^{-1}\big)\right),\label{eq:Threshold_Sec_Prove_4}
\end{align}since $\tr\big(\boldsymbol{\Sigma}_{\calB_t^\star}^{-1}\big)-\tr\big(\boldsymbol{\Sigma}_{\calA_t^\star}^{-1}\big)\le 0$ and $\tr\big(\boldsymbol{\Sigma}_{\calB_{t+i}^\star}^{-1}\big)-\tr\big(\boldsymbol{\Sigma}_{\calA_{t+i}^\star}^{-1}\big)\ge 0$ the inequality in the \ac{RHS}~of~\eqref{eq:Threshold_Sec_Prove_4} is never satisfied and $\Delta(D,\bbA_t)\le\Delta(D,\bbA_{t+i})$.
Fourth, when $\tr\big(\boldsymbol{\Sigma}_{\calA_t^\star}^{-1}\big)\le\tr\big(\boldsymbol{\Sigma}_{\calB_t^\star}^{-1}\big)$ and $\tr\big(\boldsymbol{\Sigma}_{\calA_{t+i}^\star}^{-1}\big)\ge\tr\big(\boldsymbol{\Sigma}_{\calB_{t+i}^\star}^{-1}\big)$, we have,
\begin{align}
    &\Delta(D,\bbA_t)\ge\Delta(D,\bbA_{t+i})\Leftrightarrow\nonumber\\
    &\frac{1}{2}\log\left(\frac{\sigma_X^2}{D}+\frac{\tr\big(\boldsymbol{\Sigma}_{\calB_t^\star}^{-1}\big)}{\sigma_X^{-2}}-\frac{\tr\big(\boldsymbol{\Sigma}_{\calA_t^\star}^{-1}\big)}{\sigma_X^{-2}}\right)\ge\nonumber\\
    &\frac{1}{2}\log\left[\frac{\sigma_X^2\left(\sigma_X^{-2}+\tr\big(\boldsymbol{\Sigma}_{\calB_{t+i}^\star}^{-1}\big)\right)}{D\left(\sigma_X^{-2}+\tr\big(\boldsymbol{\Sigma}_{\calA_{t+i}^\star}^{-1}\big)\right)}\right]\Leftrightarrow\nonumber\\
    & D\left(\sigma_X^{-2}+\tr\big(\boldsymbol{\Sigma}_{\calA_{t+i}^\star}^{-1}\big)\right)\left(\tr\big(\boldsymbol{\Sigma}_{\calB_t^\star}^{-1}\big)-\tr\big(\boldsymbol{\Sigma}_{\calA_t^\star}^{-1}\big)\right)\ge\nonumber\\
    &\tr\big(\boldsymbol{\Sigma}_{\calB_{t+i}^\star}^{-1}\big)-\tr\big(\boldsymbol{\Sigma}_{\calA_{t+i}^\star}^{-1}\big),\label{eq:Threshold_Sec_Prove_5}
\end{align}since $\tr\big(\boldsymbol{\Sigma}_{\calB_t^\star}^{-1}\big)-\tr\big(\boldsymbol{\Sigma}_{\calA_t^\star}^{-1}\big)\ge 0$ and $\tr\big(\boldsymbol{\Sigma}_{\calB_{t+i}^\star}^{-1}\big)-\tr\big(\boldsymbol{\Sigma}_{\calA_{t+i}^\star}^{-1}\big)\le 0$, the inequality in the \ac{RHS}~of~\eqref{eq:Threshold_Sec_Prove_5} is always satisfied and $\Delta(D,\bbA_t)\ge\Delta(D,\bbA_{t+i})$.

\begin{figure*}[b!]
\hrulefill
\begin{align}
    \begin{bmatrix}
    \sigma_V^2 & h\sigma_{XV}\\
    h\sigma_{XV} & \sigma_{Y}^2
   \end{bmatrix}^{-1}&=\big(\sigma_V^2\sigma_{Y}^2-h^2\sigma_{XV}^2\big)^{-1}
    \begin{bmatrix}
    \sigma_{Y}^2 & -h\sigma_{XV}\\
    -h\sigma_{XV} & \sigma_V^2
   \end{bmatrix}\nonumber\\
   &\mathop = \limits^{(a)} \begin{bmatrix}
    \sigma_V^{-2}\big(h^2\sigma_{X|V}^2+\sigma_N^2\big)^{-1}\big(h^2\sigma_X^2+\sigma_N^2\big) & -h\sigma_V^{-2}\sigma_{XV}\big(h^2\sigma_{X|V}^2+\sigma_N^2\big)^{-1}\\
    -h\sigma_V^{-2}\sigma_{XV}\big(h^2\sigma_{X|V}^2+\sigma_N^2\big)^{-1} & \big(h^2\sigma_{X|V}^2+\sigma_N^2\big)^{-1}
   \end{bmatrix}\nonumber%
\end{align}
\end{figure*}

\section{Distortion Constraint in Theorem~\ref{thm:Upper_Bound}}
\label{proof:Upper_Bound}
We have,
\begin{subequations}\label{eq:Distortion_Upper}
\begin{align}
D&\ge\frac{1}{n}\sum\limits_{i=1}^n\bbE\Big(X_i-\bbE\big[X_i\big|\mathbf{Y}_\calA^n,f(X^n)\big]\Big)^2\label{eq:Distortion_Upper_0}\\
&=\frac{1}{n}\sum\limits_{i=1}^n\bbE\left[\sigma_{X_i|\mathbf{Y}_\calA^n,f(X^n)}^2\right]\label{eq:Distortion_Upper_1}\\
&\mathop\ge\limits^{(a)}\frac{1}{n}\sum\limits_{i=1}^n\bbE\left[\sigma_{X_i|\mathbf{Y}_{\calA,i},\mathbf{Y}_{\calA\sim i}^n,f(X^n),\mathbf{Y}_\calB^{i-1},X^{i-1}}^2\right]\label{eq:Distortion_Upper_3}\\
&\mathop=\limits^{(b)}\frac{1}{n}\sum\limits_{i=1}^n\bbE_{\mathbf{Y}_{\calA,i},V_i}\left[\sigma_{X_i|\mathbf{Y}_{\calA,i},V_i}^2\right]\label{eq:Distortion_Upper_4}\\
&\mathop=\limits^{(c)}\sum\limits_{i=1}^n\bbP(Q=i)\bbE_{\mathbf{Y}_{\calA,Q},V_Q|Q=i}\left[\sigma_{X_Q|\mathbf{Y}_{\calA,Q},V_Q,Q=i}^2\right]\\
&=\bbE_{\mathbf{Y}_{\calA,Q},V_Q,Q}\left[\sigma_{X_Q|\mathbf{Y}_{\calA,Q},V_Q,Q}^2\right]\label{eq:Distortion_Upper_5}\\
&\mathop=\limits^{(d)}\bbE_{\mathbf{Y}_{\calA},V}\left[\sigma_{X|\mathbf{Y}_\calA,V}^2\right],\label{eq:Distortion_Upper_6}
\end{align}where
\begin{itemize}
    \item[$(a)$] follows since conditioning reduces \ac{MMSE} \cite[Lemma~13]{EkremUlukus13_Lossy};
    \item[$(b)$] follows by defining $V_i\triangleq\big(\mathbf{Y}_{\calA\sim i}^n,f(X^n),\mathbf{Y}_\calB^{i-1},X^{i-1}\big)$, which is consistent with the definition of $V_i$ in the proof of \cite[Theorem~3]{VillardPianta};
    \item[$(c)$] holds with $Q$ uniformly distributed over $\intseq{1}{n}$;
    \item[$(d)$] follows by defining $X\triangleq X_Q$, $\mathbf{Y}_\calA\triangleq\mathbf{Y}_{\calA,Q}$, and $V\triangleq(V_Q,Q)$ which is consistent with the definitions of these random variables in the proof of \cite[Theorem~3]{VillardPianta}.
\end{itemize}
\end{subequations}

\section{Proof of Lemma~\ref{lemma:Conversion_XgYV_to_XgV}}
\label{proof:Conversion_XgYV_to_XgV}
We prove the first statement of Lemma~\ref{lemma:Conversion_XgYV_to_XgV} as follows. By \cite[Chapter~3]{Gallager_Stochastic_Book}, we have
\begin{align}
    \sigma_{X|Y}^2&=\sigma_X^2-\frac{\sigma_{XY}^2}{\sigma_Y^2}\nonumber\\
    &\mathop=\limits^{(a)}\sigma_X^2-\frac{h^2\sigma_X^4}{h^2\sigma_X^2+\sigma_N^2}\nonumber\\
    &=\frac{\sigma_N^2}{h^2}-\sigma_N^4\big(h^4\sigma_X^2+h^2\sigma_N^2\big)^{-1}.\label{eq:Poor_Equality}
\end{align}where $(a)$ follows by calculating $\sigma_{XY}=h\sigma_X^2$ and $\sigma_Y^2=h^2\sigma_X^2+\sigma_N^2$. Hence, by \eqref{eq:Poor_Equality}, the constraint $D\le\sigma_{X|Y}^2$ can be expressed as
\begin{align}
    D\le\frac{\sigma_N^2}{h^2}-\sigma_N^4\big(h^4\sigma_X^2+h^2\sigma_N^2\big)^{-1},\nonumber%
\end{align}which can be rewritten as
\begin{align}
    0<\sigma_N^4\big(h^4\sigma_X^2+h^2\sigma_N^2\big)^{-1}\le\frac{\sigma_N^2}{h^2}-D.\label{eq:Lemma2_First_Statement}%
\end{align} Next, we prove the second statement of Lemma~\ref{lemma:Conversion_XgYV_to_XgV}. Since $(V,X,Y)$ are jointly Gaussian and $V$ is independent of $N$, $\sigma_{X|V,Y}^2$ is given by \cite[Chapter~3]{Gallager_Stochastic_Book}
\begin{align}
    &\sigma_{X|V,Y}^2=\sigma_X^2-\nonumber\\
    &\quad\big[\sigma_{XV}\quad h\sigma_X^2\big]\begin{bmatrix}
    \sigma_V^2 & h\sigma_{XV}\\
    h\sigma_{XV} & \sigma_{Y}^2
   \end{bmatrix}^{-1}\big[\sigma_{XV}\quad h\sigma_X^2\big]^\intercal,\nonumber%
\end{align} 
where 
$\left[ \begin{smallmatrix}
       \sigma_V^2 & h\sigma_{XV}\\[0.3em]
       h\sigma_{XV} & \sigma_{Y}^2 
\end{smallmatrix}\right]^{-1}$ is provided at the bottom of this page, in which $(a)$ follows since
\begin{align}
    \sigma_V^{-2}\big(\sigma_V^2\sigma_{Y}^2-h^2\sigma_{XV}^2\big)&=\sigma_{Y}^2-h^2\sigma_V^{-2}\sigma_{XV}^2\nonumber\\
    &=h^2\sigma_X^2-h^2\sigma_V^{-2}\sigma_{XV}^2+\sigma_N^2\nonumber\\
    &=h^2\sigma_{X|V}^2+\sigma_N^2,\nonumber%
\end{align}where the last equality holds since $\sigma_{X|V}^2=\sigma_X^2-\sigma_V^{-2}\sigma_{XV}^2$. 
Hence,
\begin{align}
     \sigma_{X|V,Y}^2&=\frac{\sigma_N^2\sigma_{X|V}^2}{h^2\sigma_{X|V}^2+\sigma_N^2}.\label{eq:Cov_XgVY_1}
\end{align}Then, by \eqref{eq:Cov_XgVY_1}, the constraint $\sigma_{X|V,Y}^2\le D$ can be expressed as follows:
\begin{align}
    &\frac{\sigma_N^2\sigma_{X|V}^2}{h^2\sigma_{X|V}^2+\sigma_N^2}\le D,\nonumber%
\end{align}and, since $\frac{\sigma_N^2}{h^2}-D>0$ by \eqref{eq:Lemma2_First_Statement}, we have
\begin{align}
    \sigma_{X|V}^2&\le\frac{\sigma_N^2D}{\sigma_N^2-h^2D}.\nonumber
\end{align}
The last statement of the lemma holds because if $D=\sigma_{X|V,Y}^2$, then
\begin{align}
    D&=\frac{\sigma_N^2\sigma_{X|V}^2}{h^2\sigma_{X|V}^2+\sigma_N^2},\label{eq:DsigmaXgV}
\end{align}where \eqref{eq:DsigmaXgV} follows from \eqref{eq:Cov_XgVY_1}, and solving \eqref{eq:DsigmaXgV} for $\sigma_{X|V}^2$ results to $\sigma_{X|V}^2=\frac{\sigma_N^2D}{\sigma_N^2-h^2D}$.

\section{Proof of Lemma~\ref{lemma:Minimizer}}
\label{proof:Optimal_is_gaussian}
Here, we show that for a given feasible $V$, we can construct a feasible Gaussian random variable $\bar{V}$ such that $C(V)=C(\bar{V})$. Thus, this implies that restricting $V$ to be Gaussian does not change the optimum value of the optimization problem. 
Next, to study the difference of the two differential entropy in \eqref{eq:cost_function}, we need the following properties of the Fisher information and the differential entropy.
\begin{definition}(\hspace{-0.15mm}\cite[Definition~1]{EkremUlukus13_MIMO})\label{defi:FI}
Let $X$ and $U$ be random variables with well-defined densities, and $f_{X|U}$ be the corresponding conditional density. The conditional Fisher information of $X$ is defined by
$$\FI(X|U)=\bbE\left[\left(\frac{\partial\log f_{X|U}(x|u)}{\partial x}\right)^2\right],$$ where the expectation is over~$(U,X)$.
\end{definition}

\begin{lemma}{(\cite[Lemma~18]{EkremUlukus13_Lossy})}
\label{lemma:FI_Variance}
Let $(V,X,G_1,G_2)$ be random variables such that $(V,X)$ and $(G_1,G_2)$
are independent, and let $G_1$ and $G_2$ be Gaussian random variables with variance $0<\sigma_1^2\le\sigma_2^2$. Then, we have
\begin{align}
    \FI(X+G_2|V)^{-1}-\sigma_2^2\ge\FI(X+G_1|V)^{-1}-\sigma_1^2.\nonumber
\end{align}
\end{lemma}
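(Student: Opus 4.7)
The plan is to derive Lemma~\ref{lemma:FI_Variance} directly from the Blachman--Stam Fisher information inequality applied conditionally on $V$. Since the conclusion only involves the marginal conditional laws of $X+G_1$ and $X+G_2$ given $V$, and the hypothesis only fixes the marginal distributions of $G_1$ and $G_2$, I may without loss of generality redefine the joint law of $(G_1,G_2)$ so that $G_2 = G_1 + G_3$, where $G_3 \sim \calN(0, \sigma_2^2 - \sigma_1^2)$ is independent of $(V, X, G_1)$. Setting $A \triangleq X + G_1$, one has $X + G_2 = A + G_3$, and since $\FI(G_3)^{-1} = \sigma_2^2 - \sigma_1^2$, the lemma reduces to the conditional Stam inequality
\begin{align}
\FI(A + G_3 | V)^{-1} \geq \FI(A | V)^{-1} + \FI(G_3)^{-1}. \nonumber
\end{align}

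For each realization $V = v$, the conditional random variable $A | V = v$ and $G_3$ are independent, so the standard (unconditional) Blachman--Stam inequality gives
\begin{align}
\FI(A + G_3 | V = v) \leq \varphi\!\left(\FI(A | V = v)\right), \nonumber
\end{align}
where $\varphi(x) \triangleq x\, \FI(G_3)/(x + \FI(G_3))$ is the rational function appearing when the reciprocal form of Stam is solved for $\FI(A+G_3|V=v)$. A direct computation shows $\varphi''(x) = -2\, \FI(G_3)^2 (x + \FI(G_3))^{-3} < 0$ on $\bbR_{++}$, so $\varphi$ is concave and strictly increasing.

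Averaging over $V$ and applying Jensen's inequality to $\varphi$ then yields
\begin{align}
\FI(A + G_3 | V)
&= \bbE_V\!\left[\FI(A + G_3 | V = v)\right] \nonumber \\
&\leq \bbE_V\!\left[\varphi\!\left(\FI(A | V = v)\right)\right] \nonumber \\
&\leq \varphi\!\left(\bbE_V[\FI(A | V = v)]\right) \nonumber \\
&= \varphi\!\left(\FI(A | V)\right), \nonumber
\end{align}
where the first equality uses the definition of conditional Fisher information (Definition~\ref{defi:FI}), the first inequality is the pointwise Stam bound together with monotonicity of $\varphi$, and the second inequality is Jensen for the concave $\varphi$. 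Inverting $\FI(A + G_3 | V) \leq \varphi(\FI(A | V))$ yields exactly the conditional Stam inequality above, completing the proof.

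The main obstacle is bridging from pointwise to conditional Stam: the reciprocal form of Stam cannot be averaged in $v$ directly, which is precisely why I pass through the concave upper envelope $\varphi$ and invoke Jensen in the correct direction. A secondary technical point is regularity, since the pointwise Blachman--Stam inequality requires $A | V = v$ to admit a smooth density with finite Fisher information; this is automatic here because $A = X + G_1$ is the convolution of $X$ with a non-degenerate Gaussian $G_1$, which makes the density of $A | V = v$ smooth and guarantees that $\FI(A | V = v)$ is finite for almost every $v$, so the averaging step is justified without further assumptions on $X$.
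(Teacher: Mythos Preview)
Your proof is correct. The paper does not supply its own proof of Lemma~\ref{lemma:FI_Variance}; it is quoted verbatim from \cite[Lemma~18]{EkremUlukus13_Lossy} and used as a black box in Appendix~\ref{proof:Optimal_is_gaussian}. Consequently there is no in-paper argument to compare against.

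That said, your route is exactly the standard one and matches how this inequality is usually derived in the literature: reduce to a conditional Stam inequality by writing $G_2 \stackrel{d}{=} G_1 + G_3$ (which is legitimate because the claim depends only on the marginals of $G_1$ and $G_2$), apply the unconditional Blachman--Stam bound pointwise in $V=v$, and then average using the concavity of $\varphi(x)=xc/(x+c)$ to pass from pointwise to conditional Fisher information. The regularity remark is apt: the assumption $\sigma_1^2>0$ guarantees that $A=X+G_1$ has a smooth conditional density, so the pointwise Stam bound applies without further hypotheses on $X$. The only cosmetic omission is the degenerate boundary case $\sigma_1^2=\sigma_2^2$, where $G_3=0$ and the inequality is trivially an equality; you may wish to dispose of it in one line.
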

From \cite[Lemma~3]{EkremUlukus13_MIMO}, we have
\begin{align}
    \dent\left(\frac{1}{h_\calB}\tilde{Y}_\calB\Big|V\right)-\dent(X|V)=\frac{1}{2}\int_0^{\frac{\tilde{\sigma}_\calB^2}{h_\calB^2}}\FI\left(X+N|V\right)d\sigma_N^2,\label{eq:Conv_to_FI}
\end{align}where $N$ is a zero-mean Gaussian random variable with covariance $\sigma_N^2\ge 0$. 
We now outer region \eqref{eq:Conv_to_FI} by substituting $G_1\leftarrow \emptyset$ and $G_2\leftarrow N$ in Lemma~\ref{lemma:FI_Variance} as follows,
\begin{align}
    \FI\left(X+N|V\right)\le\left(\FI(X|V)^{-1}+\sigma_N^2\right)^{-1}.\label{eq:Upper_Bounding_FI}
\end{align}Substituting \eqref{eq:Upper_Bounding_FI} in \eqref{eq:Conv_to_FI}, we have,
\begin{align}
    \dent\left(\frac{1}{h_\calB}\tilde{Y}_\calB\Big|V\right)-\dent(X|V)\le\frac{1}{2}\log\frac{\FI(X|V)^{-1}+\frac{\tilde{\sigma}_\calB^2}{h_\calB^2}}{\FI(X|V)^{-1}}.\label{eq:Upper_Bound_FI}
\end{align}We can also lower bound the Fisher information in \eqref{eq:Conv_to_FI} by setting $G_2\leftarrow \frac{1}{h_\calB}\tilde{N}_\calB$ and $G_1\leftarrow N$ in Lemma~\ref{lemma:FI_Variance} as follows,
\begin{align}
    \FI\left(X+\frac{1}{h_\calB}\tilde{N}_\calB\Big|V\right)^{-1}-\frac{\tilde{\sigma}_\calB^2}{h_\calB^2}\ge\FI(X+N|V)^{-1}-\sigma_N^2,\nonumber
\end{align}for all $\sigma_N^2\le\frac{\tilde{\sigma}_\calB^2}{h_\calB^2}$, which can be rewritten as
\begin{align}
    \FI\left(X+N|V\right)\ge\left(\FI\left(X+\frac{1}{h_\calB}\tilde{N}_\calB\Big|V\right)^{-1}-\frac{\tilde{\sigma}_\calB^2}{h_\calB^2}+\sigma_N^2\right)^{-1}.\label{eq:Lower_Bounding_FI}
\end{align}Substituting \eqref{eq:Lower_Bounding_FI} in \eqref{eq:Conv_to_FI}, we have,
\begin{align}
    &\dent\left(\frac{1}{h_\calB}\tilde{Y}_\calB\Big|V\right)-\dent(X|V)\ge\nonumber\\
    &\qquad\frac{1}{2}\log\frac{\FI\left(X+\frac{1}{h_\calB}\tilde{N}_\calB\Big|V\right)^{-1}}{\FI\left(X+\frac{1}{h_\calB}\tilde{N}_\calB\Big|V\right)^{-1}-\frac{\tilde{\sigma}_\calB^2}{h_\calB^2}}.\label{eq:Lower_Bound_FI}
\end{align}Next, define, for $0\le t\le 1$,
\begin{align}
g(t)&\triangleq t\FI(X|V)^{-1}+\nonumber\\
&\qquad(1-t)\left[\FI\left(X+\frac{1}{h_\calB}\tilde{N}_\calB\Big|V\right)^{-1}-\frac{\tilde{\sigma}_\calB^2}{h_\calB^2}\right],\label{eq:gt}\\
    f(t)&\triangleq\frac{1}{2}\log\frac{g(t)+\frac{\tilde{\sigma}_\calB^2}{h_\calB^2}}{g(t)}.\label{eq:ft_Defi}
\end{align}
Hence, \eqref{eq:Upper_Bound_FI} and \eqref{eq:Lower_Bound_FI} can be expressed as,
\begin{align}
     f(0)\le\dent\left(\frac{1}{h_\calB}\tilde{Y}_\calB\Big|V\right)-\dent(X|V)\le f(1).\nonumber
\end{align}Since $f$ is continuous, from the intermediate value theorem, there exists a $t^\star\in[0,1]$ such that,
\begin{align}
    \dent\left(\frac{1}{h_\calB}\tilde{Y}_\calB\Big|V\right)-\dent(X|V)&=f(t^\star) =\frac{1}{2}\log\frac{g(t^\star)+\frac{\tilde{\sigma}_\calB^2}{h_\calB^2}}{g(t^\star)},\label{eq:ft_star_Defi}
\end{align}where $g(t^\star)$ is bounded as follows,
\begin{subequations}
\begin{align}
    \FI(X|V)^{-1}\mathop \le \limits^{(a)} g(t^\star)&\mathop \le \limits^{(b)}\FI\left( X+\frac{1}{h_\calB}\tilde{N}_\calB\Big|V\right)^{-1}-\frac{\tilde{\sigma}_\calB^2}{h_\calB^2}\label{eq:gStar_Bound_1}\\
    &\mathop \le \limits^{(c)}\FI\left(X+\frac{1}{h_\calA}\tilde{N}_\calA\Big|V\right)^{-1}-\frac{\tilde{\sigma}_\calA^2}{h_\calA^2},\label{eq:gStar_Bound_2}
\end{align}
where
\begin{itemize}
    \item[$(a)$] and $(b)$ follow by \eqref{eq:gt} and therefore by substituting $G_1\leftarrow \emptyset$ and $G_2\leftarrow \frac{1}{h_\calB}\tilde{N}_\calB$, in Lemma~\ref{lemma:FI_Variance}, we have,
    \begin{align}
        \FI\left(X+\frac{1}{h_\calB}\tilde{N}_\calB\Big|V\right)^{-1}-\frac{\tilde{\sigma}_\calB^2}{h_\calB^2}\ge\FI(X|V)^{-1};\label{eq:Btilde}
    \end{align}
    \item[$(c)$] follows by Lemma~\ref{lemma:FI_Variance} with $G_1\leftarrow \frac{1}{h_\calB}\tilde{N}_\calB$ and $G_2\leftarrow \frac{1}{h_\calA}\tilde{N}_\calA$.
\end{itemize}
\end{subequations}Hence, \eqref{eq:ft_star_Defi} implies that if we choose $\bar{V}$ to be a Gaussian random variable which satisfies $\sigma_{X|\bar{V}}^2=g(t^\star)$, then $C(\bar{V})=C(V)$. Now, we show that the Gaussian random variable $\bar{V}$ is feasible, i.e., $\sigma_{X|\bar{V},\tilde{Y}_\calA}^2\le D$. The following lemma connects the conditional covariance with Fisher information.
\begin{lemma}\label{lemma:Cond_Cov_FI}
Let $(V,X)$ be two arbitrary random variables with finite second moments, and $N$ be a zero-mean Gaussian random variable with variance $\sigma_N^2$. Let, $Y=X+N$ and assume that $V$ and $X$ are independent of $N$. Then we have,
\begin{align}
    \bbE_{Y,V}\big[\sigma_{X|Y,V}^2\big]=\sigma_N^2-\sigma_N^4\FI(X+N|V).\nonumber
\end{align}
\end{lemma}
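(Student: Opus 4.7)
The plan is to deduce the identity from a conditional version of the classical Brown/Tweedie formula, by first reducing the statement about $X$ to one about $N$ and then expressing the conditional mean of $N$ in terms of the score of $f_{Y|V}$.

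First, I would observe that, since $Y = X + N$ and $Y$ is measurable with respect to $\sigma(Y,V)$,
$$X - \bbE[X|Y,V] = -\bigl(N - \bbE[N|Y,V]\bigr),$$
so that $\bbE_{Y,V}[\sigma_{X|Y,V}^2] = \bbE_{Y,V}[\sigma_{N|Y,V}^2]$. Applying the law of total variance together with $\bbE[N]=0$ and $\bbE[N^2]=\sigma_N^2$ then yields
$$\bbE_{Y,V}\bigl[\sigma_{X|Y,V}^2\bigr] = \sigma_N^2 - \bbE_{Y,V}\bigl[(\bbE[N|Y,V])^2\bigr],$$
reducing the claim to showing $\bbE_{Y,V}[(\bbE[N|Y,V])^2] = \sigma_N^4\,\FI(X+N|V)$.

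Next, I would establish a conditional Tweedie-type identity for $\bbE[N|Y,V]$. Because $X$ and $N$ are conditionally independent given $V$, the conditional density factors as $f_{Y|V}(y|v) = \int f_{X|V}(x|v)\,\phi_N(y-x)\,dx$, where $\phi_N$ denotes the $\calN(0,\sigma_N^2)$ density. Differentiating in $y$ under the integral sign and exploiting $\phi_N'(u) = -\sigma_N^{-2}\, u\,\phi_N(u)$ yields
$$\partial_y \log f_{Y|V}(y|v) = -\sigma_N^{-2}\,\bbE[N \mid Y=y, V=v].$$
Squaring, taking expectation over $(Y,V)$, and invoking Definition~\ref{defi:FI} give $\bbE_{Y,V}[(\bbE[N|Y,V])^2] = \sigma_N^4\,\FI(Y|V) = \sigma_N^4\,\FI(X+N|V)$, which combined with the previous display establishes the lemma.

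The main technical point will be justifying differentiation under the integral sign in the Tweedie step, but this is routine: $\phi_N$ is smooth with rapid Gaussian decay, so dominated convergence applies under the finite second-moment assumption on $(V,X)$ and standard regularity of Gaussian-smoothed densities. I would include only a brief pointer to these facts rather than a full verification.
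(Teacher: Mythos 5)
Your proof is correct, but it takes a genuinely different route from the paper. The paper proves the lemma pointwise in $v$ by an information-measure calculus: it invokes the de Bruijn-type identity $\tfrac{1}{2}\FI(X+N|V=v)=\tfrac{\partial}{\partial\sigma_N^2}\dent(X+N|V=v)$ (\cite[Lemma~3]{EkremUlukus13_MIMO}), splits $\dent(X+N|V=v)$ into $\MI(X;X+N|V=v)+\dent(N)$, and then differentiates the mutual information via the Guo--Shamai--Verd\'u I-MMSE theorem (\cite[Theorem~1]{MI_MMSE}) after the reparametrization $u=\sigma_N^{-2}$; rearranging and averaging over $V$ gives the claim. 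You instead argue directly at the level of estimators: the observation $\sigma_{X|Y,V}^2=\sigma_{N|Y,V}^2$ (since $X-\bbE[X|Y,V]=-(N-\bbE[N|Y,V])$), the law of total variance, and the conditional Tweedie/Brown score identity $\partial_y\log f_{Y|V}(y|v)=-\sigma_N^{-2}\,\bbE[N\,|\,Y=y,V=v]$, which together with Definition~\ref{defi:FI} give $\bbE\big[(\bbE[N|Y,V])^2\big]=\sigma_N^4\FI(X+N|V)$. Your approach is more elementary and self-contained (no appeal to de Bruijn or I-MMSE, only differentiation under the integral, which the Gaussian kernel justifies), and it is essentially the classical proof of the MMSE--Fisher relation of which the paper's cited vector version \cite[Lemma~21]{EkremUlukus13_Lossy} is an instance; the paper's route buys brevity by reusing lemmas already cited elsewhere in the converse. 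One cosmetic point: writing $f_{Y|V}(y|v)=\int f_{X|V}(x|v)\phi_N(y-x)\,dx$ presupposes a conditional density for $X$ given $V$, which the lemma's hypotheses do not grant; since the statement allows arbitrary $(V,X)$ with finite second moments, you should integrate $\phi_N(y-x)$ against the conditional law $P_{X|V=v}(dx)$ instead --- the Gaussian smoothing still yields a smooth, strictly positive $f_{Y|V}$, so the rest of your argument goes through unchanged.
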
A vector version of this lemma is stated in \cite[Lemma~21]{EkremUlukus13_Lossy} without proof. For completeness, we prove Lemma~\ref{lemma:Cond_Cov_FI} in Appendix~\ref{proof:Cond_Cov_FI}. Using Lemma~\ref{lemma:Cond_Cov_FI}, we have,
\begin{align}
    \sigma_{X|\bar{V},\tilde{Y}_\calA}^2=\tilde{\sigma}_\calA^2-\tilde{\sigma}_\calA^4\FI(\tilde{Y}_\calA|\bar{V}).\label{eq:Cond_Cov_FI_Bound_2}
\end{align}
Then, we have,
\begin{align}
    \FI\left(\frac{1}{h_\calA}\tilde{Y}_\calA\Big|\bar{V}\right)&\mathop = \limits^{(a)}\sigma_{\frac{1}{h_\calA}\tilde{Y}_\calA|\bar{V}}^{-2}\nonumber\\
    &\mathop = \limits^{(b)}\left(\sigma_{X|\bar{V}}^{2}+\frac{1}{h_\calA^2}\tilde{\sigma}_\calA^2\right)^{-1}\nonumber\\
    &\mathop \ge \limits^{(c)} \left(\FI\left(X+\frac{1}{h_\calA}\tilde{N}_\calA\Big|V\right)^{-1}-\frac{\tilde{\sigma}_\calA^2}{h_\calA^2}+\frac{\tilde{\sigma}_\calA^2}{h_\calA^2}\right)^{-1}\nonumber\\
    &=\FI\left(\frac{1}{h_\calA}\tilde{Y}_\calA\Big|V\right),\label{eq:FI_Scaled_Y}
\end{align}where 
\begin{itemize}
    \item[$(a)$] follows from \cite[Lemma~2]{EkremUlukus13_MIMO}, which states that for Gaussian random variables $\FI(X|U)=\sigma_{X|U}^{-2}$; 
    \item[$(b)$] follows since $(\bar{V},X)$ and $\tilde{N}_\calA$ are independent;
    \item[$(c)$] follows from \eqref{eq:gStar_Bound_2} since $\sigma_{X|\bar{V}}^2=g(t^\star)$.
\end{itemize}
Next, using Lemma~\ref{lemma:Scaling_FI} below, \eqref{eq:FI_Scaled_Y} becomes
\begin{align}
    \FI\big(\tilde{Y}_\calA|\bar{V}\big)&\ge\FI\big(\tilde{Y}_\calA|V\big)\label{eq:FI_VB_V},
\end{align}
\begin{lemma}\label{lemma:Scaling_FI}
Let $X$ and $U$ be arbitrarily correlated random variables with well-defined densities. For any $a\in\bbR_{++}$,
\begin{align}
    \FI\left(aX|U\right)=\frac{1}{a^2}\FI(X|U).\nonumber
\end{align}
\end{lemma}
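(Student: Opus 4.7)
The plan is to deduce the scaling identity directly from Definition~\ref{defi:FI} via a change of variables, without appealing to any of the heavier Fisher-information machinery. First I would set $Y\triangleq aX$ and, using the fact that $a>0$, observe that the conditional density transforms as $f_{Y|U}(y|u)=\tfrac{1}{a}f_{X|U}(y/a\,|\,u)$, whence $\log f_{Y|U}(y|u)=-\log a+\log f_{X|U}(y/a\,|\,u)$. Differentiating in $y$ and applying the chain rule produces $\partial_y\log f_{Y|U}(y|u)=\tfrac{1}{a}\,\partial_x\log f_{X|U}(x|u)\big|_{x=y/a}$, so that squaring this quantity already exposes an overall factor of $1/a^2$.

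Next I would insert this expression into $\FI(aX|U)=\bbE\big[(\partial_y\log f_{Y|U}(Y|U))^2\big]$ and write the expectation as a double integral against the joint density $f_{YU}(y,u)=\tfrac{1}{a}f_{XU}(y/a,u)$. Substituting $x=y/a$ so that $dy=a\,dx$, the pushforward factor $1/a$ from $f_{YU}$ and the Jacobian factor $a$ cancel, and what remains is $a^{-2}\bbE\big[(\partial_x\log f_{X|U}(X|U))^2\big]=a^{-2}\FI(X|U)$, which is the claimed identity.

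The argument has essentially no conceptual obstacle; the only care required is bookkeeping the three $a$-dependent factors, namely the density rescaling from the pushforward, the Jacobian from the change of integration variable, and the factor from the squared chain-rule derivative, and verifying that they combine to give precisely $a^{-2}$. The assumption $a\in\bbR_{++}$ ensures that the density transformation and the substitution involve no absolute values or orientation reversals, so no sign corrections arise.
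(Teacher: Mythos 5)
Your proposal is correct and follows essentially the same route as the paper: define $Y\triangleq aX$, write $f_{Y|U}(y|u)=\tfrac{1}{a}f_{X|U}(y/a\,|\,u)$, differentiate the log-density via the chain rule to extract the $1/a^2$ factor, and conclude by the substitution $z=y/a$ in the expectation. Your extra bookkeeping of the pushforward and Jacobian factors is just a more explicit rendering of the paper's final change-of-variables step.
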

The proof of Lemma~\ref{lemma:Scaling_FI} is available in Appendix~\ref{proof:Scaling_FI}.

Therefore, combining \eqref{eq:Cond_Cov_FI_Bound_2} and \eqref{eq:FI_VB_V} results to
\begin{align}
    \sigma_{X|\bar{V},\tilde{Y}_\calA}^2&\le\tilde{\sigma}_\calA^2-\tilde{\sigma}_\calA^4\FI(\tilde{Y}_\calA|V)\nonumber\\
    &\mathop = \limits^{(a)}\sigma_{X|V,\tilde{Y}_\calA}^2\nonumber\\
    &\mathop\le\limits^{(b)} D,\label{eq:Cond_Cov_FI_Bound_3}
\end{align}where 
\begin{itemize}
    \item[$(a)$] follows from Lemma~\ref{lemma:Cond_Cov_FI};
    \item[$(b)$] follows since we assumed that $V$ is feasible, i.e., $\sigma_{X|V,\tilde{Y}_\calA}^2\le D$.
\end{itemize}\eqref{eq:Cond_Cov_FI_Bound_3} means that the constructed Gaussian random variable $\bar{V}$ is feasible, which means that for each feasible $V$, there exists a feasible Gaussian $\bar{V}$ such that $C(V)=C(\bar{V})$.

\section{Proof of Lemma~\ref{lemma:Cond_Cov_FI}}
\label{proof:Cond_Cov_FI}
We have,
\begin{align}
&\frac{1}{2}\FI(X+N|V=v)\mathop=\limits^{(a)}\frac{\partial}{\partial\sigma_N^2}\dent(X+N|V=v)\nonumber\\
&=\frac{\partial}{\partial\sigma_N^2}\big[\MI(X;X+N|V=v)+\dent(X+N|X,V=v)\big]\nonumber\\
&\mathop=\limits^{(b)}\frac{\partial}{\partial\sigma_N^2}\big[\MI(X;X+N|V=v)+\dent(N)\big]\nonumber\\
&=\frac{\partial}{\partial\sigma_N^2}\big[\MI(X;X+N|V=v)\big]+\frac{1}{2\sigma_N^2}\nonumber\\
&\mathop=\limits^{(c)}\frac{\partial}{\partial\sigma_N^2}\big[\MI\left(X;\sigma_N^{-1}X+N'|V=v\right)\big]+\frac{1}{2\sigma_N^2}\nonumber\\
&\mathop=\limits^{(d)}-\sigma_N^{-4}\frac{\partial}{\partial u}\big[\MI\left(X;\sqrt{u}X+N'|V=v\right)\big]+\frac{1}{2\sigma_N^2}\nonumber\\
&\mathop=\limits^{(e)}-\frac{1}{2}\sigma_N^{-4}\bbE_{Y|V=v}\left[\sigma_{X|Y,V=v}^2\right]+\frac{1}{2}\sigma_N^{-2},\label{eq:FI_Cov_Lemma}
\end{align}where
\begin{itemize}
    \item[$(a)$] follows from \cite[Lemma~3]{EkremUlukus13_MIMO};
    \item[$(b)$] follows since $V$ and $X$ are independent of $N$;
    \item[$(c)$] holds with $N'$ a standard  Gaussian random variable;
    \item[$(d)$] follows by defining $u\triangleq\sigma_N^{-2}$;
    \item[$(e)$] follows from \cite[Theorem~1]{MI_MMSE}, with the input distribution $P_{X|V=v}$.
\end{itemize}From \eqref{eq:FI_Cov_Lemma}, we have
\begin{align}
    \bbE_{Y|V=v}\left[\sigma_{X|Y,V=v}^2\right]=\sigma_N^2-\sigma_N^4\FI(X+N|V=v).\label{eq:FI_Cov_Lemma_2}
\end{align}Now let $F_{V,Y}$ be the cumulative distribution function of $(V,Y)$, therefore, from \eqref{eq:FI_Cov_Lemma_2} we have
\begin{align}
\bbE_{Y,V}[\sigma_{X|Y,V}^2]&=\int\limits_{v}\int\limits_{y}\sigma_{X|Y=y,V=v}^2dF_{V,Y}\nonumber\\
&= \int\limits_{v}\big(\sigma_N^2-\sigma_N^4J(X+N|V=v)\big)dF_{V}\nonumber\\
&=\sigma_N^2-\sigma_N^4\int\limits_{v} J(X+N|V=v)dF_V\nonumber\\
&=\sigma_N^2-\sigma_N^4J(X+N|V).\nonumber%
\end{align}

\section{Proof of Lemma~\ref{lemma:Scaling_FI}}
\label{proof:Scaling_FI}
Define $Y\triangleq aX$, then $f_{Y|U}(y|u)=\frac{1}{\abs{a}}f_{X|U}\left(\frac{y}{a}|u\right)$, and from Definition~\ref{defi:FI} we have
\begin{align}
    \FI(aX|U)&=\FI(Y|U)\nonumber\\
    &=\bbE\left[\left(\frac{\partial\log f_{Y|U}(y|u)}{\partial y}\right)^2\right]\nonumber\\
    &=\bbE\left[\left(\frac{\partial}{\partial y}\log\left( \frac{1}{\abs{a}}f_{X|U}\left(\frac{y}{a}\Big|u\right)\right)\right)^2\right]\nonumber\\
    &=\bbE\left[\left(\frac{\partial}{\partial y}\log\left( f_{X|U}\left(\frac{y}{a}\Big|u\right)\right)\right)^2\right]\nonumber\\
    &\mathop=\limits^{(a)}\bbE\left[\frac{1}{a^2}\left(\frac{\partial}{\partial z}\log\left( f_{X|U}(z|u)\right)\right)^2\right]\nonumber\\
    &=\frac{1}{a^2}\FI(X|U)\nonumber,
\end{align}where $(a)$ follows by defining $z\triangleq\frac{y}{a}$.

\section{Achievability Proof of Theorem~\ref{thm:Capacity}}
\label{sec:Achie_Proof}
We first establish an achievability region for discrete sources in Section~\ref{subsec:Discrete}. Then, we extend this achievability result to continuous sources in Section~\ref{subsec:Continuous}. Note that to obtain this extension, it is critical to precisely characterize the convergence of all the vanishing terms in the proof for discrete sources in Section~\ref{subsec:Discrete}. Finally, we establish the achievability region by Theorem~\ref{thm:Capacity} for Gaussian sources in Section~\ref{subsec:Continuous_Thm1}.
\subsection{Discrete Sources}
\label{subsec:Discrete}
In this section, we prove the achievability of the following region for discrete sources.
\begin{theorem}
\label{thm:discrete_achievable}
For any access structure $\bbA$, the following region is achievable for discrete sources,
\begin{align}
   &\bigcup\limits_{\substack{U-V-X-\tilde{Y}_\calL \\ \max\limits_{\calA\in\bbA}\bbE\big[d\big(X,\hat{X}_\calA(V,\mathbf{Y}_\calA)\big)\big]\le D}}\nonumber\\
   &\left.\begin{cases}(R,\Delta):\\
R>\max\limits_{\calA\in\bbA}\{\MI(V;X|\mathbf{Y}_\calA)\}\\
\Delta>\max\limits_{(\calA,\calB)\in(\bbA,\bbB)}\big\{\MI(V;X)-\MI(V;\mathbf{Y}_\calA|U)+\MI(X;\mathbf{Y}_\calB|U)\big\}\\
\end{cases}\hspace{-0.4cm}\right\}.\nonumber
\end{align}
\end{theorem}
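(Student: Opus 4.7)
The plan is to adapt the single-eavesdropper Wyner--Ziv / soft-covering scheme of \cite[Theorem~3]{VillardPianta} to the compound setting. Fix a joint distribution with $U-V-X-\mathbf{Y}_\calL$ and reconstruction maps $(\hat{x}_\calA)_{\calA\in\bbA}$ satisfying $\max_{\calA\in\bbA}\bbE\big[d(X,\hat{X}_\calA(V,\mathbf{Y}_\calA))\big]\le D$. I would build a superposition codebook: first draw $2^{nR_U}$ cloud centres $U^n(w)\sim P_U^{\otimes n}$; then, conditionally on each $U^n(w)$, draw $2^{n(R_V+R_B)}$ satellite codewords $V^n(w,\ell,b)\sim\prod_{i=1}^n P_{V|U}(\cdot|U_i(w))$, organised into $2^{nR_B}$ uniform bins indexed by $b$. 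The encoder seeks a jointly typical triple $(U^n(w),V^n(w,\ell,b),X^n)$ and transmits $M=(w,b)$; standard covering estimates require $R_U>I(U;X)$ and $R_U+R_V+R_B>I(U,V;X)$.

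Every authorised set $\calA\in\bbA$ then performs Wyner--Ziv decoding inside its received bin using $\mathbf{Y}_\calA^n$, which succeeds with vanishing probability of error whenever $R_V<I(V;\mathbf{Y}_\calA|U)$ for every $\calA\in\bbA$. Combining this with the covering condition and the Markov chain $U-V-X-\mathbf{Y}_\calA$ reduces to the single public-rate requirement $R=R_U+R_B>\max_{\calA\in\bbA}I(V;X|\mathbf{Y}_\calA)$, which matches the claimed bound. The distortion constraint then follows from a typical-average-lemma argument applied to each reconstruction $\hat{x}_\calA$.

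The heart of the proof is the leakage analysis. For a fixed $\calB\in\bbB$, I would decompose $\tfrac{1}{n}I(X^n;M,\mathbf{Y}_\calB^n)$ via the chain rule and the codebook structure and invoke a soft-covering / conditional-typicality estimate identical to the one in \cite[Theorem~3]{VillardPianta}: because that estimate only sees the marginals $P_{X\mathbf{Y}_\calA}$ and $P_{X\mathbf{Y}_\calB}$ through the codebook, it applies verbatim for the particular $\calA^\star$ used to set the bin rate, yielding
\begin{align*}
\tfrac{1}{n}I(X^n;M,\mathbf{Y}_\calB^n) &\le I(V;X)-I(V;\mathbf{Y}_{\calA^\star}|U)\\
&\quad+I(X;\mathbf{Y}_\calB|U)+\epsilon_n.
\end{align*}
This upper bound is pointwise dominated by $\max_{(\calA,\calB)\in(\bbA,\bbB)}\{I(V;X)-I(V;\mathbf{Y}_\calA|U)+I(X;\mathbf{Y}_\calB|U)\}$, so a union bound over the finitely many $\calB\in\bbB$ establishes the leakage constraint uniformly.

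The main obstacle is the compound aspect: one transmission $M$ must allow every authorised set to decode and simultaneously bound leakage against every unauthorised set. Reliability is handled by choosing the worst-case $\calA$ in the Wyner--Ziv binning; for security, the key observation is that neither the codebook nor the encoder depends on $\calB$, so the single-eavesdropper leakage analysis can be reused for each $\calB$ and then maximised. This produces an inner region in which the maximisation over $(\calA,\calB)$ sits \emph{inside} the union over auxiliaries, whereas the outer region of Theorem~\ref{thm:Upper_Bound} has this optimisation \emph{outside} the union; closing this ordering mismatch is precisely the role of the saddle-point argument highlighted in the introduction, which the paper later carries out in the Gaussian specialisation.
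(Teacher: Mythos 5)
Your leakage analysis and the overall superposition-plus-binning architecture follow the paper's route, but there is a genuine gap in the rate analysis: you transmit the cloud-centre index $w$ \emph{uncoded}. With $R_U>\MI(U;X)$ and the satellite bin sized so that every authorized set can decode ($R_V<\min_{\calA\in\bbA}\MI(V;\mathbf{Y}_\calA|U)$), your public rate is $R=R_U+R_B>\MI(U;X)+\MI(V;X|U)-\min_{\calA\in\bbA}\MI(V;\mathbf{Y}_\calA|U)=\MI(V;X)-\min_{\calA\in\bbA}\MI(V;\mathbf{Y}_\calA|U)$, whereas the theorem claims $R>\max_{\calA\in\bbA}\MI(V;X|\mathbf{Y}_\calA)=\MI(V;X)-\min_{\calA\in\bbA}\MI(V;\mathbf{Y}_\calA)$. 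Since $\MI(V;\mathbf{Y}_\calA)=\MI(V;\mathbf{Y}_\calA|U)+\MI(U;\mathbf{Y}_\calA)$, your bound is strictly larger whenever $U$ is correlated with the authorized side information, so the claimed region is not established for auxiliaries with non-trivial $U$. The step ``combining this with the covering condition and the Markov chain reduces to $R>\max_{\calA}\MI(V;X|\mathbf{Y}_\calA)$'' is exactly where the argument fails: the saving $\MI(U;\mathbf{Y}_\calA)$ cannot be recovered unless the $U$-layer is itself binned. The paper's code does precisely this: both layers are binned, the public message is the pair of bin indices $(m_p,m_s)$, the authorized users jointly recover the within-bin indices $(m'_p,m'_s)$ from $\mathbf{Y}_\calA^n$ subject to the two packing conditions $R'_p+R'_s<\min_{\calA}\MI(V;\mathbf{Y}_\calA)$ and $R'_s<\min_{\calA}\MI(V;\mathbf{Y}_\calA|U)$, and Fourier--Motzkin elimination of $(R_p,R'_p,R_s,R'_s)$ then yields exactly the two bounds in the theorem. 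This is not a cosmetic difference: in the Gaussian evaluation for the regime $\tr(\boldsymbol{\Sigma}_{\calA^\star}^{-1})>\tr(\boldsymbol{\Sigma}_{\calB^\star}^{-1})$ the paper takes $U=V$, and your scheme would then need $R>\MI(V;X)$ instead of $\MI(V;X|\tilde{Y}_{\calA^\star})$, missing the corner point of the capacity region.

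Two smaller points. First, your covering conditions are stated as $R_U>\MI(U;X)$ and $R_U+R_V+R_B>\MI(U,V;X)$; for successive covering you need the conditional condition $R_V+R_B>\MI(V;X|U)$, which is not implied by the sum condition. Second, the closing remark about a saddle point is not needed for this statement: the theorem being proved is itself an inner region with the maximization over $(\calA,\calB)$ inside the union, so no order-exchange argument is required at this stage (it only enters later when matching the Gaussian inner and outer regions).
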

\sloppy Fix $0<\epsilon<\epsilon_1<\epsilon_2$, and $(\calA,\calB)\in(\bbA,\bbB)$. Let $U$ and $V$ be two auxiliary random variables defined over the alphabets, $\calU$ and $\calV$, respectively, such that $P_{\mathbf{Y}_\calL XUV}=P_{\mathbf{Y}_\calL X}P_{V|X}P_{U|V}$.

\subsubsection{Random Codebook Generation}
\begin{itemize}
\item 
Let $C_U^{(n)}\triangleq\left(U^n(m_p,m'_p)\right)_{(m_p,m'_p)\in\calM_p\times\calM'_p}$, where $\calM_p\triangleq\blr{1}{2^{nR_p}}$ and $\calM'_p\triangleq\blr{1}{2^{nR'_p}}$, be a random codebook consisting of independent random sequences each generated according to $P_U^\on$. A realization of $C_U^{(n)}$ is denoted by $\calC_U^{(n)}\triangleq\left(u^n(m_p,m'_p)\right)_{(m_p,m'_p)\in\calM_p\times\calM'_p}$. The index $m_p$ is interpreted as a bin index $\calB_1(m_p)$ and $m'_p$ is the index of the codeword in the bin number $\calB_1(m_p)$.

\item
Fix $\calC_U^{(n)}$. For every $(m_p,m'_p)\in(\calM_p,\calM'_p)$, let $C_V^{(n)}(m_p,m'_p)\triangleq\big(V^n(m_p,m'_p,m_s,m'_s)\big)_{(m_s,m'_s)\in\calM_s\times\calM'_s}$, where $\calM_s\triangleq\blr{1}{2^{nR_s}}$ and $\calM'_s\triangleq\blr{1}{2^{nR'_s}}$, be a random codebook consisting of independent random sequences each generated according to $P_{V|U}^\on\big(\cdot|u^n(m_p,m'_p)\big)$. We denote a realization of $C_V^{(n)}(m_p,m'_p)$ by $\calC_V^{(n)}(m_p,m'_p)\triangleq\big(v^n(m_p,m'_p,m_s,m'_s)\big)_{(m_s,m'_s)\in\calM_s\times\calM'_s}$. The index $m_s$ is interpreted as a bin index $\calB_2(m_p,m'_p,m_s)$ and $m'_s$ is the index of the codeword in the bin number $\calB_2(m_p,m'_p,m_s)$.
\end{itemize}

\subsubsection{Encoding} Given $x^n$, the encoder finds the first indices $(m_p,m'_p)$ such that $\big(u^n(m_p,m'_p),x^n\big)\in\calT_{\epsilon_1}^{(n)}(U,X)$. Then, it finds the first indices $(m_s,m'_s)$ such that $\big(v^n(m_p,m'_p,m_s,m'_s),x^n\big)\in\calT_{\epsilon_1}^{(n)}\big(V,X|u^n(m_p,m'_p)\big)$. Finally, the encoder stores the index pair $m\triangleq(m_p,m_s)$ in the public database.

Note that the source $X$ is encoded into four messages $M_p$, $M'_p$, $M_s$, and $M'_s$, where $M_p$ and $M_s$ are publicly stored. $M'_p$ and $M'_s$ are not stored, but the authorized users in $\calA\in\bbA$ retrieve these indices, by the help of $\mathbf{Y}_\calA$, with small probability of error. 
The encoder wants the unauthorized sets of users to learn as little information as possible about the codeword $V^n(M_p,M'_p,M_s,M'_s)$, whereas the authorized sets of users must be able to reconstruct it.

\subsubsection{Decoding} Given $\mathbf{y}_\calA^n$ and $(m_p,m_s)$, the users in $\calA$ look for a unique pair $(\hat{m}'_p,\hat{m}'_s)$ such that $\big(u^n(m_p,\hat{m}'_p),v^n(m_p,\hat{m}'_p,m_s,\hat{m}'_s),x^n\big)\in\calT_{\epsilon_1}^{(n)}\big(U,V,X\big)$. Then, they compute the sequence $\hat{x}_i\triangleq \hat{x}_\calA\big(v_i(m_p,\hat{m}'_p,m_s,\hat{m}'_s),\mathbf{y}_{\calA,i}\big)$, for $i\in\intseq{1}{n}$, where the function $\hat{x}_\calA$ is defined such that $\bbE[d(X,\hat{X}_\calA)]\le D$.

\subsubsection{Error Analysis at the Authorized Users} 
Define the following events,
\begin{subequations}\label{eq:error_events}
\begin{align}
    \calE_1&\triangleq\big\{\big(X^n,\mathbf{Y}_\calL^n\big)\notin\calT_\epsilon^{(n)}(X,\mathbf{Y}_\calL)\big\},\label{eq:error_events_1}\\
    \calE_2&\triangleq\big\{\nexists (m_p,m'_p):\big(u^n(m_p,\hat{m}'_p),X^n\big)\in\calT_{\epsilon_1}^{(n)}(U,X)\big\},\label{eq:error_events_2}\\
    \calE_3&\triangleq\big\{\nexists (m_s,m'_s):\big(v^n(m_p,\hat{m}'_p,m_s,\hat{m}'_s),X^n\big)\in\nonumber\\
    &\qquad\calT_{\epsilon_1}^{(n)}\big(V,X|u^n(m_p,\hat{m}'_p)\big)\big\},\label{eq:error_events_3}\\
    \calE_4&\triangleq\big\{\exists \hat{m}'_p\ne m'_p,\hat{m}'_s:\big(u^n(m_p,\hat{m}'_p),v^n(m_p,\hat{m}'_p,m_s,\hat{m}'_s),\nonumber\\
    &\qquad\mathbf{y}_\calA^n\big)\in\big(\calB_1(m_p)\times\calB_2(m_p,m'_p,m_s)\times\boldsymbol{\calY}_\calA^n\big)\cap\nonumber\\
    &\qquad\calT_{\epsilon_2}^{(n)}\big(U,V,\mathbf{Y}_\calA\big)\big\},\label{eq:error_events_4}\\
    \calE_5&\triangleq\big\{\exists \hat{m}'_p,\hat{m}'_s\ne m'_s:\big(u^n(m_p,\hat{m}'_p),v^n(m_p,\hat{m}'_p,m_s,\hat{m}'_s),\nonumber\\
    &\qquad\mathbf{y}_\calA^n\big)\in\big(\calB_1(m_p)\times\calB_2(m_p,m'_p,m_s)\times\boldsymbol{\calY}_\calA^n\big)\cap\nonumber\\
    &\qquad\calT_{\epsilon_2}^{(n)}\big(U,V,\mathbf{Y}_\calA\big)\big\}.\label{eq:error_events_5}
\end{align}
\end{subequations}By the union bound, the probability of the event $\calE$ that an error occurs at the encoder or decoder satisfies 
\begin{align}
    \bbP\{\calE\}&\le\bbP\{\calE_1\}+\bbP\{\calE_1^c\cap\calE_2\}+\bbP\{\calE_1^c\cap\calE_2^c\cap\calE_3\}+\nonumber\\
    &\qquad\bbP\{\calE_4\}+\bbP\{\calE_5\cap\calE_4^c\}.\label{eq:Union_Bound}
\end{align}By \cite[Theorem~1.1]{Kramer_Book}, the first term on the \ac{RHS} of \eqref{eq:Union_Bound} vanishes as $n$ grows. The second term on the \ac{RHS} of \eqref{eq:Union_Bound}  is bounded as follows,
\begin{align}
    &\bbP\{\calE_1^c\cap\calE_2\}\nonumber\\
    &\le\Big(\bbP\left\{\big(U^n,X^n\big)\notin\calT_{\epsilon_1}^{(n)}(U,X)\big|\calE_1^c\right\}\Big)^{2^{n(R_p+R'_p)}}\nonumber\\
    &=\Big(1-\bbP\left\{\big(U^n,X^n\big)\in\calT_{\epsilon_1}^{(n)}(U,X)\big|\calE_1^c\right\}\Big)^{2^{n(R_p+R'_p)}}\nonumber\\
    &\le 2^{-2^{n(R_p+R'_p)}\bbP\left\{\big(U^n,X^n\big)\in\calT_{\epsilon_1}^{(n)}(U,X)\big|\calE_1^c\right\}}\nonumber\\
    &\le 2^{-\left(1-\eta_{\epsilon,\epsilon_1}^{(n)}\right)2^{n(R_p+R'_p)}2^{-n(\MI(U;X)+\epsilon_1\MI(U;X))}},\label{eq:error_2}
\end{align}where \eqref{eq:error_2} follows from \cite[Theorem~1.3]{Kramer_Book} with, $ \eta_{\epsilon,\epsilon_1}^{(n)}\triangleq2\abs{\calU}\abs{\calX}e^{-n\frac{(\epsilon_1-\epsilon)^2}{1+\epsilon}\mu_{U,X}}$ where $\mu_{U,X}\triangleq\min\limits_{(a,b)\in\text{supp}(P_{UX})}P_{UX}(a,b)$, the bound in \eqref{eq:error_2} vanishes when $n$ grows if
    \begin{align}
    R_p+R'_p>\MI(U;X)+\epsilon_1\MI(U;X).\label{eq:enc_error_2}
    \end{align}

The third term on the \ac{RHS} of \eqref{eq:Union_Bound} is bounded as follows,
\begin{align}
    &\bbP\{\calE_1^c\cap\calE_2^c\cap\calE_3\}\nonumber\\
    &\le\Big(\bbP\left\{\big(V^n,X^n\big)\notin\calT_{\epsilon_1}^{(n)}\big(V,X|u^n(m_p,m'_p)\big)\big|\calE_1^c\cap\calE_2^c\right\}\Big)^{2^{n(R_s+R'_s)}}\nonumber\\
    &\le 2^{-\left(1-\delta_{\epsilon,\epsilon_1}^{(n)}\right)2^{n\left(R_s+R'_s\right)}2^{-n\left(\MI(V;X|U)+\epsilon_1\MI(V;X|U)\right)}},\label{eq:error_3}
\end{align}where \eqref{eq:error_3} holds similar to \eqref{eq:error_2} with $\delta_{\epsilon,\epsilon_1}^{(n)}\triangleq 2\abs{\calU}\abs{\calV}\abs{\calX}e^{-n\frac{(\epsilon_1-\epsilon)^2}{1+\epsilon}\mu_{U,V,X}}$ and $\mu_{U,V,X}\triangleq\min\limits_{(a,b,c)\in\text{supp}(P_{UVX})}P_{UVX}(a,b,c)$, and the bound in \eqref{eq:error_3} vanishes when $n$ grows if,
\begin{align}
    R_s+R'_s>\MI(V;X|U)+\epsilon_1\MI(V;X|U).\label{eq:enc_error_3}
\end{align}
Next, the fourth term on the \ac{RHS} of \eqref{eq:Union_Bound} is bounded as,
\begin{align}
    &\bbP\{\calE_4\}\nonumber\\
    &=\bbP\big\{\exists \hat{m}'_p\ne m'_p,m'_s:\big(u^n(m_p,\hat{m}'_p),v^n(m_p,\hat{m}'_p,m_s,m'_s),\mathbf{y}_\calA^n\big)\nonumber\\
    &\qquad\in\big(\calB_1(m_p)\times\calB_2(m_p,\hat{m}'_p,m_s)\times\boldsymbol{\calY}_\calA^n\big)\cap\calT_{\epsilon_2}^{(n)}\big(U,V,\mathbf{Y}_\calA\big)\big\},\nonumber\\
    &\le 2^{n(R'_p+R'_s)}\bbP\big\{\big(U^n,V^n,\mathbf{Y}_\calA^n\big)\in\calT_{\epsilon_2}^{(n)}\big(U,V,\mathbf{Y}_\calA\big)\big|\big(U^n,V^n\big)\in\calT_{\epsilon_1}^{(n)}\big\}\nonumber\\
    &\le2^{n(R'_p+R'_s)}2^{-n[\MI(V;\mathbf{Y}_\calA)-{\epsilon_2}\MI(V;\mathbf{Y}_\calA)]},\label{eq:error_4}
\end{align}where \eqref{eq:error_4} follows from \cite[Theorem~1.3]{Kramer_Book}, and the \ac{RHS} of \eqref{eq:error_4} vanishes when $n$ grows if,
\begin{align}
    R'_p+R'_s<\MI(V;\mathbf{Y}_\calA)-{\epsilon_2}\MI(V;\mathbf{Y}_\calA).\nonumber
\end{align}Considering that the probability of error must vanish for all $\calA\in\bbA$, we need,
\begin{align}
    R'_p+R'_s<\min\limits_{\calA\in\bbA}\{\MI(V;\mathbf{Y}_\calA)-{\epsilon_2}\MI(V;\mathbf{Y}_\calA)\}.\label{eq:enc_error_4_C}
\end{align}

Finally, similar to \eqref{eq:error_4} ,the last term on the \ac{RHS} of \eqref{eq:Union_Bound} is bounded as,
\begin{align}
    \bbP\{\calE_5\cap\calE_4^c\}
    &\le2^{nR'_s}2^{-n[\MI(V;\mathbf{Y}_\calA|U)-{\epsilon_2}\MI(V;\mathbf{Y}_\calA|U)]},\label{eq:error_5}
\end{align}and the \ac{RHS} of \eqref{eq:error_5} vanishes when $n$ grows if,
\begin{align}
    R'_s<\MI(V;\mathbf{Y}_\calA|U)-\epsilon_2\MI(V;\mathbf{Y}_\calA|U).\nonumber
\end{align}Considering that the probability of error must vanish for all $\calA\in\bbA$, we need,
\begin{align}
    R'_s<\min\limits_{\calA\in\bbA}\{\MI(V;\mathbf{Y}_\calA|U)-\epsilon_2\MI(V;\mathbf{Y}_\calA|U)\}.\label{eq:enc_error_5_C}
\end{align}

\subsubsection{Analysis of Expected Distortion at the Authorized Users} We have
\begin{align}
    &\bbE\left[d\left(X^n,\hat{X}\big(f(X^n),\mathbf{Y}_\calA^n\big)\right)\right]\nonumber\\
    &\le d_{\max}\bbP\{\calE\}+(1-\bbP\{\calE\})\bbE\left[d\left(X^n,\hat{X}_\calA\big(f(X^n),\mathbf{Y}_\calA^n\big)\right)\Big|\calE^c\right]\nonumber\\
    &\mathop\le\limits^{(a)}\bbE\left[d\left(X,\hat{X}_\calA\big(V,\mathbf{Y}_\calA\big)\right)\right]+d_{\max}\bbP\{\calE\}+d_{\max}\abs{\calV}\abs{\calX}\abs{\boldsymbol{\calY}_\calA}\epsilon_2.\label{eq:Distortion_Analysis}%
\end{align}where $(a)$ follows since for every $(v^n,x^n,\mathbf{y}_\calA^n)\in\calT_{\epsilon_2}^{(n)}(V,X,\mathbf{Y}_\calA)$,
\begin{align}
    &d\left(x^n,\hat{X}_\calA\big(f(x^n),\mathbf{y}_\calA^n\big)\right)=\frac{1}{n}\sum\limits_{i=1}^nd\left(x_i,\hat{X}_\calA\big(v_i,\mathbf{y}_{\calA,i}\big)\right)\nonumber\\
    &\mathop=\limits^{(a)}\frac{1}{n}\sum\limits_{(v,x,\mathbf{y}_\calA)\in(\calV\times\calX\times\boldsymbol{\calY}_\calA)}d\left(x,\hat{X}_\calA\big(v,\mathbf{y}_\calA\big)\right)\times\nonumber\\
    &\quad N\left(v,x,\mathbf{y}_\calA\big|v^n,x^n,\mathbf{y}_\calA^n\right)\nonumber\\
    &=\bbE\left[d\left(X,\hat{X}_\calA\big(V,\mathbf{Y}_\calA\big)\right)\right]+\nonumber\\
    &\sum\limits_{(v,x,\mathbf{y}_\calA)\in(\calV\times\calX\times\boldsymbol{\calY}_\calA)}d\left(x,\hat{X}_\calA\big(v,\mathbf{y}_\calA\big)\right)\times\nonumber\\
    &\left(\frac{1}{n}N\left(v,x,\mathbf{y}_\calA\big|v^n,x^n,\mathbf{y}_\calA^n\right)-P(v,x,\mathbf{y}_\calA)\right)\nonumber\\
    &\mathop\le\limits^{(b)}\bbE\left[d\left(X,\hat{X}_\calA\big(V,\mathbf{Y}_\calA\big)\right)\right]+d_{\max}\abs{\calV}\abs{\calX}\abs{\boldsymbol{\calY}_\calA}\epsilon_2,\label{eq:disto_analysis}
\end{align}where $(a)$ follows by defining $N(x|x^n)\triangleq\sum_{i=1}^n\indic{1}_{\{x_i=x\}}$; and $(b)$ follows since $(v^n,x^n,\mathbf{y}_\calA^n)\in\calT_{\epsilon_2}^{(n)}(V,X,\mathbf{Y}_\calA)$.

\subsubsection{Information Leakage Analysis at Unauthorized Users}
We will need the following lemma.
\begin{lemma}
\label{lemma:Entropy_Upper_Bound}
For the coding scheme described above, the following inequality holds, 
\begin{align}
    \ent\big(X^n|M_p,M'_p,\mathbf{Y}_\calB^n\big)\ge n\big[\ent\big(X|U,\mathbf{Y}_\calB\big)-\eta_n\big],\nonumber%
\end{align}where 
    \begin{align}
        \eta_n&\triangleq\epsilon_2\MI(U;X)-\log\left(1-\delta_{\epsilon_1,\epsilon_2}^{(n)}\right)+\nonumber\\
        &\quad\frac{-\bbP\{\calE_1^c\cap\calE_2\}\log \bbP\{\calE_1^c\cap\calE_2\}}{n}+\nonumber\\
        &\quad\frac{1}{n(1-t_n)}+\epsilon_2\max\limits_{\calB\in\bbB}\ent(\mathbf{Y}_\calB|U)+\frac{1}{n}+\nonumber\\
        &\quad\bbP\{\calE_1\}\bbP\{\calE_1^c\cap\calE_2\}\log\max\limits_{\calB\in\bbB}\abs{\boldsymbol{\calY}_\calB},\label{eq:eta_defi}
    \end{align} where $\delta_{\epsilon_1,\epsilon_2}^{(n)}$ and $t_n$ are bounded in \eqref{eq:Kramer_Lemma_1} and $\bbP\{\calE_1\}$ and $\bbP\{\calE_1^c\cap\calE_2\}$ are defined in \eqref{eq:Union_Bound} and \eqref{eq:error_2},~respectively. 
\end{lemma}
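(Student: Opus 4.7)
The strategy is to lower-bound $\ent(X^n | M_p, M'_p, \mathbf{Y}_\calB^n)$ via a conditional-typicality argument on the $U$-layer codeword, following the standard template for equivocation analysis in binning schemes. Since $U^n(M_p, M'_p)$ is a deterministic function of $(M_p, M'_p)$ for a fixed codebook, the fact that conditioning reduces entropy gives
\begin{equation*}
\ent(X^n | M_p, M'_p, \mathbf{Y}_\calB^n) \geq \ent\bigl(X^n \,\big|\, U^n(M_p, M'_p), \mathbf{Y}_\calB^n\bigr).
\end{equation*}
I would then introduce the indicator $F \triangleq \indic{1}_G$ of the success event $G \triangleq \calE_1^c \cap \calE_2^c$, condition on $F$, and split the conditional entropy into the success and failure components. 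The failure component, together with the ambiguity coming from conditioning on $F$ itself, contributes $\frac{-\bbP\{\calE_1^c \cap \calE_2\}\log \bbP\{\calE_1^c \cap \calE_2\}}{n}$, $\frac{1}{n}$, and $\bbP\{\calE_1\}\bbP\{\calE_1^c \cap \calE_2\}\log\max_\calB\abs{\boldsymbol{\calY}_\calB}$ to $\eta_n$ via elementary Fano-style estimates.

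On the success event $G$, the encoder has enforced $(U^n, X^n) \in \calT^{(n)}_{\epsilon_1}(U,X)$ by construction. Since $\mathbf{Y}_\calB^n$ is generated from $X^n$ through the memoryless channel $P_{\mathbf{Y}_\calB|X}^{\otimes n}$ and the Markov chain $U - X - \mathbf{Y}_\calB$ holds, a conditional-typicality argument in the spirit of \cite[Theorem~1.3]{Kramer_Book} guarantees that $(U^n, X^n, \mathbf{Y}_\calB^n) \in \calT^{(n)}_{\epsilon_2}(U, X, \mathbf{Y}_\calB)$ with probability at least $1 - \delta^{(n)}_{\epsilon_1,\epsilon_2}$, which accounts for the $-\log(1 - \delta^{(n)}_{\epsilon_1,\epsilon_2})$ and $\frac{1}{n(1-t_n)}$ terms in $\eta_n$. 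On the resulting jointly typical event, strong-typicality bounds yield
\begin{equation*}
-\log \bbP(X^n = x^n \mid U^n = u^n, \mathbf{Y}_\calB^n = \mathbf{y}_\calB^n) \geq n\bigl[\ent(X | U, \mathbf{Y}_\calB) - \epsilon_2 \MI(U;X) - \epsilon_2 \ent(\mathbf{Y}_\calB | U)\bigr]
\end{equation*}
for every $(u^n, x^n, \mathbf{y}_\calB^n) \in \calT^{(n)}_{\epsilon_2}$, where the two $\epsilon_2$ slacks come from the typical-set entropy estimates for $(U^n, \mathbf{Y}_\calB^n)$ and $(U^n, X^n, \mathbf{Y}_\calB^n)$, respectively. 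Passing to the worst case $\max_{\calB \in \bbB}$ to obtain a bound uniform over the compound family produces the remaining two terms in $\eta_n$, and assembling all the pieces yields the claim.

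The main obstacle is the precise bookkeeping needed to extract each of the seven terms in $\eta_n$ from the correct failure mode; conceptually the argument is a classical equivocation lemma, but tracking the compound structure, namely the $\max_\calB$ both on the typicality slack and on the Fano penalty, and the dependence between the two binning layers (on $U$ and on $V$) requires attention at each step. A subtle point worth highlighting is that conditioning on $(M_p, M'_p)$ reveals only the $U$-layer codeword but not the $V$-layer one, so the conditional distribution of $X^n$ retains the randomness coming from the choice of $(M_s, M'_s)$; this is exactly what the typical-set bound on $\ent(X|U, \mathbf{Y}_\calB)$ captures, and it will be essential when this lemma is combined with separate estimates on $I(V^n; X^n | U^n, \mathbf{Y}_\calB^n)$ in the outer leakage calculation.
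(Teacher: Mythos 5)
There are two genuine gaps in your proposal. First, your opening inequality is reversed: since $U^n(M_p,M'_p)$ is a deterministic function of $(M_p,M'_p)$ for a fixed codebook, conditioning on $(M_p,M'_p,\mathbf{Y}_\calB^n)$ is at least as informative as conditioning on $(U^n,\mathbf{Y}_\calB^n)$, so ``conditioning reduces entropy'' gives $\ent(X^n|M_p,M'_p,\mathbf{Y}_\calB^n)\le \ent(X^n|U^n,\mathbf{Y}_\calB^n)$, not $\ge$; this step therefore cannot launch a lower bound unless you additionally prove the index-to-codeword map is (essentially) injective, which you neither claim nor need to be true. Second, and more fundamentally, the pointwise bound $-\log \bbP(X^n=x^n\mid U^n=u^n,\mathbf{Y}_\calB^n=\mathbf{y}_\calB^n)\ge n[\ent(X|U,\mathbf{Y}_\calB)-\delta]$ on jointly typical triples is a property of the product law $P_{UX\mathbf{Y}_\calB}^{\otimes n}$, but the conditional law appearing here is the code-induced one: $U^n$ is the \emph{first} codebook entry jointly typical with $X^n$, so given $U^n=u^n$ the conditional distribution of $X^n$ is a codebook-dependent tilting of $P_X^{\otimes n}$ (restricted to sequences typical with $u^n$ and atypical with all earlier codewords), not $P_{X|U}^{\otimes n}(\cdot|u^n)$. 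No general pointwise lower bound of that form holds for this induced distribution, so the central step of your argument is unjustified; the terms you assign to the various pieces of $\eta_n$ are consequently reverse-engineered rather than derived.

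The paper avoids both issues by never touching the induced conditional law of $X^n$ pointwise. It uses the exact chain-rule identity $\ent(X^n|M_p,M'_p,\mathbf{Y}_\calB^n)=n\ent(X,\mathbf{Y}_\calB)-\ent(M_p,M'_p)-\ent(\mathbf{Y}_\calB^n|M_p,M'_p)$ (valid because $(M_p,M'_p)$ is a function of $X^n$ and the source is i.i.d.), and then proves two \emph{upper} bounds: (i) $\ent(M_p,M'_p)\le n(1+\epsilon_2)\MI(U;X)-\log(1-\delta_{\epsilon_1,\epsilon_2}^{(n)})+\lambda_n^{(1)}+\lambda_n^{(2)}$, obtained by exploiting that the selected index has a geometric-type law $t_n(1-t_n)^{j-1}$ (this is where the $-\log(1-\delta_{\epsilon_1,\epsilon_2}^{(n)})$, $\tfrac{1}{n(1-t_n)}$ and $-\bbP\{\calE_1^c\cap\calE_2\}\log\bbP\{\calE_1^c\cap\calE_2\}$ terms actually originate); and (ii) $\ent(\mathbf{Y}_\calB^n|M_p,M'_p)\le n\ent(\mathbf{Y}_\calB|U)+\lambda_n^{(3)}$, obtained by truncating $\mathbf{Y}_\calB^n$ to the set conditionally typical with the selected $u^n$ and using a support-size bound (this yields the $\epsilon_2\max_{\calB}\ent(\mathbf{Y}_\calB|U)$, $\tfrac{1}{n}$ and $\bbP\{\calE_1\}\bbP\{\calE_1^c\cap\calE_2\}\log\max_{\calB}\abs{\boldsymbol{\calY}_\calB}$ terms). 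Combining these with the Markov chain $U-X-\mathbf{Y}_\calB$ gives $\ent(X|U,\mathbf{Y}_\calB)-\eta_n$. If you want to salvage your route, you would need either this counting/chain-rule machinery or a separate argument (e.g., of resolvability type) relating the induced distribution to the product one, which is substantially more than what your proposal provides.
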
The proof of Lemma~\ref{lemma:Entropy_Upper_Bound} is similar to that of \cite[Lemma~6]{VillardPianta} and is available in Appendix~\ref{proof:Entropy_Upper_Bound}, in this paper, the precise characterization of $\eta_n$ is critical to extend our result to continuous~sources.

We now bound the information leakage at the unauthorized set of users $\calB\in\bbB$ as,
\begin{align}
&\frac{1}{n}\MI(X^n;M,\mathbf{Y}_\calB^n)\nonumber\\
&= \frac{1}{n}\big[\ent(X^n)-\ent\big(X^n|M,\mathbf{Y}_\calB^n\big)\big]\nonumber\\
&\mathop=\limits^{(a)}\ent(X)-\frac{1}{n}\ent\big(X^n|M_p,M_s,\mathbf{Y}_\calB^n\big)\nonumber\\
&=\ent(X)+\frac{1}{n}\big[\MI(X^n;M_s|M_p,\mathbf{Y}_\calB^n)-\ent\big(X^n|M_p,\mathbf{Y}_\calB^n\big)\big]\nonumber\\
&\mathop\le\limits^{(b)}\ent(X)+\frac{1}{n}\big[\ent(M_s)-\ent\big(X^n|M_p,M'_p,\mathbf{Y}_\calB^n\big)\big]\nonumber\\
&\mathop\le\limits^{(c)}\ent(X)+R_s-\ent\big(X|U,\mathbf{Y}_\calB\big)+\eta_n\nonumber\\
&=\MI\big(X;U,\mathbf{Y}_\calB\big)+R_s+\eta_n,\label{eq:IL_Ana_1}
\end{align}where
\begin{itemize}
    \item[$(a)$] follows since $X^n$ is \ac{iid}, and by the definition of $M$;
    \item[$(b)$] follows by bounding the conditional mutual information with the unconditional entropy and because conditioning does not increase the entropy;
    \item[$(c)$] follows from Lemma~\ref{lemma:Entropy_Upper_Bound}.
\end{itemize}
Since the leakage condition must hold for any $\calB\in\bbB$, the bound in \eqref{eq:IL_Ana_1} means that we need
\begin{align}
    \Delta-R_s>\max\limits_{\calB\in\bbB}\{\MI\big(X;U,\mathbf{Y}_\calB\big)\}+\eta_n.\label{eq:enc_error_6_C}
\end{align}Applying Fourier-Motzkin to \eqref{eq:enc_error_2}, \eqref{eq:enc_error_3} \eqref{eq:enc_error_4_C}, \eqref{eq:enc_error_5_C}, and  \eqref{eq:enc_error_6_C} results to the following inequalities,
\begin{subequations}
\begin{align}
R&>\MI(V;X|U)-(1-\epsilon_2)\min\limits_{\calA\in\bbA}\{\MI(V;\mathbf{Y}_\calA|U)\}+\nonumber\\
&\quad\epsilon_1\MI(V;X|U)\label{eq:Achievable_R_1},\\
R&>\MI(V;X)-(1-\epsilon_2)\min\limits_{\calA\in\bbA}\{\MI(V;\mathbf{Y}_\calA)\}+\epsilon_1\MI(U;X)+\nonumber\\
&\quad\epsilon_1\MI(V;X|U)\label{eq:Achievable_R_2},\\
\Delta&>\MI(V;X|U)-(1-\epsilon_2)\min\limits_{\calA\in\bbA}\{\MI(V;\mathbf{Y}_\calA)\}+\nonumber\\
&\quad\max\limits_{\calB\in\bbB}\{\MI(X;U,\mathbf{Y}_\calB)\}+\epsilon_1\MI(V;X|U)+\eta_n\label{eq:Achievable_D_3},\\
\Delta&>\MI(V;X|U)-(1-\epsilon_2)\min\limits_{\calA\in\bbA}\{\MI(V;\mathbf{Y}_\calA|U)\}+\nonumber\\
&\quad\max\limits_{\calB\in\bbB}\{\MI(X;U,\mathbf{Y}_\calB)\}+\epsilon_1\MI(V;X|U)+\eta_n\label{eq:Achievable_D_4}.
\end{align}
\end{subequations}Next, we show that \eqref{eq:Achievable_R_1} and \eqref{eq:Achievable_D_3} are redundant. \eqref{eq:Achievable_D_3} is redundant because for $\calA^\prime\in \argmin\limits_{\calA\in\bbA}\MI(V;\mathbf{Y}_\calA)$, we have
\begin{align}
\min\limits_{\calA\in\bbA}\MI(V;\mathbf{Y}_\calA)&=\MI(V;\mathbf{Y}_{\calA^\prime})\nonumber\\
&\mathop=\limits^{(a)}\MI(U,V;\mathbf{Y}_{\calA^\prime})\nonumber\\
&\geq\MI(V;\mathbf{Y}_{\calA^\prime}|U)\nonumber\\
&\geq \min\limits_{\calA\in\bbA}\{\MI(V;\mathbf{Y}_{\calA}|U)\},\nonumber%
\end{align}where $(a)$ follows since $U-V-\mathbf{Y}_\calA$, forms a Markov chain.  
\eqref{eq:Achievable_R_1} is also redundant, because we can rewrite \eqref{eq:Achievable_R_2} as,
\begin{align}
R&>\MI(V;X|U)+\MI(U;X)-(1-\epsilon_2)\min\limits_{\calA\in\bbA}\{\MI(U,V;\mathbf{Y}_\calA)\}+\nonumber\\
&\quad\epsilon_1\MI(U;X)+\epsilon_1\MI(V;X|U)\nonumber\\
&=\MI(V;X|U)-(1-\epsilon_2)\min\limits_{\calA\in\bbA}\{\MI(V;\mathbf{Y}_\calA|U)+\MI(U;\mathbf{Y}_\calA)-\nonumber\\
&\quad\MI(U;X)\}+(\epsilon_1+\epsilon_2)\MI(U;X)+\epsilon_1\MI(V;X|U),\nonumber%
\end{align}and since, by the Markov chain $U-X-\mathbf{Y}_\calA$, $\MI(U;\mathbf{Y}_\calA)-\MI(U;X)\le 0$, \eqref{eq:Achievable_R_2} implies \eqref{eq:Achievable_R_1}. Then, using the Markov chain $U-V-X-\mathbf{Y}_\calL$, the constraints \eqref{eq:Achievable_R_2} and \eqref{eq:Achievable_D_4} can be written~as,
\begin{subequations}\label{eq:Simplified_Region}
\begin{align}
R&>\max\limits_{\calA\in\bbA}\{\MI(V;X|\mathbf{Y}_\calA)\}+\max\limits_{\calA\in\bbA}\delta_{\epsilon,1}(\calA)+\epsilon',\label{eq:Simplified_R}\\
\Delta&>\max\limits_{(\calA,\calB)\in(\bbA,\bbB)}\big\{\MI(V;X)-\MI(V;\mathbf{Y}_\calA|U)+\MI(X;\mathbf{Y}_\calB|U)\big\}+\nonumber\\
&\quad\max\limits_{(\calA,\calB)\in(\bbA,\bbB)}\delta_{\epsilon,2}(\calA,\calB)+\epsilon',\label{eq:Simplified_D}
\end{align}
\end{subequations}where $\delta_{\epsilon,1}(\calA)\triangleq\epsilon_2\MI(V;\mathbf{Y}_\calA)$, $\epsilon'\triangleq\epsilon_1\left(\MI(U;X)+\MI(V;X|U)\right)$, $\delta_{\epsilon,2}(\calA,\calB)\triangleq\epsilon_2\MI(V;\mathbf{Y}_\calA|U)+\eta_n$. 
\subsection{Continuous Sources}
\label{subsec:Continuous}
We now prove that Section~\ref{subsec:Discrete} extends to continuous sources by quantizing the source observations. The main challenge of the quantization is that it can result in underestimating the information that the unauthorized sets of users can learn about the source. However, we will show that we can overcome this issue by making the quantization fine enough. We will use the following lemma.
\begin{lemma}[\hspace{-0.08mm}\cite{Cover_Book,Pinsker64,Fano61}]
\label{lemma:quantization} Let $X$ and $Y$ be two real-valued random variables with distributions $P_X$ and $P_Y$, respectively. Let $\calC_{\Phi_1}=\{C_i\}_{i\in\calI}$ and $\calK_{\Phi_2}=\{K_j\}_{j\in\calJ}$ be two partitions of the real line for $X$ and $Y$, respectively, such that for any $i\in\calI$, $P_X[C_i]=\Phi_1$ and for any $j\in\calJ$, $P_Y[K_j]=\Phi_2$, where $\Phi_1>0$ and $\Phi_2>0$. We denote the quantized versions of $X$ and $Y$ with  respect to the partitions $\calC_{\Phi_1}$ and $\calK_{\Phi_2}$ by $X_{\Phi_1}$ and $Y_{\Phi_2}$, respectively. Then, we have
\begin{align}
    \MI(X;Y)=\lim\limits_{\Phi_1,\Phi_2\to 0}\MI(X_{\Phi_1};Y_{\Phi_2}).\nonumber%
\end{align}
\end{lemma}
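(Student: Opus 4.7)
The plan is to establish the claim by combining matching upper and lower bounds. The upper bound is immediate from the data processing inequality: since $X_{\Phi_1}$ and $Y_{\Phi_2}$ are deterministic (measurable) functions of $X$ and $Y$, one has $\MI(X_{\Phi_1};Y_{\Phi_2})\le \MI(X;Y)$ for every $\Phi_1,\Phi_2>0$, and thus $\limsup_{\Phi_1,\Phi_2\to 0}\MI(X_{\Phi_1};Y_{\Phi_2})\le \MI(X;Y)$.

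For the matching lower bound, I would invoke the classical Dobrushin--Pinsker characterization of mutual information for general real-valued random variables, namely $\MI(X;Y)=\sup_{\calP,\calQ}\MI(X_\calP;Y_\calQ)$, where the supremum is over finite measurable partitions $\calP,\calQ$ of $\bbR$. Given $\varepsilon>0$, I would pick finite partitions $\calP^\star=\{A_i\}_{i=1}^m$ and $\calQ^\star=\{B_j\}_{j=1}^\ell$ such that $\MI(X_{\calP^\star};Y_{\calQ^\star})\ge \MI(X;Y)-\varepsilon$. The key approximation step is that, as $\Phi_1\to 0$, each cell $A_i$ of $\calP^\star$ is a disjoint union of cells of $\calC_{\Phi_1}$ except for at most two cells of $\calC_{\Phi_1}$ that straddle an endpoint of $A_i$; the total $P_X$-probability of all such boundary cells is therefore at most $2m\Phi_1$, and symmetrically at most $2\ell\Phi_2$ on the $Y$ side.

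I would then form a common refinement $\widetilde\calP$ of $\calC_{\Phi_1}$ and $\calP^\star$ (splitting each boundary cell of $\calC_{\Phi_1}$ along the corresponding endpoint of $\calP^\star$) and similarly $\widetilde\calQ$. By monotonicity of mutual information under partition refinement, $\MI(X_{\widetilde\calP};Y_{\widetilde\calQ})\ge \MI(X_{\calP^\star};Y_{\calQ^\star})$, while the chain rule yields
\begin{align}
\MI(X_{\widetilde\calP};Y_{\widetilde\calQ})\le \MI(X_{\Phi_1};Y_{\Phi_2})+\ent(X_{\widetilde\calP}|X_{\Phi_1})+\ent(Y_{\widetilde\calQ}|Y_{\Phi_2}),\nonumber
\end{align}
where the two conditional entropies are supported on the boundary cells and are therefore of order $m\Phi_1\log(1/\Phi_1)$ and $\ell\Phi_2\log(1/\Phi_2)$, respectively. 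Chaining the inequalities and sending first $\Phi_1,\Phi_2\to 0$ and then $\varepsilon\to 0$ yields $\liminf_{\Phi_1,\Phi_2\to 0}\MI(X_{\Phi_1};Y_{\Phi_2})\ge \MI(X;Y)$, which completes the argument. The main obstacle is the careful bookkeeping of the common refinement so that the extra partition blocks carry vanishing mass in a way that is uniform in the fixed reference partitions; since $m$ and $\ell$ are chosen before $\Phi_1,\Phi_2$ are sent to zero, the linear-in-$\Phi$ boundary mass translates into a vanishing conditional entropy, but one must also note that the existence of the equal-probability partitions implicitly requires $P_X$ and $P_Y$ to be non-atomic, which is consistent with the continuous-source setting to which the lemma is applied.
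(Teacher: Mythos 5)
The paper never proves this lemma: it is imported as a classical fact and attributed to Pinsker, Fano, and Cover--Thomas, so there is no in-paper argument to compare against. Your proposal is essentially the standard proof of that classical fact, and its outline is sound: the data-processing inequality gives $\MI(X_{\Phi_1};Y_{\Phi_2})\le \MI(X;Y)$ for all $\Phi_1,\Phi_2>0$, and the Dobrushin--Pinsker characterization $\MI(X;Y)=\sup_{\calP,\calQ}\MI(X_\calP;Y_\calQ)$ combined with a common refinement and a vanishing boundary-mass estimate gives the matching lower bound. Your closing remark that the equal-probability partitions exist only when $P_X,P_Y$ are non-atomic is also correct and is implicit in the lemma's hypothesis; in the paper's Gaussian setting this is automatic.

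Three points would need tightening for full rigor, none of them conceptual. First, the supremum in the Dobrushin--Pinsker characterization is over arbitrary finite measurable partitions, whereas your bookkeeping with ``endpoints'' requires the cells $A_i$ (and the cells of $\calC_{\Phi_1}$, $\calK_{\Phi_2}$, which the lemma implicitly takes to be quantile intervals) to be intervals; you should first reduce to interval partitions, e.g., by noting that finite unions of intervals generate the Borel $\sigma$-algebra and that refining partitions can only increase the quantized mutual information, so the supremum over interval partitions already equals $\MI(X;Y)$. Second, conditioned on a boundary cell of $\calC_{\Phi_1}$, the common refinement splits that cell into at most $m$ pieces, so $\ent(X_{\widetilde{\calP}}|X_{\Phi_1})\le 2m\Phi_1\log m$; your stated order $m\Phi_1\log(1/\Phi_1)$ is not where the bound comes from, although either expression vanishes as $\Phi_1\to 0$ with $m$ fixed, so the conclusion stands. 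Third, if $\MI(X;Y)=\infty$ the step ``choose partitions within $\varepsilon$'' should be replaced by choosing partitions whose quantized mutual information exceeds an arbitrary level $K$; the same chain of inequalities then yields $\liminf_{\Phi_1,\Phi_2\to0}\MI(X_{\Phi_1};Y_{\Phi_2})\ge K$ for every $K$. With these patches your argument is a correct, self-contained substitute for the citation.
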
Note that a quantization of $\mathbf{Y}_\calB^n$, $\calB\in\bbB$, can lead to the underestimation of $\MI(X^n;M,\mathbf{Y}_\calB^n)$. The following lemma shows that the leakage constraint in Definition~\ref{eq:Defi_Achi_Rate} is not affected, provided that the quantization is sufficiently fine.
\begin{lemma}
\label{lemma:quantization_2}
If the quantization $X_{\Phi_1}^n$ of $X^n$ and $\mathbf{Y}_{\calB,\Phi_2}^n$ of $\mathbf{Y}_\calB^n$, for every $\calB\in\bbB$, are fine enough, then for every $\epsilon>0$,
\begin{align}
    \max\limits_{\calB\in\bbB}\MI(X^n;M,\mathbf{Y}_\calB^n)&\le\max\limits_{\calB\in\bbB}\MI(X_{\Phi_1}^n;M,\mathbf{Y}_{\calB,\Phi_2}^n)+\epsilon.\nonumber%
\end{align}
\end{lemma}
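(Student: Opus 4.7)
The plan is to reduce Lemma~\ref{lemma:quantization_2} to Lemma~\ref{lemma:quantization}, exploiting two structural features of our setting: the message $M$ is discrete by construction (taking values in $\intseq{1}{2^{nR}}$), and the collection $\bbB$ of unauthorized coalitions is finite. First, I would extend Lemma~\ref{lemma:quantization} from scalar pairs to joint vector-valued random variables: via the supremum-over-finite-partitions characterization of mutual information, for every fixed $\calB \in \bbB$,
\begin{align}
\MI(X^n; M, \mathbf{Y}_\calB^n) \;=\; \lim_{\Phi_1, \Phi_2 \to 0} \MI(X_{\Phi_1}^n; M, \mathbf{Y}_{\calB, \Phi_2}^n), \nonumber
\end{align}
where $M$ participates in the joint observation $(M, \mathbf{Y}_\calB^n)$ as an already-finite-valued coordinate that is unaffected by any sufficiently fine quantization, so only the real-valued components $X^n$ and $\mathbf{Y}_\calB^n$ contribute to the limiting procedure.

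Next, I would use finiteness of $\bbB$ to upgrade this pointwise limit to a uniform statement. Fix $\epsilon > 0$. For each $\calB \in \bbB$, the above limit furnishes thresholds $(\Phi_1(\calB), \Phi_2(\calB))$ below which the gap $\MI(X^n; M, \mathbf{Y}_\calB^n) - \MI(X_{\Phi_1}^n; M, \mathbf{Y}_{\calB, \Phi_2}^n)$ is at most $\epsilon$, and taking componentwise minima over $\bbB$ produces thresholds $(\Phi_1^\star, \Phi_2^\star)$ that work for every $\calB$ simultaneously. Then, letting $\calB^\dagger \in \argmax_{\calB \in \bbB} \MI(X^n; M, \mathbf{Y}_\calB^n)$, one bounds
\begin{align}
\max_{\calB \in \bbB} \MI(X^n; M, \mathbf{Y}_\calB^n)
&= \MI(X^n; M, \mathbf{Y}_{\calB^\dagger}^n) \nonumber \\
&\le \MI(X_{\Phi_1}^n; M, \mathbf{Y}_{\calB^\dagger, \Phi_2}^n) + \epsilon \nonumber \\
&\le \max_{\calB \in \bbB} \MI(X_{\Phi_1}^n; M, \mathbf{Y}_{\calB, \Phi_2}^n) + \epsilon, \nonumber
\end{align}
which is exactly the desired inequality.

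The principal obstacle I anticipate is making the vector extension of Lemma~\ref{lemma:quantization} fully rigorous in the \emph{joint} limit $(\Phi_1, \Phi_2)\to(0,0)$ for random vectors with unbounded Gaussian support. I would handle this via monotonicity under partition refinement: the family $\MI(X_{\Phi_1}^n; M, \mathbf{Y}_{\calB, \Phi_2}^n)$ is non-decreasing as either $\Phi_1$ or $\Phi_2$ shrinks, and its supremum over all finite product partitions of the Borel $\sigma$-algebras of $X^n$ and $\mathbf{Y}_\calB^n$ equals $\MI(X^n; M, \mathbf{Y}_\calB^n)$ by the classical Dobrushin/Pinsker characterization. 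This monotonicity guarantees that the joint limit is well-defined and permits the single-coalition thresholds to be combined into a uniform one over $\bbB$, which is precisely the ``fine enough'' hypothesis in the lemma. Note that, in contrast to the distortion and decoding analyses---where data-processing points in the \emph{wrong} direction because quantization can only decrease mutual information---the limit here is exactly what is needed: the quantized leakage approaches the true leakage from below, so the bound is asymptotically tight.
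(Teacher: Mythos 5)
Your proposal is correct and follows essentially the same route as the paper: for each $\calB\in\bbB$ apply the convergence of quantized mutual information (Lemma~\ref{lemma:quantization}) to make the gap at most $\epsilon$, use finiteness of $\bbB$ to choose a quantization fine enough for all coalitions simultaneously, and then compare the two maxima. Your added care in extending Lemma~\ref{lemma:quantization} to the joint vector-valued observation $(M,\mathbf{Y}_\calB^n)$ with the discrete coordinate $M$, via the supremum-over-partitions characterization, is a detail the paper glosses over but does not change the argument.
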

\begin{proof}
For any $\epsilon>0$, and for any $\calB\in\bbB$, we have
\begin{align}
    &\MI(X^n;M,\mathbf{Y}_\calB^n)\nonumber\\
&\le\abs{\MI(X^n;M,\mathbf{Y}_\calB^n)-\MI(X_{\Phi_1}^n;M,\mathbf{Y}_{\calB,\Phi_2}^n)}+\nonumber\\
&\quad\MI(X_{\Phi_1}^n;M,\mathbf{Y}_{\calB,\Phi_2}^n)\nonumber\\
    &\le\max\limits_{\calB\in\bbB}\abs{\MI(X^n;M,\mathbf{Y}_\calB^n)-\MI(X_{\Phi_1}^n;M,\mathbf{Y}_{\calB,\Phi_2}^n)}+\nonumber\\
&\quad\max\limits_{\calB\in\bbB}\MI(X_{\Phi_1}^n;M,\mathbf{Y}_{\calB,\Phi_2}^n)\nonumber\\
    &\mathop\le\limits^{(a)}\epsilon+\max\limits_{\calB\in\bbB}\MI(X_{\Phi_1}^n;M,\mathbf{Y}_{\calB,\Phi_2}^n),\label{eq:fine_quantization}
\end{align}where $(a)$ follows from Lemma~\ref{lemma:quantization} when the quantization  $\mathbf{Y}_{\calB,\Phi_2}^n$ is fine enough, for any $\calB\in\bbB$. Note that \eqref{eq:fine_quantization} is valid for any $\calB\in\bbB$, therefore \eqref{eq:fine_quantization} results to the bound in Lemma~\ref{lemma:quantization_2}.
\end{proof}
For every $(\calA,\calB)\in(\bbA,\bbB)$, we quantize $U$, $V$, $X$, $\mathbf{Y}_\calA$, and $\mathbf{Y}_\calB$ as in Lemma~\ref{lemma:quantization} to form $U_\Phi$, $V_\Phi$, $X_\Phi$, $\mathbf{Y}_{\calA,\Phi}$, and $\mathbf{Y}_{\calB,\Phi}$ such that $\Phi=\ell^{-1}$ and $\abs{\calU_\Phi}=\abs{\calV_\Phi}=\abs{\calX_\Phi}=\abs{\boldsymbol{\calY}_{\calA,\Phi}}=\abs{\boldsymbol{\calY}_{\calB,\Phi}}=\ell$ with $\ell>0$. Then, we apply the proof for the discrete case to the random variables $U_\Phi$, $V_\Phi$, $X_\Phi$, $\mathbf{Y}_{\calA,\Phi}$, and $\mathbf{Y}_{\calB,\Phi}$. For fixed $\epsilon$, $\epsilon_1$, and $\epsilon_2$, using Lemma~\ref{lemma:quantization}, we fix $\ell$ to a number large enough such that, for every $(\calA,\calB)\in(\bbA,\bbB)$, $\abs{\MI(V_\Phi;\mathbf{Y}_{\calA,\Phi}|U_\Phi)-\MI(V;\mathbf{Y}_\calA|U)}\le\frac{\delta}{4}$, $\abs{\MI(V_\Phi;X_\Phi|\mathbf{Y}_{\calA,\Phi})-\MI(V;X|\mathbf{Y}_\calA)}\le\frac{\delta}{4}$,  $\abs{\MI(X_\Phi;\mathbf{Y}_{\calB,\Phi}|U_\Phi)-\MI(X;\mathbf{Y}_\calB|U)}\le\frac{\delta}{4}$, $\abs{\MI(U_\Phi;X_\Phi)-\MI(U;X)}\le\frac{\delta}{4}$, $\abs{\MI(V_\Phi;X_\Phi)-\MI(V;X)}\le\frac{\delta}{4}$, and $\abs{\MI(V_\Phi;X_\Phi|U_\Phi)-\MI(V;X|U)}\le\frac{\delta}{4}$. Therefore, when $n$ grows to $\infty$, \eqref{eq:Simplified_Region} and \eqref{eq:Distortion_Analysis} reduces~to,
\begin{align}
    &R>\max\limits_{\calA\in\bbA}\{\MI(V;X|\mathbf{Y}_\calA)\}+\max\limits_{\calA\in\bbA}\delta_{\epsilon,1}(\calA)+\epsilon'-\nonumber\\
&\quad(1+\epsilon_2)\frac{\delta}{4}-\epsilon_1\frac{3\delta}{4},\nonumber\\
&\Delta>\max\limits_{(\calA,\calB)\in(\bbA,\bbB)}\big\{\MI(V;X)-\MI(V;\mathbf{Y}_\calA|U)+\MI(X;\mathbf{Y}_\calB|U)\big\}+\nonumber\\
&\quad\max\limits_{(\calA,\calB)\in(\bbA,\bbB)}\delta_{\epsilon,2}(\calA,\calB)+\epsilon''-\frac{3\delta}{4}-(\epsilon_1+\epsilon_1)\frac{\delta}{4},\nonumber\\
  &\bbE\left[d\left(X^n,\hat{X}\big(f(X^n),\mathbf{Y}_\calA^n\big)\right)\right]\le\bbE\left[d\left(X,\hat{X}\big(V,\mathbf{Y}_\calA\big)\right)\right]+\nonumber\\
&\quad d_{\max}\bbP\{\calE\}+d_{\max}\abs{\calV_\Phi}\abs{\calX_\Phi}\abs{\boldsymbol{\calY}_{\calA,\Phi}}\epsilon_2.\nonumber
\end{align}
Note that, the term $\eta_n$ in $\delta_{\epsilon,2}(\calA,\calB)$, defined in \eqref{eq:eta_defi}, includes the entropy term $\epsilon_2\ent(\mathbf{Y}_{\calB,\Phi}|U_\Phi)$ which diverges when $\ell$ tends to infinity. The term $d_{\max}\abs{\calV_\Phi}\abs{\calX_\Phi}\abs{\boldsymbol{\calY}_{\calA,\Phi}}\epsilon_2$ also diverges when $\ell$ grows to $\infty$. Therefore, we choose $\epsilon_2\triangleq n^{-a}$, $a<\frac{1}{2}$ such that, when $n$ is large enough, the asymptotic source coding rate and the asymptotic leakage rate are as close as desired to the following rates,
\begin{subequations}\label{eq:Simplified_RD}
\begin{align}
    R&>\max\limits_{\calA\in\bbA}\big\{\MI(V;X|\mathbf{Y}_\calA)\big\},\label{eq:Simplified_D_31}\\
    \Delta&>\max\limits_{(\calA,\calB)\in(\bbA,\bbB)}\big\{\MI(V;X)-\MI(V;\mathbf{Y}_\calA|U)+\MI(X;\mathbf{Y}_\calB|U)\big\}.\label{eq:Simplified_D_3}
\end{align}
\end{subequations}
\subsection{Gaussian Sources}
\label{subsec:Continuous_Thm1}
For Gaussian sources, we use the mean square error to measure the distortion of the reconstructed sequence and \ac{MMSE} as the estimator, therefore the distortion constraint in Theorem~\ref{thm:discrete_achievable} reduces to
$\sigma_{X|\mathbf{Y}_\calA,V}^2\le D.$
\subsubsection{Mutual information Computation via Sufficient Statistics}
\label{subsubsec:MI_Calc}
In the following, we consider a random variable $V$ jointly Gaussian with $(X,\mathbf{Y}_\calL)$. 
Since the side information $\mathbf{Y}_\calA$ and $\mathbf{Y}_\calB$ are vectors and we aim to find the relationship between $\sigma_{X|V}^2$ and $\sigma_{X|\mathbf{Y}_\calA,V}^2$, it is easier to work with scalar random variables and use the sufficient statistics $\tilde{Y}_\calA$ and $\tilde{Y}_\calB$ defined in \eqref{eq:Suff_Stats_Inp_Outp_B}, respectively, to evaluate the mutual information expressions in the achievable rate region provided in \eqref{eq:Simplified_RD}.  
Hence,
\begin{align}
    \MI(V;X|\mathbf{Y}_\calA)&\mathop=\limits^{(a)}\MI(V;X|\mathbf{Y}_\calA,\tilde{Y}_\calA)\nonumber\\
    &\mathop=\limits^{(b)}\MI(V;X|\tilde{Y}_\calA),\nonumber%
\end{align}where $(a)$ follows from the Markov chain $X-\mathbf{Y}_\calA-\tilde{Y}_\calA$ and $(b)$ follows from the Markov chain $X-\tilde{Y}_\calA-\mathbf{Y}_\calA$. Similarly, one can also show that
\begin{align}
    \MI(V;\mathbf{Y}_\calA)&=\MI(V;\tilde{Y}_\calA),\nonumber\\
    \MI(X;\mathbf{Y}_\calB|U)&=\MI(X;\tilde{Y}_\calB),\nonumber\\
    \MI(V;X|\mathbf{Y}_\calA)&=\MI(V;X|\tilde{Y}_\calA),\nonumber\\
    \MI(X;\mathbf{Y}_\calB|V)&=\MI(X;\tilde{Y}_\calB|V),\nonumber\\
    \sigma_{X|\mathbf{Y}_\calA,V}^2&=\sigma_{X|\tilde{Y}_\calA,V}^2.\nonumber
\end{align}
Note that, for any $\calS\subseteq\calL$, $(V,X,\tilde{Y}_\calS)$ are jointly Gaussian because $\big[\begin{array}{ccc}
    V & X & \tilde{Y}_\calS
   \end{array}\big]=\big[\begin{array}{ccc}
    V & X & \mathbf{Y}_\calS
   \end{array}\big]\begin{bmatrix}
    I_2 & 0\\
    0 & h_\calS^\intercal\boldsymbol{\Sigma}_\calS^{-1}
\end{bmatrix}$, where $I_2$ is the two by two identity matrix.

Next, we consider two cases. First, the case $\tr\big(\boldsymbol{\Sigma}_{\calA^\star}^{-1}\big) \le\tr\big(\boldsymbol{\Sigma}_{\calB^\star}^{-1}\big)$, where $\calA^\star\in \argmin\limits_{\calA\in\bbA}\{\tr\left(\boldsymbol{\Sigma}_\calA^{-1}\right)\}$, $\calB^\star\in \argmax\limits_{\calB\in\bbB}\{\tr\left(\boldsymbol{\Sigma}_\calB^{-1}\right)\}$, which represents a situation in which authorized users have a better side information. Second, the case $\tr\big(\boldsymbol{\Sigma}_{\calA^\star}^{-1}\big) >\tr\big(\boldsymbol{\Sigma}_{\calB^\star}^{-1}\big)$, which represents a situation in which unauthorized users have a better side information. In the following, we assume $\tr\big(\boldsymbol{\Sigma}_{\calA^\star}^{-1}\big)\le\frac{1}{D}-\frac{1}{\sigma_X^2}$, since when this is not the case, $D\ge\sigma_{X|Y}^2$.
\subsubsection{When Authorized Users have a better side information}Suppose that $\tr\big(\boldsymbol{\Sigma}_{\calA^\star}^{-1}\big) \le\tr\big(\boldsymbol{\Sigma}_{\calB^\star}^{-1}\big)$. In this case, we choose the auxiliary random variable $V$ to be jointly Gaussian with $(X,\mathbf{Y}_\calL)$ and $U=\emptyset$. Then by Section~\ref{subsubsec:MI_Calc},
\begin{subequations}\label{eq:MI_Calc_Achi}
\begin{align}
    \MI(V;X)&=\dent(X)-\dent(X|V)\nonumber\\
&=\frac{1}{2}\log\frac{\sigma_X^2}{\sigma_{X|V}^2},\label{eq:IVX}\\
    \MI(V;\tilde{Y}_\calA)&=\dent(\tilde{Y}_\calA)-\dent(\tilde{Y}_\calA|V)\nonumber\\
&=\frac{1}{2}\log\frac{h_\calA^2\sigma_X^2+\tilde{\sigma}_\calA^2}{h_\calA^2\sigma_{X|V}^2+\tilde{\sigma}_\calA^2},\label{eq:IVYA}\\
    \MI(X;\mathbf{Y}_\calB)&=\MI(X;\tilde{Y}_\calB)=\dent(\tilde{Y}_\calB)-\dent(\tilde{Y}_\calB|X)\nonumber\\
&=\frac{1}{2}\log\frac{h_\calB^2\sigma_X^2+\tilde{\sigma}_\calB^2}{\tilde{\sigma}_\calB^2},\label{eq:IXYB}\\
    \MI(V;X|\mathbf{Y}_\calA)&=\MI(V;X|\tilde{Y}_\calA)\nonumber\\
&\quad\mathop=\limits^{(a)}\MI(V;X)-\MI(V;\tilde{Y}_\calA)\nonumber\\
&\quad\mathop=\limits^{(b)}\frac{1}{2}\log\frac{\sigma_X^2\big(h_\calA^2\sigma_{X|V}^2+\tilde{\sigma}_\calA^2\big)}{\sigma_{X|V}^2\big(h_\calA^2\sigma_X^2+\tilde{\sigma}_\calA^2\big)},\label{eq:IVXgYA_2}
\end{align}where 
\begin{itemize}
    \item[$(a)$] follows since $V-X-\tilde{Y}_\calA$ forms a Markov chain;
    \item[$(b)$] follows from \eqref{eq:IVX} and \eqref{eq:IVYA}.
\end{itemize}
\end{subequations}
Using Section~\ref{subsubsec:MI_Calc}, \eqref{eq:MI_Calc_Achi}, \eqref{eq:Cov_XgVY_1}, $h_\calA=\tilde{\sigma}_\calA^2=\tr\left(\boldsymbol{\Sigma}_\calA^{-1}\right)$, and $h_\calB=\tilde{\sigma}_\calB^2=\tr\left(\boldsymbol{\Sigma}_\calB^{-1}\right)$, as showed in~\eqref{eq:Suff_Stats_Inp_Outp}, we rewrite the achievable rate region in Theorem~\ref{thm:discrete_achievable} as 
\begin{align}
   &\bigcup\limits_{\substack{V-X-\mathbf{Y}_\calL \\ \max\limits_{\calA\in\bbA}\big\{\frac{\sigma_{X|V}^2}{\tr\left(\boldsymbol{\Sigma}_\calA^{-1}\right)\sigma_{X|V}^2+1}\big\}\le D}}\nonumber\\
&\left.\begin{cases}(R,\Delta):\\
R>\max\limits_{\calA\in\bbA}\left\{\frac{1}{2}\log\frac{\sigma_X^2\big(\tr\left(\boldsymbol{\Sigma}_\calA^{-1}\right)\sigma_{X|V}^2+1\big)}{\sigma_{X|V}^2\big(\tr\left(\boldsymbol{\Sigma}_\calA^{-1}\right)\sigma_X^2+1\big)}\right\}\\
\Delta>\max\limits_{(\calA,\calB)\in(\bbA,\bbB)}\left\{\frac{1}{2}\log\frac{\sigma_X^2\big(\tr\left(\boldsymbol{\Sigma}_\calA^{-1}\right)\sigma_{X|V}^2+1\big)}{\sigma_{X|V}^2\big(\tr\left(\boldsymbol{\Sigma}_\calA^{-1}\right)\sigma_X^2+1\big)}+\right.\\
\qquad\left.\frac{1}{2}\log\Big(\tr\left(\boldsymbol{\Sigma}_\calB^{-1}\right)\sigma_X^2+1\Big)\right\}
\end{cases}\right\}.\label{eq:Achievable_Region_SS}
\end{align}Since the arguments of the $\log$ functions and the distortion constraint in the region above are decreasing in $\tr\left(\boldsymbol{\Sigma}_\calA^{-1}\right)$ and increasing in $\tr\left(\boldsymbol{\Sigma}_\calB^{-1}\right)$, the maximization over $(\calA,\calB)$ in the above region is solved with $\calA^\star\in \argmin\limits_{\calA\in\bbA}\{\tr\left(\boldsymbol{\Sigma}_\calA^{-1}\right)\}$ and $\calB^\star\in \argmax\limits_{\calB\in\bbB}\{\tr\left(\boldsymbol{\Sigma}_\calB^{-1}\right)\}$ and \eqref{eq:Achievable_Region_SS} becomes
\begin{align}
\bigcup\limits_{\substack{V-X-\tilde{Y}_{\calL} \\ \sigma_{X|\tilde{Y}_{\calA^\star},V}^2\le D}}\left.\begin{cases}(R,\Delta):\\
R>\MI(V;X|\tilde{Y}_{\calA^\star})\\
\Delta>\MI(V;X|\tilde{Y}_{\calA^\star})+\MI(X;\tilde{Y}_{\calB^\star})
\end{cases}\right\}.\nonumber
\end{align}
Next, we rewrite the region above as
\begin{align}
 \bigcup\limits_{\substack{V-X-\tilde{Y}_{\calL} \\ \sigma_{X|\tilde{Y}_{\calA^\star},V}^2\le D}}\left.\begin{cases}(R,\Delta):\\
R>\frac{1}{2}\log\frac{\sigma_{X|\tilde{Y}_{\calA^\star}}^2}{\sigma_{X|\tilde{Y}_{\calA^\star},V}^2}\\
\Delta>\frac{1}{2}\log\frac{\sigma_{X|\tilde{Y}_{\calA^\star}}^2}{\sigma_{X|\tilde{Y}_{\calA^\star},V}^2}+\frac{1}{2}\log\frac{\sigma_X^2}{\sigma_{X|\tilde{Y}_{\calB^\star}}^2}
\end{cases}\right\}.\nonumber
\end{align}
Finally, by choosing $V$ such that $D=\sigma_{X|\tilde{Y}_{\calA^\star},V}^2$, and using \eqref{eq:SXgivenYA}, we see that the region above contains the following region
\begin{align}
  \left.\begin{cases}(R,\Delta):\\
R>\frac{1}{2}\log\frac{\sigma_X^2}{\tr\big(\boldsymbol{\Sigma}_{\calA^\star}^{-1}\big)\sigma_X^2D+D}\\
\Delta>\frac{1}{2}\log\frac{\sigma_X^2}{\tr\big(\boldsymbol{\Sigma}_{\calA^\star}^{-1}\big)\sigma_X^2D+D}+\frac{1}{2}\log\big(\tr\big(\boldsymbol{\Sigma}_{\calB^\star}^{-1}\big)\sigma_X^2+1\big)\\
\end{cases}\right\}.\nonumber
\end{align}

\subsubsection{When Unauthorized Users have a better side information}
Suppose that $\tr\big(\boldsymbol{\Sigma}_{\calA^\star}^{-1}\big) >\tr\big(\boldsymbol{\Sigma}_{\calB^\star}^{-1}\big)$. In this case, we choose the auxiliary random variable $V$ to be jointly Gaussian with $(X,\mathbf{Y}_\calL)$ and $U=V$. Hence, by Section~\ref{subsubsec:MI_Calc},
\begin{align}
\MI(X;\mathbf{Y}_\calB|V)&=\MI(X;\tilde{Y}_\calB|V)\nonumber\\
&\mathop=\limits^{(a)}\dent(\tilde{Y}_\calB|V)-\dent(\tilde{Y}_\calB|X)\nonumber\\
&=\frac{1}{2}\log\frac{h_\calB^2\sigma_{X|V}^2+\tilde{\sigma}_\calB^2}{\tilde{\sigma}_\calB^2},\label{eq:IXYB_BA}
\end{align}where $(a)$ follows since $V-X-\tilde{Y}_\calB$ forms a Markov chain. 
Using \eqref{eq:IVX}, \eqref{eq:IVXgYA_2}, \eqref{eq:IXYB_BA},  $h_\calA=\tilde{\sigma}_\calA^2=\tr\left(\boldsymbol{\Sigma}_\calA^{-1}\right)$,  and $h_\calB=\tilde{\sigma}_\calB^2=\tr\left(\boldsymbol{\Sigma}_\calB^{-1}\right)$, as showed in \eqref{eq:Suff_Stats_Inp_Outp}, we rewrite the achievable rate region in Theorem~\ref{thm:discrete_achievable} as 
\begin{align}
   &\bigcup\limits_{\substack{V-X-\tilde{Y}_\calL \\ \max\limits_{\calA\in\bbA}\sigma_{X|\tilde{Y}_\calA,V}^2\le D}}\nonumber\\
&\left.\begin{cases}(R,\Delta):\\
R>\max\limits_{\calA\in\bbA}\left\{\frac{1}{2}\log\frac{\sigma_X^2\big(\tr\left(\boldsymbol{\Sigma}_\calA^{-1}\right)\sigma_{X|V}^2+1\big)}{\sigma_{X|V}^2\big(\tr\left(\boldsymbol{\Sigma}_\calA^{-1}\right)\sigma_X^2+1\big)}\right\}\\
\Delta>\max\limits_{(\calA,\calB)\in(\bbA,\bbB)}\left\{\frac{1}{2}\log\frac{\sigma_X^2}{\sigma_{X|V}^2}+\right.\\
\quad\left.\frac{1}{2}\log\left({\tr\left(\boldsymbol{\Sigma}_\calB^{-1}\right)\sigma_{X|V}^2+1}\right)\right\}
\end{cases}\right\}.\label{eq:Achievable_Region_SS_BA}
\end{align}Since the arguments of the $\log$ functions and the distortion constraint in the region above are decreasing in $\tr\left(\boldsymbol{\Sigma}_\calA^{-1}\right)$ and increasing in $\tr\left(\boldsymbol{\Sigma}_\calB^{-1}\right)$, the maximization over $(\calA,\calB)$ in the above region is solved with $\calA^\star\in \argmin\limits_{\calA\in\bbA}\{\tr\left(\boldsymbol{\Sigma}_\calA^{-1}\right)\}$ and $\calB^\star\in \argmax\limits_{\calB\in\bbB}\{\tr\left(\boldsymbol{\Sigma}_\calB^{-1}\right)\}$ and \eqref{eq:Achievable_Region_SS_BA} becomes,
\begin{align}
\bigcup\limits_{\substack{V-X-\tilde{Y}_\calL \\ \sigma_{X|\tilde{Y}_{\calA^\star},V}^2\le D}}\left.\begin{cases}(R,\Delta):\\
R>\MI(V;X|\tilde{Y}_{\calA^\star})\\
\Delta>\MI(V;X)+\MI(X;\tilde{Y}_{\calB^\star}|V)\\
\end{cases}\right\},\nonumber
\end{align}
which we can write as
\begin{align}
 \bigcup\limits_{\substack{V-X-\tilde{Y}_\calL \\ \sigma_{X|\tilde{Y}_{\calA^\star},V}^2\le D}}\left.\begin{cases}(R,\Delta):\\
R>\frac{1}{2}\log\frac{\sigma_{X|\tilde{Y}_{\calA^\star}}^2}{\sigma_{X|\tilde{Y}_{\calA^\star},V}^2}\\
\Delta>\frac{1}{2}\log\frac{\sigma_X^2}{\sigma_{X|V}^2}+\frac{1}{2}\log\frac{h_{\calB^\star}^2\sigma_{X|V}^2+\tilde{\sigma}_{\calB^\star}^2}{\tilde{\sigma}_{\calB^\star}^2}\\
\end{cases}\right\}.\nonumber
\end{align}
Finally, by choosing $V$ such that $\sigma_{X|\tilde{Y}_{\calA^\star},V}^2=D$, which from Lemma~\ref{lemma:Conversion_XgYV_to_XgV} is equivalent to $\sigma_{X|V}^2=\frac{D}{1-\tr\big(\boldsymbol{\Sigma}_{\calA^\star}^{-1}\big)D}$, the following region is achievable,
\begin{align}
  \left.\begin{cases}(R,\Delta):\\
R>\frac{1}{2}\log\frac{\sigma_X^2}{D\left(\tr\big(\boldsymbol{\Sigma}_{\calA^\star}^{-1}\big)\sigma_X^2+1\right)}\\
\Delta>\frac{1}{2}\log\left(\frac{\sigma_X^2}{D}\left(1-\tr\big(\boldsymbol{\Sigma}_{\calA^\star}^{-1}\big)D\right)+\tr\big(\boldsymbol{\Sigma}_{\calB^\star}^{-1}\big)\sigma_X^2\right)\\
\end{cases}\right\}.\nonumber
\end{align}

\section{Proof of Lemma~\ref{lemma:Entropy_Upper_Bound}}
\label{proof:Entropy_Upper_Bound}
We have,
\begin{align}
    &\ent\big(X^n|M_p,M'_p,\mathbf{Y}_\calB^n\big)\nonumber\\
&=\ent\big(X^n,\mathbf{Y}_\calB^n|M_p,M'_p\big)-\ent\big(\mathbf{Y}_\calB^n|M_p,M'_p\big)\nonumber\\
    &\mathop=\limits^{(a)}\ent\big(X^n,\mathbf{Y}_\calB^n\big)-\ent(M_p,M'_p)-\ent\big(\mathbf{Y}_\calB^n|M_p,M'_p\big)\nonumber\\
    &\mathop=\limits^{(b)}n\ent\big(X,\mathbf{Y}_\calB\big)-\ent(M_p,M'_p)-\ent\big(\mathbf{Y}_\calB^n|M_p,M'_p\big),\label{eq:All_Terms}
\end{align}where
\begin{itemize}
    \item[$(a)$] follows since $(M_p,M'_p)$ is a deterministic function of $X^n$;
    \item[$(b)$] follows since $\big(X^n,\mathbf{Y}_\calB^n\big)$ is \ac{iid}.
\end{itemize}
We now bound the second term on the \ac{RHS} of \eqref{eq:All_Terms}. From the encoding scheme, described in Section~\ref{subsec:Discrete}, for every $j\in\intseq{1}{2^{n(R_p+R'_p)}}$, $\bar{M}_p\triangleq(M_p,M'_p)$, with rate $\bar{R}_p\triangleq R_p+R'_p$, has the following distribution,
\begin{align}
    &\bbP\big\{\bar{M}_p=j\big\}\nonumber\\
    &=\bbP\left\{\big(u^n(j),X^n\big)\in\calT_{\epsilon_1}^{(n)}(U,X)\bigcap\limits_{i=1}^{j-1}\right.\nonumber\\
    &\quad\left.\left(\big(u^n(j),X^n\big)\notin\calT_{\epsilon_1}^{(n)}(U,X)\right)\right\}\nonumber\\
    &=t_n(1-t_n)^{j-1},\nonumber
\end{align}where
$t_n\triangleq\bbP\left\{\big(U^n,X^n\big)\in\calT_{\epsilon_1}^{(n)}(U,X)\big|\right.$ $\left.U^n\in\calT_\epsilon^{(n)}(U),X^n\in\calT_\epsilon^{(n)}(X)\right\}.$ 
    Then, we have,
\begin{align}
    &\ent(\bar{M}_p)\nonumber\\
    &\mathop=\limits^{(a)}-\sum\limits_{j=1}^{2^{n\bar{R}_p}}t_n(1-t_n)^{j-1}\log\left(t_n(1-t_n)^{j-1}\right)+\lambda_n^{(1)}\nonumber\\
    &=-t_n\log t_n\sum\limits_{j=1}^{2^{n\bar{R}_p}}(1-t_n)^{j-1}-\nonumber\\
&\quad t_n\log(1-t_n)\sum\limits_{j=1}^{2^{n\bar{R}_p}}(j-1)(1-t_n)^{j-1}+\lambda_n^{(1)},\nonumber\\
    &=-\left[1-(1-t_n)^{2^{n\bar{R}_p}}\right]\log(t_n)-\nonumber\\
&\quad\left[-2^{n\bar{R}_p}t_n(1-t_n)^{2^{n\bar{R}_p}}+1-t_n-(1-t_n)^{2^{n\bar{R}_p}+1}\right]\times\nonumber\\
&\quad\frac{\log(1-t_n)}{t_n}+\lambda_n^{(1)}\nonumber\\
    &\mathop\le\limits^{(b)} -\left[1-(1-t_n)^{2^{n\bar{R}_p}}\right]\log(t_n)-\frac{\log(1-t_n)}{t_n}+\lambda_n^{(1)}\nonumber\\
    &\mathop\le\limits^{(c)} -\left[1-(1-t_n)^{2^{n\bar{R}_p}}\right]\log(t_n)+\frac{1}{1-t_n}+\lambda_n^{(1)}\nonumber\\
    &\mathop\le\limits^{(d)} -\log t_n+\frac{1}{1-t_n}+\lambda_n^{(1)}\nonumber\\
    &\mathop=\limits^{(e)}-\log t_n+\lambda_n^{(2)}+\lambda_n^{(1)}\nonumber\\
    &\mathop\le\limits^{(f)} n\big(\MI(U;X)+\epsilon_2\MI(U;X)\big)-\log\left(1-\delta_{\epsilon_1,\epsilon_2}^{(n)}\right)+\lambda_n^{(2)}+\nonumber\\
&\quad\lambda_n^{(1)},\label{eq:Typicality_Jith_4}
\end{align}where
\begin{itemize}
    \item[(a)] follows by defining $\lambda_n^{(1)}\triangleq-P_{e,2}\log P_{e,2}$ and  $P_{e,2}\triangleq\bbP\{\calE_1^c\cap\calE_2\}$ is the error probability of the event defined in \eqref{eq:error_events_2}, from Section~\ref{subsec:Discrete}, if $\bar{R}_p>\MI(U;X)+\epsilon_2\MI(U;X)$, then $P_{e,2}$, and therefore $\lambda_n^{(1)}$, vanish as $n$ grows;
    \item[(b)] follows since $\log(1-t_n)<0$;  
    \item[(c)] follows since $\log(1-x)\ge\frac{-x}{1-x}$ for $x< 1$;
    \item[(d)] follows since $\log t_n<0$; 
    \item[(e)] follows by defining, $\lambda_n^{(2)}\triangleq\frac{1}{1-t_n}$;
    \item[(f)] follows because, from \cite[Theorem~1.3]{Kramer_Book} we have,
\begin{align}
    &\left(1-\delta_{\epsilon_1,\epsilon_2}^{(n)}\right)2^{-n[\MI(U;X)+\epsilon_2\MI(U;X)]}\le t_n\nonumber\\
&\quad\le 2^{-n[\MI(U;X)-\epsilon_2\MI(U;X)]},\label{eq:Kramer_Lemma_1}
\end{align}where $\delta_{\epsilon_1,\epsilon_1}^{(n)}\triangleq2\abs{\calU}\abs{\calX}e^{-n\frac{(\epsilon_2-\epsilon_1)^2}{1+\epsilon_1}\mu_{U,X}}$ and $\mu_{U,X}\triangleq\min\limits_{(a,b)\in\text{supp}(P_{UX})}P_{UX}(a,b)$.
\end{itemize}
To bound the third term on the \ac{RHS} of \eqref{eq:All_Terms}, we define,
\begin{align}\label{eq:hatYB}
    \hat{\mathbf{Y}}_\calB^n\triangleq\left\{ \begin{array}{l}
\mathbf{Y}_\calB^n\qquad \text{if}\,\,\,\big(u^n(m_p,m'_p),\mathbf{Y}_\calB^n\big)\in\calT_{\epsilon_2}^{(n)}(U,\mathbf{Y}_\calB)\\
0\qquad\,\,\,\,\text{otherwise}
\end{array} \right.,
\end{align}and
\begin{align}
    &\ent\big(\mathbf{Y}_\calB^n|\bar{M}_p\big)\nonumber\\
    &=\sum\limits_{j=1}^{2^{n\bar{R}_p}}\bbP\{\bar{M}_p=j\}\ent\big(\mathbf{Y}_\calB^n|\bar{M}_p=j\big)\nonumber\\
    &\mathop=\limits^{(a)}\sum\limits_{j=1}^{2^{n\bar{R}_p}}\bbP\{\bar{M}_p=j\}\ent\big(\mathbf{Y}_\calB^n,\hat{\mathbf{Y}}_\calB^n|\bar{M}_p=j\big)\nonumber\\
    &=\sum\limits_{j=1}^{2^{n\bar{R}_p}}\bbP\{\bar{M}_p=j\}\left[\ent\big(\hat{\mathbf{Y}}_\calB^n|\bar{M}_p=j\big)+\ent\big(\mathbf{Y}_\calB^n|\hat{\mathbf{Y}}_\calB^n,\bar{M}_p=j\big)\right],\label{eq:Typicality_third_1}
\end{align}where $(a)$ follows since $\hat{\mathbf{Y}}_\calB^n$ is a deterministic function of $\mathbf{Y}_\calB^n$ and $\bar{M}_p$. Next we bound the first term on the \ac{RHS} of \eqref{eq:Typicality_third_1} as follows,
\begin{align}
    &\sum\limits_{j=1}^{2^{n\bar{R}_p}}\bbP\{\bar{M}_p=j\}\ent\big(\hat{\mathbf{Y}}_\calB^n|\bar{M}_p=j\big)\nonumber\\
    &\mathop\le\limits^{(a)}\sum\limits_{j=1}^{2^{n\bar{R}_p}}\bbP\{\bar{M}_p=j\}\log\big(\abs{\calT_{\epsilon_2}^{(n)}\big(\mathbf{Y}_\calB|u^n(j)\big)}+1\big)\nonumber\\
    &\mathop\le\limits^{(b)}n\sum\limits_{j=1}^{2^{n\bar{R}_p}}\bbP\{\bar{M}_p=j\}\big(\ent(\mathbf{Y}_\calB|U)+\epsilon_2\ent(\mathbf{Y}_\calB|U)\big)\nonumber\\
    &\le n\ent(\mathbf{Y}_\calB|U)+n\epsilon_2\ent(\mathbf{Y}_\calB|U),\label{eq:Typicality_third_2}
\end{align}where
\begin{itemize}
    \item[$(a)$] follows from the definition of $\hat{\mathbf{Y}}_\calB^n$ in \eqref{eq:hatYB};
    \item[$(b)$] follows from \cite[Theorem~1.2]{Kramer_Book}, which bounds the size of the conditional typical set.
\end{itemize}We now bound the second term on the \ac{RHS} of \eqref{eq:Typicality_third_1} as follows,
\begin{align}
    &\sum\limits_{j=1}^{2^{n\bar{R}_p}}\bbP\{\bar{M}_p=j\}\ent\big(\mathbf{Y}_\calB^n|\hat{\mathbf{Y}}_\calB^n,\bar{M}_p=j\big)\nonumber\\
    &\le\sum\limits_{j=1}^{2^{n\bar{R}_p}}\left(1+\bbP\left\{\mathbf{Y}_\calB^n\ne\hat{\mathbf{Y}}_\calB^n\Big|\bar{M}_p=j\right\}\log\abs{\boldsymbol{\calY}_\calB^n}\right)\bbP\{\bar{M}_p=j\}\nonumber\\
    &\le 1+\sum\limits_{j=1}^{2^{n\bar{R}_p}}\bbP\left\{\big(u^n(\bar{M}_p),\mathbf{Y}_\calB^n\big)\notin\calT_{\epsilon_2}^{(n)}(U,\mathbf{Y}_\calB)\Big|\bar{M}_p=j\right\}\times\nonumber\\
&\quad\log\abs{\boldsymbol{\calY}_\calB^n}\bbP\{\bar{M}_p=j\}\nonumber\\
    &\le 1+nP_{e,1}P_{e,2}\log\abs{\boldsymbol{\calY}_\calB},\label{eq:Typicality_third_3}
\end{align}where $P_{e,1}\triangleq\bbP\{\calE_1\}$ is the error probability of the event defined in \eqref{eq:error_events_1}. Substituting \eqref{eq:Typicality_third_2} and \eqref{eq:Typicality_third_3} in \eqref{eq:Typicality_third_1} leads to
\begin{align}
    \ent\big(\mathbf{Y}_\calB^n|\bar{M}_p\big)&\le n\ent(\mathbf{Y}_\calB|U)+\lambda_n^{(3)},\label{eq:Typicality_third_F}
\end{align}where $\lambda_n^{(3)}\triangleq n\epsilon_2\max\limits_{\calB\in\bbB}\ent(\mathbf{Y}_\calB|U)+1+nP_{e,1}P_{e,2}\log\max\limits_{\calB\in\bbB}\abs{\boldsymbol{\calY}_\calB}$. 
Finally, substituting \eqref{eq:Typicality_Jith_4} and \eqref{eq:Typicality_third_F} in \eqref{eq:All_Terms} results to,
\begin{align}
    &\frac{1}{n}\ent\big(X^n|M_p,M'_p,\mathbf{Y}_\calB^n\big)\nonumber\\
   & \ge \ent\big(X,\mathbf{Y}_\calB\big) -\big(\MI(U;X)+\epsilon_2\MI(U;X)\big)-\nonumber\\
&\quad\frac{\lambda_n^{(1)}}{n}-\frac{\lambda_n^{(2)}}{n}-\ent(\mathbf{Y}_\calB|U)-\frac{\lambda_n^{(3)}}{n}\nonumber\\
    &\mathop=\limits^{(a)} \ent\big(X|\mathbf{Y}_\calB,U\big) -\epsilon_2\MI(U;X)-\frac{\lambda_n^{(1)}}{n}-\frac{\lambda_n^{(2)}}{n}-\frac{\lambda_n^{(3)}}{n}\nonumber\\
    &\mathop=\limits^{(b)} \ent\big(X|\mathbf{Y}_\calB,U\big) -\eta_n,\nonumber
\end{align}where
\begin{itemize}
    \item[$(a)$] follows since $U-X-\mathbf{Y}_\calB$ forms a Markov chain;
    \item[$(b)$] holds by defining,
    \begin{align}
        \eta_n&\triangleq\epsilon_2\MI(U;X)-\log\left(1-\delta_{\epsilon_1,\epsilon_2}^{(n)}\right)+\frac{-P_{e,2}\log P_{e,2}}{n}+\nonumber\\
&\quad\frac{1}{n(1-t_n)}+\epsilon_2\max\limits_{\calB\in\bbB}\ent(\mathbf{Y}_\calB|U)+\nonumber\\
&\quad\frac{1}{n}+P_{e,1}P_{e,2}\log\max\limits_{\calB\in\bbB}\abs{\boldsymbol{\calY}_\calB}.\nonumber
    \end{align}
\end{itemize}

\end{appendices}

\bibliographystyle{IEEEtran}
\bibliography{IEEEabrv,bibfile}

\end{document}